\documentclass{article}

\usepackage[english]{babel}

\usepackage[letterpaper,top=2cm,bottom=2cm,left=3cm,right=3cm,marginparwidth=1.75cm]{geometry}

\usepackage{enumitem}
\setlist[itemize,enumerate]{noitemsep, topsep=0pt}

\date{}

\usepackage{mathtools}
\usepackage{amssymb}
\usepackage{amsthm}
\usepackage{amsmath}
\usepackage{thmtools}
\usepackage{mathtools}
\usepackage{xspace}
\usepackage[dvipsnames]{xcolor}
\usepackage[colorlinks=true, allcolors=blue]{hyperref}
\usepackage[capitalise,noabbrev]{cleveref}
\usepackage{physics}
\usepackage{dirtytalk}
\usepackage{dsfont}
\usepackage{tcolorbox}
\usepackage{thmtools}
\usepackage{thm-restate}
\tcbuselibrary{skins, breakable}

\usepackage{mdframed}
\newenvironment{mathinlay}[1][]%
  {%
  \begin{mdframed}%
  [leftmargin=.5em,rightmargin=0em,#1,bottomline=false,topline=false,rightline=false,linewidth=2pt,linecolor=black!20,innerleftmargin=.5em,innertopmargin=0pt,innerrightmargin=0pt]\begingroup\vspace*{-2pt}}%
  {\endgroup
  \end{mdframed}
  }

\newtheorem{theorem}{Theorem}[section]

\newtheorem{lemma}[theorem]{Lemma}
\newtheorem{definition}[theorem]{Definition}

\newtheorem{result}[theorem]{Result}
\newtheorem{remark}[theorem]{Remark}
\newtheorem{problem}[theorem]{Problem}

\newtheorem{question}[theorem]{Question}
\newtheorem{claim}[theorem]{Claim}
\crefname{claim}{Claim}{Claims}

\newcommand{\classname}[1]{\ensuremath{\mathsf{#1}}\xspace}
\newcommand{\QMA}{\classname{QMA}}
\newcommand{\MA}{\classname{MA}}
\newcommand{\QCMA}{\classname{QCMA}}
\newcommand{\QMAo}{\classname{QMA_1}}
\newcommand{\BQP}{\classname{BQP}}
\newcommand{\coRP}{\classname{coRP}}
\newcommand{\ZQP}{\classname{ZQP}}
\newcommand{\RQP}{\classname{RQP}}
\newcommand{\coRQP}{\classname{coRQP}}
\newcommand{\RP}{\classname{RP}}
\newcommand{\ZQPqp}{\ensuremath{\mathsf{ZQP}_{\!\!/\mathsf{qpoly}}}\xspace}
\newcommand{\BQPqp}{\ensuremath{\mathsf{BQP}_{\!\!/\mathsf{qpoly}}}\xspace}
\newcommand{\BQPp}{\ensuremath{\mathsf{BQP}_{\!\!/\mathsf{poly}}}\xspace}
\newcommand{\NP}{\classname{NP}}

\newcommand{\PreciseQCMA}{\classname{PreciseQCMA}}
\newcommand{\PreciseQMA}{\classname{PreciseQMA}}
\newcommand{\Pitwo}{\classname{\Pi_2^P}}
\newcommand{\PSPACE}{\classname{PSPACE}}

\newcommand{\tsc}[1][]{\mathcal{T}^{(#1)}}
\newcommand{\pc}{\ensuremath{\mathrm{PC}}}
\DeclarePairedDelimiter{\ceil}{\lceil}{\rceil}
\DeclarePairedDelimiter{\floor}{\lfloor}{\rfloor}
\newcommand{\poly}{\ensuremath{\mathrm{poly}}}

\newcommand{\supp}{\ensuremath{\mathrm{supp}}}
\DeclareMathOperator{\E}{\mathbb{E}}
\newcommand{\CSWAP}{\mathrm{CSWAP}}
\newcommand{\Hprop}{H_{\mathrm{prop}}}
\newcommand{\Hout}{H_{\mathrm{out}}}

\newcommand{\Hin}{H_{\mathrm{in}}}
\newcommand{\CC}{\mathbb{C}}
\newcommand{\NN}{\mathbb{N}}

\newcommand{\calC}{\mathcal{C}}
\newcommand{\calA}{\mathcal{A}}
\newcommand{\calT}{\mathcal{T}}
\newcommand{\calH}{\mathcal{H}}
\newcommand{\calG}{\mathcal{G}}

\newcommand{\FF}{\mathbb{F}}
\newcommand{\hw}{\mathrm{hw}}

\newcommand{\Dec}{\mathrm{Dec}}

\newcommand{\rc}{\mathrm{rc}}

\newcommand{\Had}{\mathrm{H}}
\newcommand{\Toff}{\mathrm{Toffoli}}
\newcommand{\Xg}{\mathrm{X}}
\newcommand{\CNOT}{\mathrm{CNOT}}

\newcommand{\CISS}{\ensuremath{\mathsf{CISS}}\xspace}
\newcommand{\Good}{\ensuremath{\mathsf{Good}}\xspace}
\newcommand{\Bias}{\ensuremath{\mathsf{Bias}}\xspace}
\newcommand{\BiasYZ}{\ensuremath{\mathsf{BiasYZ}}\xspace}
\newcommand{\adv}{\mathsf{adv}}
\newcommand{\Mult}{\mathsf{Mult}}
\newcommand{\Guesser}{\ensuremath{\mathsf{Guesser}}\xspace}

\definecolor{algbg}{HTML}{F7F7F7}
\definecolor{algframe}{HTML}{CCCCCC}
\definecolor{alglabel}{HTML}{555555}
\definecolor{inputlabel}{HTML}{888888}

\newcounter{algoctr}
\crefname{algoctr}{Algorithm}{Algorithms}
\Crefname{algoctr}{Algorithm}{Algorithms}
 
\makeatletter
\newcommand{\alglabel}[1]{%
  \setcounter{algoctr}{\value{theorem}}%
  \protected@edef\@currentlabel{\thetheorem}%
  \def\cref@currentlabel{[algoctr][\arabic{theorem}][\thesection]\thetheorem}%
  \label{#1}%
}
\makeatother
 
\newtcolorbox{algobox}[2][]{%          #1 = tcolorbox keys, #2 = title text
  enhanced,
  breakable,
  colback    = algbg,
  colframe   = algframe,
  boxrule    = 0.3pt,
  arc        = 1.5pt,
  left       = 10pt,
  right      = 10pt,
  top        = 2pt,
  bottom     = 10pt,
  fontupper  = \small,
  title      = {\refstepcounter{theorem}%
                \small\textsc{Algorithm \thetheorem}\;\;#2},
  coltitle   = alglabel,
  colbacktitle = algbg,
  toptitle   = 6pt,
  bottomtitle = 6pt,
  #1
}
 
\newcommand{\Input}[1]{%
  \par\noindent
  \textcolor{inputlabel}{\textsc{Input:}}\;\, #1
  \par\vspace{2pt}
  {\color{algframe}\hrule height 0.3pt}
  \vspace{6pt}
}

\title{En Route to a Standard \QMAo vs.\ \QCMA Oracle Separation}
\author{David Miloschewsky\thanks{Department of Computer Science, Stony Brook University, USA.  \\ Email: \texttt{\{dmiloschewsk, supartha\}@cs.stonybrook.edu}.} \and Supartha Podder\footnotemark[1] \and  Dorian Rudolph\thanks{Department of Computer Science and Institute for Photonic Quantum Systems (PhoQS),
Paderborn University, Germany. Email: \texttt{dorian.rudolph@upb.de}.}}

\begin{document}
\maketitle

\begin{abstract}
    We study the power of quantum witnesses under perfect completeness. We construct a classical oracle relative to which a language lies in \QMAo but not in \QCMA when the \QCMA verifier is only allowed polynomially many adaptive rounds and exponentially many parallel queries per round. Additionally, we derandomize the permutation-oracle separation of Fefferman and Kimmel, obtaining an in-place oracle separation between \QMAo and \QCMA. Furthermore, we focus on \QCMA and \QMA with an exponentially small gap, where we show a separation assuming the gap is fixed, but not when it may be arbitrarily small. Finally, we derive consequences for approximate ground-state preparation from sparse Hamiltonian oracle access, including a bounded adaptivity frustration-free variant.
\end{abstract}

\section{Introduction}

The complexity class \QMA (Quantum Merlin-Arthur, a quantum analogue of $\NP$) was formally introduced by Kitaev~\cite{kitaev1999quantum} as the class of languages for which a polynomial-time quantum verifier can be convinced to accept yes-instances with high probability using a polynomial-size quantum witness, while rejecting no-instances regardless of the witness supplied. A natural variant of \QMA, introduced by Aharonov and Naveh~\cite{aharonov2002quantum}, is \QCMA, where the verifier is quantum but the witness is restricted to be classical.

It is immediate that $\QCMA \subseteq \QMA$. A central question in quantum complexity theory is whether this inclusion is strict, i.e., whether quantum witnesses provide additional computational power over classical ones in the non-interactive setting. Despite significant effort, the status of this question in the unrelativized world remains open. Early intuition by Aharonov and Naveh~\cite{aharonov2002quantum} suggested that quantum witnesses might not be inherently more powerful, for instance, if ground states of local Hamiltonians were to admit efficient classical descriptions, then $\QMA = \QCMA$.

Considerable progress has been made in understanding this question in standard and relativized worlds. 
Watrous~\cite{watrous2000succinct} showed that the Group Non-Membership problem separates \QMA from $\MA$ and was one of the first candidates for understanding the power of classical and quantum witnesses~\cite{aaronson2007quantum, LGNT25}. Aaronson and Kuperberg~\cite{aaronson2007quantum} gave the first separation between \QMA and \QCMA, albeit relative to a quantum oracle. Fefferman and Kimmel~\cite{FK18} later obtained separations using randomized in-place permutation oracles, and several subsequent works proposed alternative oracle constructions and candidate problems aimed at clarifying the source of quantum advantage~\cite{lutomirski2011component,laracuente2021quantum,myrisiotis2016quantum}. More recent efforts pursued separations in increasingly standard oracle models~\cite{li2024classical, natarajan2024distribution, BK24, zhandry2025toward, liu2025qma}. In parallel, a line of work has investigated this question through the lens of uncloneability~\cite{nehoran2024computational, cakan2025public, broadbent2025role, CKP25}, structural properties of \QMA and its variants~\cite{grilo2015qma,dziemba2017robustness,nagaj2021pinned}, as well as in communication complexity~\cite{raz2004power,klauck2014two,klauck2017complexity}. The longstanding question of whether \QMA and \QCMA can be separated relative to a standard classical oracle was ultimately resolved by~\cite{BHNZ26,BHV26}. 

Thus a classical counterpart of the result of Aaronson and Kuperberg~\cite{aaronson2007quantum} was finally found. However, as noted by~\cite{BHNZ26} (and also not achieved by~\cite{BHV26}), neither of these classical oracle separations achieve perfect completeness. In contrast, \cite{aaronson2007quantum} actually separates the class \QMAo from \QCMA with respect to a quantum oracle.

\paragraph*{\QMA with perfect completeness.}\QMAo is the variant of \QMA in which the verifier must accept every YES instance with probability \emph{exactly} $1$ (while still accepting NO instances with probability at most $1/2$). The class was formalized in~\cite{KSV02} and gained prominence through the \QMAo-completeness of Quantum $k$-SAT~\cite{Bravyi2006,GossetNagaj2013}. Whether $\QMA = \QMAo$ remains open; Aaronson~\cite{Aaronson2008} gave a quantum oracle separation, and the answer is known to depend delicately on the verifier's gate set. Amplification and completeness techniques for one-sided-error \QMA have been developed in~\cite{NWZ2009,KLN2015}. In contrast, the classical-witness analogue satisfies $\QCMA = \QCMA_1$ under standard gate sets~\cite{JKNN2011}. This leaves open the following natural question.

\begin{question}\label{question1}
     Is there a \emph{standard classical} oracle separation between  \QMAo and \QCMA?
\end{question}

\medskip

\noindent As our first result, we give a bounded adaptivity analog of \Cref{question1}.

\begin{result}[\Cref{thm:bounded_adaptivity}]\label{res:bounded_adaptivity}
    For every polynomial $R$, there exists a classical oracle relative to which
    $\QMAo \not\subseteq \QCMA$, assuming the \QCMA verifier is restricted to $R$ rounds of adaptive queries. Additionally, the language is in \coRP with $R+1$ adaptive one-query rounds.
\end{result}

By bounded adaptivity, we mean that the oracle calls are grouped into rounds. Each round of queries is performed in parallel and may only depend on the results of previous queries. Therefore, the restriction is on the depth of the oracle dependence, not on the total number of queries. In \cref{res:bounded_adaptivity}, the \QCMA lower bound applies even when each round makes exponentially many queries. The regime described above is the same as in~\cite{BK24}, who separated \QCMA from \QMA (but not \QMAo) with $o(\log n /\log \log n)$ adaptive rounds. In contrast, \cref{thm:bounded_adaptivity} separates \QMAo from \QCMA bounded to $R$ rounds, where $R$ is any fixed polynomial. On the other hand, the bounded round restriction is necessary in our problem as a \coRP algorithm solves the problem with $R+1$ rounds, while the problem in~\cite{BK24} was used by~\cite{BHV26} for the classical oracle separation.

\paragraph*{In-place oracle separation.} As discussed before, in the permutation oracle model, Fefferman and Kimmel~\cite{FK18} separated \QMA from \QCMA using \emph{randomized} in-place permutation oracles. Despite the extensive line of work on the \QMA versus \QCMA question that followed~\cite{li2024classical, natarajan2024distribution, BK24, zhandry2025toward, liu2025qma, BHNZ26, BHV26}, derandomizing the~\cite{FK18} construction has remained elusive, and none of these works achieve perfect completeness in the in-place model. Furthermore, the in-place oracle model has been studied in other contexts as well~\cite{Aar02, KKVB02, BFM23, HRY25}. This motivates the following natural question.

\begin{question}\label{question2}
Can randomization be removed from the in-place oracle separation of Fefferman and Kimmel~\cite{FK18}, and can the separation be strengthened to \QMAo versus \QCMA? 
\end{question} 

\medskip

\noindent We answer both parts affirmatively.

\begin{result}[\Cref{thm:inplace-qma-qcma}]\label{res:inplace-qma-qcma}
There exists a deterministic in-place permutation oracle relative to which $\QMAo \not\subseteq \QCMA$, even when the \QCMA verifier is allowed $2^{o(n)}$ rounds of adaptivity and exponentially many queries per round.
\end{result}

Viewed as a query problem, \cref{res:inplace-qma-qcma} is a set-preimage sampling lower bound. Formally defined in \cref{prob:perm_pointer_chasing}, the problem is to estimate the parity bias of the preimage set $S_b = \pi^{-1}([N])$ where $\pi$ is the given permutation. The \QMAo witness is simply $\ket{S_b}$. Without a witness, one may sample $S_b$ by drawing some $j\in [N]$ and applying $\pi$ for $b-1$ steps. This differs from the standard permutation inversion problem, where given some $y$, one must find $\pi^{-1}(y)$, which has query complexity $\Theta(\sqrt{N})$ both in the standard \textup{XOR} and in-place oracle models~\cite{Gro96,BBBV97,FK18,HRY25}. Thus \cref{res:inplace-qma-qcma} shows that even when given a \emph{classical witness}, the simple $b-1$ round sampling algorithm is optimal.

\paragraph*{Witness verification with an exponentially small gap.}A natural question in the study of verification classes with error is how they behave under various completeness-soundness gaps. For \QMA and \QCMA, standard error-reduction~\cite{MW05} shows that any inverse-polynomial gap can be amplified to a constant, meaning the specific choice of parameters is irrelevant. However, when the gap is allowed to shrink to $2^{-\poly(n)}$, no error-reduction techniques are known. The resulting classes \PreciseQMA and \PreciseQCMA have been characterized as \PSPACE~\cite{FL18} and $\NP^{\mathsf{PP}}$~\cite{MN17, GSSSY22}, respectively.

\begin{question}\label{que:precise_gap}
    Is there a classical oracle separating \QMA from \QCMA when the completeness-soundness gap is exponentially small?
\end{question}

\noindent We partially answer this question. Specifically, by tweaking the original proof technique of~\cite{BHV26}, we show that when the gap is fixed to $O(2^{-n})$, the classes may be separated. However, \PreciseQCMA, which considers the union over all $2^{-\poly(n)}$ gaps, solves the problem.

\begin{result}[\Cref{thm:precise_separation}]\label{res:precise_sep}
There exists a classical oracle relative to which a problem lies in $\QMA(2^{-n})$ and $\PreciseQCMA$, but not in $\QCMA(2^{-n})$.
\end{result}

The \PreciseQCMA inclusion holds due to the fact that the problem is in \Pitwo, which is in $\classname{NP}^\classname{PP}$ due to Toda's theorem~\cite{Tod91}. In particular, this provides evidence against an efficient gap amplification procedure for \PreciseQMA and \PreciseQCMA.

\medskip

\paragraph*{Circuit complexity of sparse Hamiltonian ground states.}The No Low-Energy Trivial States (NLTS) theorem of Anshu, Breuckmann, and Nirkhe~\cite{ABN23}, resolving a conjecture of Freedman and Hastings~\cite{FH14}, establishes that there exist families of local Hamiltonians whose low-energy states cannot be prepared by constant-depth quantum circuits. In an oracle setting, one can ask whether sparse Hamiltonians given via Hamiltonian query access $O_H^{\rc}$ have approximate ground states that provably require large circuits to prepare.

\begin{question}\label{question4}
    Are there families of sparse Hamiltonians, given via oracle access, whose approximate ground states provably require large circuits to prepare?
\end{question}

A natural question in a similar spirit is whether the oracle separations between \QMA and \QCMA yield circuit complexity lower bounds for ground states of sparse Hamiltonians given via Hamiltonian query access $O_H^{\rc}$. To the best of our knowledge, this connection has not been made explicitly in prior work, although it follows relatively directly from applying the circuit-to-Hamiltonian construction~\cite{KSV02} to the \QMA verifiers in~\cite{BHNZ26,BHV26}.

\begin{result}[\Cref{thm:sparse_ham_hard}]\label{res:ham}
    There exists a family of $O(1)$-sparse Hamiltonians $\{H_i\}$ on $n_i$ qubits such that any state preparable by a circuit of size $2^{o(n_i^{1/3}/\log n_i)}$ which has access to $H$ via the oracle $O_{H_i}^{\rc}$ has energy at least $\lambda_0(H_i) + 1/\poly(n_i)$.
\end{result}

Furthermore, using \cref{res:bounded_adaptivity}, we are able to extend this to frustration-free Hamiltonians, except the circuit is restricted in adaptivity (see \cref{thm:bounded_adaptivity_frustration}). Note that extending this without the restriction would resolve \cref{question1}.

\paragraph*{Techniques.}

The problem used to obtain \cref{res:inplace-qma-qcma} is a pointer-chasing permutation oracle instance. The permutation consists of disjoint chains through buckets $I_1,\dots I_b$ with the first bucket being $[N]$ and the last bucket mapping to $[N]$. The problem is to decide whether parity of the items in $I_b$ which map to $[N]$ is $0$ or $1$, given the promise that it is either $0$ in all items, or only over half of them (for a formal definition, see \cref{prob:perm_pointer_chasing}). In a YES instance, the \QMAo witness is the uniform state on $S_b=\pi^{-1}([N])$, which is used to test the parity of $S_b$ and whether it maps back to $[N]$, both of which are checked with perfect completeness.

The \QCMA lower bound is as follows. Given a fixed verifier and classical witness, we record a transcript $\tsc[r]$ of all the points of the oracle which the algorithm \say{knows} after $r$ adaptive rounds of queries. Each round, $\tsc[r]$ may only add points by either querying them with heavy weight or from the previous step in the chain. We describe a canonical permutation $\pi^\prime$ based on $\tsc[r]$ and show that the verifier's state using $\pi$ is close to that with $\pi^\prime$. Assuming $\tsc[r]$ satisfies several conditions which we call \say{good}, which occurs with high probability, each YES instance may be transformed into a NO instance with the same canonical permutation by altering the last bucket $I_b$. A union bound over witnesses and diagonalization finishes the lower bound.

\cref{res:bounded_adaptivity} uses the tools developed above. First, the \QCMA lower bound applies using the standard \textup{XOR} oracle. Second, the \QMAo algorithm is obtained by considering the algorithm which uses $R+1$ rounds of queries and parallelizing it using the Feynman-Kitaev history state. As we show, any $R$-round \QMA (and \QMAo) algorithm may be parallelized to a single-round verifier with $O(R^4)$ parallel oracle queries.

We note that the oracle separations underlying \cref{res:bounded_adaptivity,res:inplace-qma-qcma} were developed independently of the two recent classical-oracle separations between \QMA and
\QCMA~\cite{BHNZ26,BHV26}. On the other hand, the results in \cref{sec:extensions} directly use the machinery developed in~\cite{BHV26}. In \cref{res:precise_sep}, we solely modify the bias parameter. Furthermore, \cref{res:ham} is obtained by applying the circuit-to-Hamiltonian construction~\cite{KSV02} to the \QMA verifier in~\cite{BHV26} and using sparse-Hamiltonian energy estimation. In the frustration-free case, we use the \QMAo verifier for \cref{prob:perm_pointer_chasing}.

\paragraph*{Acknowledgements.}We thank Srinivasan Arunachalam, Sevag Gharibian, Srijita Kundu, Sinan Oral, and Nengkun Yu for helpful discussions.
DR acknowledges funding from the DFG Priority Programme 2514 (grant number 563388236).
This work was done in part while the authors were visiting the Simons Institute for the Theory of Computing.

\section{Preliminaries}

We assume the reader is familiar with standard quantum computing and complexity theory background. For introductions, see~\cite{NC10, AB09}.
We use the following notation. We use $[a]$ to denote the set $\{1,\dots a\}$ and $[a,b]$ to denote $\{a, a+1,\dots ,b\}$. Given a set of integers $S$, let $S^{\textup{even}}$ and $S^{\textup{odd}}$ denote the subset of $S$ which only contains its even or odd elements, respectively. For any finite non-empty set $S$, we let 
\begin{align*}
    \ket{S} = \tfrac{1}{\sqrt{\abs{S}}}\sum_{s\in S} \ket{s}.
\end{align*}
Given a matrix $H$, we let $\lambda_0(H)$ denote the smallest eigenvalue of $H$.
Let us describe the oracle models we will use. For a string $x\in\{0,1\}^n$, the standard \textup{XOR} oracle is,
\begin{align*}
    O_x\ket{i,b} \coloneq \ket{i, b\oplus x_i},
\end{align*}
where $i\in[n]$ and $b\in\{0,1\}$. We call this a \emph{classical oracle} as, when $x$ represents a function $f:\{0,1\}^{\log n}\to \{0,1\}$, $O_x$ is the associated unitary applied.
A \emph{forward in-place oracle} for a permutation $\pi$ is the oracle $O_\pi$ defined as,
\begin{align*}
    O_\pi\ket{x} = \ket{\pi(x)}.
\end{align*}
We emphasize that we do not provide access to the inverse of $O_\pi$. Lastly, we will use the Hamiltonian query model. We say a Hamiltonian $H$ is $d$-\emph{sparse} if for each row, there are at most $d$ non-zero terms, while $H = \sum_i H_i$ is \emph{frustration-free} if $H_i \succeq 0$ and $\lambda_0(H) = 0$.
Consider a $d$-sparse Hamiltonian $H\in \CC^{M\times M}$. Let $i,j \in [M]$, $k\in[d]$ and $f(i,k)$ be the $k$th column index of a non-zero term in row $i$ of $H$. The $O_H$ and $O_f$ oracles are defined as,
\begin{align*}
    O_H\ket{i,j,z} &\coloneq \ket{i,j,z\oplus H_{i,j}}\\
    O_f \ket{i,k} &\coloneq \ket{i,f(i,k)}.
\end{align*}
We call the pair $O_H$ and $O_f$ the row-column oracle gates $O_H^{\rc}$.

An algorithm $A$ with $R$ adaptive rounds of $q$ queries to an arbitrary oracle $O$ is a sequence of unitaries $A = (U_0, U_1,\dots, U_R)$ interlaced with parallel calls to $O$. On some input $\ket{\psi}$ it behaves as,
\begin{align*}
    A\ket{\psi} \coloneq U_R O^{\otimes q}U_{R-1} O^{\otimes q}\dots U_1 O^{\otimes q} U_0.
\end{align*}

\subsection{Complexity Classes}

\begin{definition}[Variants of \BQP]
    A language $L\subseteq \{0,1\}^*$ is in $\BQP(c,s)$ where $c\geq s$ if there exists a uniform polynomial-time family of quantum circuits $Q=\{Q_n\}$ such that,
    \begin{align*}
        x\in L &\implies \Pr[Q(x) = 1] \geq c\\
        x\not\in L &\implies \Pr[Q(x) = 1] \leq s.
    \end{align*}
    We define $\BQP = \BQP(\tfrac{2}{3},\tfrac{1}{3})$, $\RQP = \BQP(\tfrac{1}{2}, 0)$ and $\ZQP = \RQP \cap \coRQP$.
\end{definition}

Equivalently, \ZQP is the class of zero-error quantum algorithms which may output $1,0$ or \say{maybe}, where the probability of \say{maybe} is at most $\tfrac{1}{2}$ and the algorithm never answers incorrectly. 
Lastly, \RP is defined analogously to \RQP, with the quantum circuit replaced by a probabilistic Turing machine.

\begin{definition}[Witness classes]
    A language $L\subseteq\{0,1\}^*$ is in $\QMA(c,s)$ where $c\geq s$ if there exist a uniform polynomial-time quantum verifier $V$ and a polynomial $p$ such that for every $x\in\{0,1\}^*$,
    \begin{align*}
        x\in L &\implies \exists\ket{\psi}\in(\CC^2)^{\otimes p(\abs{x})}\colon \Pr\left[V(x,\ket{\psi})=1\right]\geq c\\
        x\notin L &\implies \forall\ket{\psi}\in(\CC^2)^{\otimes p(\abs{x})}\colon \Pr\left[V(x,\ket{\psi})=1\right]\leq s.
    \end{align*}
    $\QCMA(c,s)$ is the restriction of $\QMA(c,s)$ in which the witness is a classical string $y\in\{0,1\}^{p(\abs{x})}$.
    We define the following classes,
    
    \begin{align*}
        \QMA &= \bigcup_{c-s \geq \tfrac{1}{\poly(n)}} \QMA(c,s),\\
        \QCMA &= \bigcup_{c-s \geq \tfrac{1}{\poly(n)}} \QCMA(c,s),\\
        \QMAo &= \bigcup_{s\leq 1-\frac{1}{\poly(n)}} \QMA(1,s),\\
        \PreciseQMA &= \bigcup_{c-s \geq 2^{-\poly(n)}} \QMA(c,s),\\
        \PreciseQCMA &= \bigcup_{c-s \geq 2^{-\poly(n)}} \QCMA(c,s),
    \end{align*}
    where the parameters $c,s$ are functions of the input length.
\end{definition}

By standard completion and soundness gap amplification techniques~\cite{MW05}, $\QMA = \QMA(\tfrac{2}{3}, \tfrac{1}{3})$, $\QCMA = \QCMA(\tfrac{2}{3}, \tfrac{1}{3})$ and $\QMAo = \QMA(1, \tfrac{1}{2})$. However, the same techniques are not known to apply to \PreciseQMA and \PreciseQCMA.

\begin{definition}[Bounded adaptivity oracle classes]
    Let $O=\{O_n\}_n$ be a family of oracles and $R,q\colon\NN\to\NN$ be query bounds. A language $L\subseteq\{0,1\}^*$ is in $\QMA^O[R,q]$ if there exist a uniform quantum verifier $V$ and a polynomial $p$ such that $V$ makes at most $R(\abs{x})$ adaptive rounds of $q(\abs{x})$ parallel queries to $O$ and for every $x\in\{0,1\}^*$,
    \begin{align*}
        x\in L &\implies \exists\ket{\psi}\in(\CC^2)^{\otimes p(\abs{x})}\colon \Pr\left[V^{O}(x,\ket{\psi})=1\right]\geq\frac{2}{3},\\
        x\notin L &\implies \forall\ket{\psi}\in(\CC^2)^{\otimes p(\abs{x})}\colon \Pr\left[V^{O}(x,\ket{\psi})=1\right]\leq\frac{1}{3}.
    \end{align*}
    The classes $\QCMA^O[R,q]$ and $\QMAo^O[R,q]$ are defined analogously.
\end{definition}

\begin{definition}[Advice classes]
    A language $L\subseteq \{0,1\}^*$ is in $\BQPqp$ if there is a polynomial-time family of quantum circuits $Q=\{Q_n\}$, a polynomial $p(n)$ and a family of quantum states $\{\ket{\psi_n} \in (\CC^2)^{\otimes p(n)}\}$ such that,
    \begin{align*}
        x\in L &\implies \Pr[Q(x, \ket{\psi_{\abs{x}}}) = 1] \geq \frac{2}{3},\\
        x\notin L &\implies \Pr[Q(x, \ket{\psi_{\abs{x}}}) = 1] \leq \frac{1}{3}.
    \end{align*}
    \BQPp is defined analogously, except the advice states $\ket{\psi_n}$ are replaced by classical strings $y\in \{0,1\}^{p(n)}$. Similarly, \ZQPqp is defined as \BQPqp with the zero-error semantics of \ZQP.
\end{definition}

\subsection{Tools}

We will use a parallel-query hybrid method~\cite{BBBV97}.

\begin{restatable}[Parallel-query hybrid method]{lemma}{lemParallelHybrid}\label{lem:parallel_hybrid}
    Let $\pi, \pi^\prime$ be permutations over $[M]$, $\ket{\phi}$ an arbitrary state over $q$ query registers, $\Pi$ be projector $\Pi\coloneq \sum_{x: \pi(x)\neq \pi^\prime(x)} \ketbra{x}{x}$ and $\Pi_j$ be $\Pi$ over the query register $j\in[q]$. Furthermore, define $p_j = \bra{\phi} \Pi_j \ket{\phi}$. Then,
    \begin{align*}
        \norm{\left(O_\pi^{\otimes q} - O_{\pi^\prime}^{\otimes q}\right)\ket{\phi}} \leq 2\sqrt{\sum_{j\in[q]} p_j}.
    \end{align*}
\end{restatable}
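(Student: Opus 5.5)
We prove the bound by writing the difference of the two parallel-query unitaries as a telescoping sum over the $q$ registers. The key observation is that $O_\pi$ and $O_{\pi'}$ act identically on every basis state $\ket{x}$ with $\pi(x)=\pi'(x)$. Hence for a single register
\begin{align*}
    O_\pi - O_{\pi'} = (O_\pi - O_{\pi'})\Pi .
\end{align*}
Since $O_\pi$ and $O_{\pi'}$ are unitaries (permutation matrices), $\norm{O_\pi - O_{\pi'}}\le 2$. The goal is to obtain a bound of the form $\sum_j\norm{\Pi_j\ket{\phi}}$-like terms, but we must get the square root of the sum rather than the sum of square roots. A naive telescoping would give $2\sum_j\sqrt{p_j}$, which is too weak, so the argument has to exploit orthogonality.

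The plan is to decompose by the set of "differing" registers instead of telescoping. Write $\ket{\phi} = \sum_{T\subseteq[q]} \ket{\phi_T}$, where
\begin{align*}
    \ket{\phi_T} = \Big(\prod_{j\in T}\Pi_j\prod_{j\notin T}(I-\Pi_j)\Big)\ket{\phi}.
\end{align*}
These vectors are mutually orthogonal, since the $\Pi_j$ commute and act on different registers.

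On $\ket{\phi_\emptyset}$ the two unitaries agree, so
\begin{align*}
    (O_\pi^{\otimes q}-O_{\pi'}^{\otimes q})\ket{\phi} = (O_\pi^{\otimes q}-O_{\pi'}^{\otimes q})\ket{\phi_{\neq\emptyset}},
\end{align*}
where $\ket{\phi_{\neq\emptyset}} = \sum_{T\neq\emptyset}\ket{\phi_T} = (I-\prod_j(I-\Pi_j))\ket{\phi}$. The right-hand side has norm at most
\begin{align*}
    2\norm{\ket{\phi_{\neq\emptyset}}} = 2\sqrt{\bra{\phi}\big(I-\textstyle\prod_j(I-\Pi_j)\big)\ket{\phi}}.
\end{align*}
It remains to bound the operator by a sum. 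The operator $P = I-\prod_j(I-\Pi_j)$ is the projector onto basis strings $(x_1,\dots,x_q)$ with at least one $x_j$ in the support of $\Pi$. Its diagonal entries are at most those of $\sum_j \Pi_j$, which counts the number of such coordinates. Therefore $P\preceq\sum_j\Pi_j$ as operators, both being diagonal in the computational basis of the query registers tensored with identity on the rest. This gives $\bra{\phi}P\ket{\phi}\le\sum_j p_j$, which finishes the proof.

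The only subtle step is getting a single factor $2$ and a square root of the sum rather than the sum of square roots. This is handled exactly by passing to the projector $P$ and using the union bound $P\preceq \sum_j\Pi_j$, instead of telescoping register by register.
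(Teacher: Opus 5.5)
Your proof is correct and is essentially the paper's argument. Both reduce to the projector $I-\prod_j(I-\Pi_j)$ onto configurations in which some query register hits a disagreement point, bound the difference by $2\norm{(I-\prod_j(I-\Pi_j))\ket{\phi}}$ using $\norm{O_\pi^{\otimes q}-O_{\pi'}^{\otimes q}}\le 2$, and finish with the union bound $I-\prod_j(I-\Pi_j)\preceq\sum_j\Pi_j$; your orthogonal decomposition into the $\ket{\phi_T}$ is not needed for this, but it does no harm.
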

\begin{proof}
    Let $\Pi^\prime \coloneq I - \prod_{j=1}^q (I - \Pi_j)$. A basis state lies in $\ker(\Pi^\prime)$ if and only if all query inputs agree between $\pi$ and $\pi^\prime$. Therefore,
    \begin{align*}
        \left(O_\pi^{\otimes q} - O_{\pi^\prime}^{\otimes q}\right)\ket{\phi} = \left(O_\pi^{\otimes q} - O_{\pi^\prime}^{\otimes q}\right)(\Pi^\prime + (I - \Pi^\prime))\ket{\phi} = \left(O_\pi^{\otimes q} - O_{\pi^\prime}^{\otimes q}\right) \Pi^\prime\ket{\phi},
    \end{align*}
    as $(I - \Pi^\prime)\ket{\phi}$ lies in $\ker(\Pi^\prime)$. Therefore,
    \begin{align*}
        \norm{\left(O_\pi^{\otimes q} - O_{\pi^\prime}^{\otimes q}\right)\ket{\phi}} &\leq \norm{O_\pi^{\otimes q} - O_{\pi^\prime}^{\otimes q}}\norm{\Pi^\prime \ket{\phi}}\\
        &\leq 2 \sqrt{\bra{\phi} \Pi^\prime \ket{\phi}}\\
        &\leq 2 \sqrt{\sum_{j\in [q]} p_j},
    \end{align*}
    where the last inequality is an application of the union bound.
\end{proof}

\begin{restatable}{lemma}{lemTooHeavy}\label{lem:too_heavy}
    Let $U$ be a finite set such that $\abs{U}=B$ and let $S\subset U$ be uniformly random subset of size $N$. Letting $\mu$ be an arbitrary distribution on $U$ and $H = \{x\in U: \mu(x) \geq \epsilon/N\}$, 
    \begin{align}
        \Pr_{S \sim \textup{unif}}[\abs{S \cap H} \geq \delta N] \leq \left(\frac{eN}{\delta \epsilon B} \right)^{\delta N}.\label{eq:too-heavy}
    \end{align}
\end{restatable}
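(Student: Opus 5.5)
The plan is to bound $\abs{H}$ first and then union-bound over which elements of $H$ land in $S$. Since $\mu$ is a probability distribution on $U$ and each $x\in H$ has $\mu(x)\geq \epsilon/N$, we get
\begin{align*}
    1 \;\geq\; \sum_{x\in H}\mu(x) \;\geq\; \abs{H}\,\frac{\epsilon}{N},
\end{align*}
so $\abs{H}\leq N/\epsilon$. This is the only place the distribution $\mu$ enters; everything afterwards depends only on the size of $H$. We may assume $k\coloneq\ceil{\delta N}\geq 1$, since otherwise the event is trivial. (If $\delta N$ is not an integer, we use $k=\ceil{\delta N}$ and note that the right-hand side only gets weaker when the exponent shrinks, provided the base is at most $1$. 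When the base exceeds $1$, the bound exceeds $1$ and is vacuous.)

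The event $\abs{S\cap H}\geq k$ implies that some $k$-subset $T\subseteq H$ satisfies $T\subseteq S$. By a union bound over the at most $\binom{\abs{H}}{k}$ such subsets,
\begin{align*}
    \Pr[\abs{S\cap H}\geq k] \;\leq\; \binom{\abs{H}}{k}\Pr[T\subseteq S].
\end{align*}
For a fixed $k$-set $T$ and a uniformly random $N$-subset $S$ of a $B$-element set,
\begin{align*}
    \Pr[T\subseteq S] \;=\; \frac{\binom{B-k}{N-k}}{\binom{B}{N}} \;=\; \prod_{i=0}^{k-1}\frac{N-i}{B-i} \;\leq\; \left(\frac{N}{B}\right)^{k},
\end{align*}
where the last step uses $N\leq B$, so each factor $\frac{N-i}{B-i}$ is at most $N/B$.

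To combine the two bounds, we use $\binom{m}{k}\leq (em/k)^k$ with $m=\abs{H}\leq N/\epsilon$, which gives
\begin{align*}
    \binom{\abs{H}}{k}\left(\frac{N}{B}\right)^k \;\leq\; \left(\frac{eN}{\epsilon k}\cdot\frac{N}{B}\right)^k .
\end{align*}
Substituting $k\geq \delta N$ in the denominator gives the base $eN/(\delta\epsilon B)$. This matches \eqref{eq:too-heavy}.

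The only step needing care is the bookkeeping around integrality of $\delta N$. If needed, one can simply assume $\delta N$ is an integer, as is implicit in the statement. No step is a real obstacle.
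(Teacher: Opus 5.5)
Your proof is correct and follows essentially the same route as the paper. Both bound $\abs{H}\le N/\epsilon$, overcount via $t=\lceil\delta N\rceil$-subsets of $H$ contained in $S$, and combine $\binom{B-t}{N-t}/\binom{B}{N}\le (N/B)^t$ with $\binom{h}{t}\le (eh/t)^t$. Your explicit case split on whether the base $eN/(\delta\epsilon B)$ is at most $1$ (otherwise the bound is vacuous) is slightly more careful than the paper, whose last inequality passes from exponent $t$ to $\delta N$ without stating this.
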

\begin{proof}
    Set $t = \lceil \delta N\rceil$.
  We have $h\coloneq|H| \le N/\epsilon$ as $1 \ge \mu(H) \ge |H|\cdot \epsilon/N$.
  To prove~\cref{eq:too-heavy}, we upper bound the number of $S$ satisfying the condition on $S\cap H$, which we call bad. These can be constructed by partitioning $S=S_1\sqcup S_2$ with $S_1 \in \binom{H}{t}$ and $S_2 \in \binom{U\setminus S_1}{N-t}$.
  There are at most $\binom{h}{t}$ choices for $S_1$, $\binom{B-t}{N-t}$ choices for $S_2$ and $\binom{B}{N}$ choices for $S$. Note that this is overcounting the bad instances of $S$ such that $\abs{S \cap H} > t$.
  The probability of choosing a bad $S$ is bounded by
  \begin{align*}
    \Pr_{S}\bigl[|S\cap H| \ge t\bigr] &\le \binom{h}{t}\frac{\binom{B-t}{N-t}}{\binom{B}{N}} \le \left(\frac{eh}{t}\right)^t \left(\frac{N}{B}\right)^t \le \left(\frac{eN}{\epsilon\delta N}\right)^t\left(\frac{N}{B}\right)^t \le \left(\frac{eN}{\delta\epsilon B}\right)^{\delta N},
  \end{align*}
  using the standard bound $\binom{h}{t} \le (\tfrac{eh}{t})^t$ and 
  \begin{align*}
    \frac{\binom{B-t}{N-t}}{\binom{B}{N}} &= \frac{(B-t)!}{(N-t)!(B-N)!}\cdot \frac{N!(B-N)!}{B!} = \frac{N!}{(N-t)!}\cdot \frac{(B-t)!}{B!}\\
    &= \frac{(N)_t}{(B_t)}
    = \prod_{i=0}^{t-1} \frac{N-i}{B-i} \le \left(\frac{N}{B}\right)^t,
  \end{align*}
  where $(N-i)/(B-i) \le N/B$ because $(N-i)B\le N(B-i)\Longleftrightarrow B\ge N$.
\end{proof}

\begin{restatable}[Controlling oracles]{lemma}{lemControlOracle}\label{lem:control-oracle}
  Let $U$ be an $n$-qubit unitary.
  We can implement $U' = \ketbra0\otimes I + \ketbra1\otimes U$ with one application of $U$, given access to an ancilla $\ket{\mu}$ with $U\ket{\mu}=\ket{\mu}$.
  The ancilla is not disturbed.
\end{restatable}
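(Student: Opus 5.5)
The idea is the standard controlled-SWAP trick: route the target register either through $U$ or, when the control is off, swap in the fixed ancilla $\ket{\mu}$ and let $U$ act on that instead. Since $\ket{\mu}$ is invariant under $U$, this route leaves no trace.

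Concretely, let $c$ be the control qubit, $A$ the $n$-qubit target register, and $M$ the $n$-qubit ancilla register holding $\ket{\mu}$. The circuit has three steps:
\begin{enumerate}
\item Apply $X$ to $c$, then a controlled-SWAP of $A$ and $M$ conditioned on $c$, then $X$ again. In words, swap $A$ and $M$ exactly when the control is $\ket{0}$.
\item Apply $U$ once to register $M$.
\item Undo the conditional swap from the first step.
\end{enumerate}
The controlled-SWAP only needs $n$ Fredkin gates, which use no oracle. So $U$ is applied exactly once, and uncontrolled.

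The correctness check is by linearity, so it suffices to consider inputs $\ket{b}_c\ket{\phi}_A\ket{\mu}_M$ with $b\in\{0,1\}$ and arbitrary $\ket{\phi}$. Extending to entangled inputs across $c$, $A$ and the rest of the workspace then follows because the circuit acts as a fixed linear map on $c\otimes A$ tensored with the identity elsewhere, whenever $M$ is initialized to $\ket{\mu}$.

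\textbf{Case $b=1$.} No swap happens, and $U$ acts on $M$, giving $\ket{1}\ket{\phi}\,U\ket{\mu}=\ket{1}\ket{\phi}\ket{\mu}$. This is the wrong outcome, so the roles must be exchanged: swap when the control is $\ket{1}$, so that $U$ acts on $\phi$, and skip the swap when the control is $\ket{0}$. With that correction, the intended circuit is: controlled-SWAP of $A$ and $M$ conditioned on $c=0$, then $U$ applied to register $A$, then the same conditional swap again. Redoing both cases under this circuit:
\begin{itemize}
\item \textbf{$b=1$:} nothing is swapped, and $U$ acts on $A$, giving $\ket{1}\,U\ket{\phi}\ket{\mu}$.
\item \textbf{$b=0$:} the registers are swapped, so $A$ holds $\mu$ and $M$ holds $\phi$. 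Applying $U$ to $A$ gives $U\ket{\mu}=\ket{\mu}$. Swapping back restores $\ket{0}\ket{\phi}\ket{\mu}$.
\end{itemize}
Thus the circuit implements $U'\otimes I_M$, and the ancilla returns exactly to $\ket{\mu}$ in both branches, hence also in superposition.

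There is no real obstacle. The only point requiring care is that the ancilla is returned in the same state in both branches, so that it does not become entangled with the control. This holds precisely because $U\ket{\mu}=\ket{\mu}$ exactly, not merely up to a phase. If $U\ket{\mu}=e^{i\theta}\ket{\mu}$ instead, the construction would produce $\ketbra{0}\otimes e^{i\theta}I+\ketbra{1}\otimes U$, a relative phase on the $\ket{0}$ branch, which the hypothesis rules out.
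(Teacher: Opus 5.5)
Your corrected circuit is right, and it is the paper's controlled-SWAP trick up to relabeling. The paper swaps target and ancilla when the control is $\ket{1}$ and applies $U$ to the ancilla register; you swap when the control is $\ket{0}$ and apply $U$ to the target register, and in both cases $U\ket{\mu}=\ket{\mu}$ is what makes the idle branch trivial and leaves the ancilla undisturbed. In a final write-up, delete your first version of the circuit (swap on $\ket{0}$, then $U$ on $M$), since it actually implements the anti-controlled gate $\ketbra{0}\otimes U + \ketbra{1}\otimes I$.
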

\begin{proof}
  Denote the registers of $U'$ by $\calC$ (control) and $\calT$ (target), and let $\ket{\mu}$ be given in ancilla register $\calA$.
  Let $V = \CSWAP_{\calC,\calT,\calA} \cdot U_\calA\cdot \CSWAP_{\calC,\calT,\calA}$.
  Then $V$ acts as follows on input state $\ket{\psi}\in \CC^{2^n}$,
  \begin{align*}
    \ket{0}_\calC\ket{\psi}_{\calT}\ket{\mu}_\calA  &\xmapsto{\CSWAP}  \ket{0}\ket{\psi}\ket{\mu} \xmapsto{U_{\calA}}  \ket{0}\ket{\psi}\ket{\mu} \\ &\xmapsto{\CSWAP}  \ket{0}\ket{\psi}\ket{\mu}= (U'\ket{0}\ket{\psi})\ket{\mu}\\
    \ket{1}_\calC\ket{\psi}_{\calT}\ket{\mu}_\calA &\xmapsto{{\CSWAP}} \ket{1}\ket{\mu}\ket{\psi} \xmapsto{{U_{\calA}}} \ket{1}\ket{\mu}(U\ket{\psi})\\ &\xmapsto{{\CSWAP}} \ket{1}(U\ket{\psi})\ket{\mu}\; = (U'\ket{1}\ket{\psi})\ket{\mu}.
  \end{align*}

  Thus, for $\ket{\phi}\in\CC^{2^{n+1}}$, we have $V(\ket{\phi}\ket{\mu}) = (U'\ket{\phi})\ket{\mu}$.
\end{proof}

\begin{remark}\label{rem:permutation}
  If $U$ is an $n$-qubit permutation, then we can use $\ket{\mu}=\ket{+}^{\otimes n}$ in \cref{lem:control-oracle} as $U\ket{\mu}=\ket{\mu}$.
\end{remark}

Notice that both the in-place and \textup{XOR} oracles are permutations.

Combining the quantum phase estimation algorithm~\cite{Kit95} with algorithms simulating $H$ via $O_H^{\rc}$~\cite{BCK15}, we get the following result.

\begin{restatable}{theorem}{thmQpeEnergy}\label{thm:qpe_energy}
    Let $H \succeq 0$ be an $O(1)$-sparse Hamiltonian on $n$ qubits whose entries are specified up to $O(\log n)$ bits, and let $\ket{\psi}$ be a state. Then, for every $\epsilon,\delta > 0$, there exists a quantum algorithm which uses $K= O(\tfrac{d^2}{\epsilon^2} \norm{H}_{\textup{max}}^2 \log \tfrac{1}{\delta})$ copies of $\ket{\psi}$, makes $\tilde{O}(\tfrac{d^3}{\epsilon^3} \norm{H}_{\textup{max}}^3 \log \tfrac{1}{\delta} )$ queries to $O_H^\rc$ and outputs an estimate $E$ such that,
    \begin{align*}
        \Pr[\abs{E - \bra{\psi} H \ket{\psi}} \leq \epsilon] \geq 1-\delta.
    \end{align*}
\end{restatable}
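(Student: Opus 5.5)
The plan is to estimate $\bra{\psi}H\ket{\psi}$ by running phase estimation on a Hamiltonian simulation of $e^{-iHt}$, once per copy of $\ket{\psi}$. Each run yields a single sample $\lambda$, distributed (up to the phase estimation error) as the spectral measure of $\ket{\psi}$. Averaging these samples gives an unbiased estimator of $\bra{\psi}H\ket{\psi}$, up to a controlled bias. A median-of-means step then converts constant success probability into $1-\delta$.

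The steps are as follows.

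\textbf{Step 1: bound the spectrum.} For a $d$-sparse $H$ we have $\norm{H}\le d\norm{H}_{\max}$, and $H\succeq 0$ places the spectrum in $[0,\Lambda]$ with $\Lambda=d\norm{H}_{\max}$. Rescale so that $H/\Lambda$ has eigenphases in $[0,1)$ when we simulate $e^{-2\pi i H t}$ with $t=1/(2\Lambda)$. This choice avoids wraparound.

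\textbf{Step 2: simulate and run phase estimation.} Use the sparse Hamiltonian simulation of~\cite{BCK15}, which implements $e^{-iH\tau}$ to operator error $\eta$ with $\tilde O(d\norm{H}_{\max}\tau+\log(1/\eta))$ queries to $O_H^\rc$. The controlled versions needed by phase estimation~\cite{Kit95} come from simulating controlled evolutions directly, or from \cref{lem:control-oracle}.

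Run phase estimation with $m$ bits of precision, giving resolution $\epsilon'=\Theta(\epsilon)$ in energy. This needs total evolution time $O(1/\epsilon)$, hence $\tilde O(d\norm{H}_{\max}/\epsilon)$ queries per run.

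\textbf{Step 3: control the bias.} Standard phase estimation has heavy-tailed error, so its mean error is not automatically $O(\epsilon)$. I would fix this by reporting the estimate rounded and clipped to $[0,\Lambda]$, and by using a tail-suppressed variant. Either take the median of $O(\log(\Lambda/\epsilon))$ independent phase estimation runs on the collapsed eigenstate, or use Kaiser-window phase estimation. Together these ensure $\E[\hat\lambda]$ lies within $\epsilon/2$ of $\bra{\psi}H\ket{\psi}$.

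The eigenstate is not destroyed by phase estimation, so the repeated runs reuse the same copy at the cost of extra queries only. This repetition is where the extra $1/\epsilon$-type overhead in the query count can be absorbed. The simulation error $\eta$ also perturbs the output distribution by $O(\eta)$ in total variation, so I take $\eta=\epsilon/\Lambda$.

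\textbf{Step 4: average and boost.} The samples lie in $[0,\Lambda]$, so their variance is at most $\Lambda^2$. Chebyshev's inequality shows that $O(\Lambda^2/\epsilon^2)$ samples estimate the mean to within $\epsilon/2$ with constant probability. Median-of-means over $O(\log(1/\delta))$ groups then gives failure probability $\delta$.

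The total copy count is $K=O(d^2\norm{H}_{\max}^2\epsilon^{-2}\log(1/\delta))$. Multiplying by the per-copy query cost $\tilde O(d\norm{H}_{\max}/\epsilon)$ gives the claimed $\tilde O(d^3\norm{H}_{\max}^3\epsilon^{-3}\log(1/\delta))$ queries. The assumption that entries are specified to $O(\log n)$ bits is only needed so that the simulation algorithm of~\cite{BCK15} applies with polylogarithmic arithmetic overhead.

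\textbf{Main obstacle.} The delicate point is Step 3: turning phase estimation's high-probability precision guarantee into a bias guarantee on the mean of the estimates. The first approach I would try is clipping to $[0,\Lambda]$ together with a per-copy median. Clipping alone fails because, with probability $O(1/k)$, standard phase estimation errs by about $k$ grid steps, and that tail would contribute a $\log$-sized bias.
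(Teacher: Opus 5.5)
Your proposal is correct and follows the same route as the paper: phase estimation on $e^{2\pi i H/(2\Lambda)}$ for each copy, a per-eigenvalue bias bound of the form (resolution) $+\,\Lambda\cdot$(failure probability) with both set to $\Theta(\epsilon/\Lambda)$, and then averaging $K=O(\Lambda^2\epsilon^{-2}\log(1/\delta))$ bounded samples. The paper applies Hoeffding directly, which makes your median-of-means step unnecessary, and it simply asserts that phase estimation with failure probability $\epsilon/(4\Lambda)$ costs total evolution time $O(1/\epsilon)$ (textbook phase estimation would need order $\Lambda/\epsilon^2$), so your Step 3 boosting via a median of $O(\log(\Lambda/\epsilon))$ runs, or Kaiser-window phase estimation, is what actually justifies the claimed $\tilde{O}(d\norm{H}_{\textup{max}}/\epsilon)$ per-copy query cost.
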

\begin{proof}
    Let $\Lambda = \norm{H}$ and write $\ket{\psi} = \sum_j \alpha_j \ket{\lambda_j}$ in terms of the eigenbasis of $H$. We run QPE on the unitary $U\coloneq e^{2\pi i H / (2\Lambda)}$ with accuracy $\eta = \tfrac{\epsilon}{4\Lambda}$ and failure probability $p = \tfrac{\epsilon}{4\Lambda}$. Let $\tilde{\phi}$ be the phase estimate and $\tilde{\lambda} \coloneq 2\Lambda \tilde{\phi}$. Conditioned on some $\lambda_j$, we have that,
    \begin{align*}
        \Pr [\abs{\tilde{\lambda} - \lambda_j} \leq \frac{\epsilon}{2}] \geq 1 - p.
    \end{align*}
    Therefore, by taking the expectation,
    \begin{align*}
        \E[\abs{\tilde{\lambda} - \lambda_j}| j] \leq \frac{\epsilon}{2} (1-p) + \Lambda \cdot p \leq \frac{3\epsilon}{4},
    \end{align*}
    where we split the equation based on whether $\abs{\tilde{\lambda} - \lambda_j} \leq \tfrac{\epsilon}{2}$ or not. By averaging over all $j$,
    \begin{align*}
        \abs{\E[\tilde{\lambda}] - \bra{\psi} H \ket{\psi}} &= \abs{\sum_j \abs{\alpha_j}^2 \E[\tilde{\lambda}|j]} - \sum_j \abs{\alpha_j}^2 \lambda_j\\
        &\leq \abs{\sum_j \abs{\alpha_j}^2 \abs{\E[\tilde{\lambda}|j] - \lambda_j}}\\
        &\leq \sum_j \abs{\alpha_j}^2 \cdot \frac{3\epsilon}{4} \leq \frac{3\epsilon}{4}.
    \end{align*}
    We will run the QPE procedure $K= O(\Lambda^2\log(1/\delta)/\epsilon^2)$ times and let the output $E$ be the average over all $\tilde{\lambda}$. By Hoeffding's inequality, we have $\Pr[\abs{E - \E[\tilde{\lambda}]} > \tfrac{\epsilon}{4}] \leq \delta$. Therefore, with probability $1-\delta$,
    \begin{align*}
        \abs{E - \bra{\psi} H \ket{\psi}} \leq \abs{E - \E[\tilde{\lambda}]} + \abs{\E[\tilde{\lambda}] - \bra{\psi} H \ket{\psi}} \leq \epsilon
    \end{align*}

    Let us calculate the query complexity. QPE with accuracy $\eta$ uses $O(\log(\Lambda/\epsilon))$ control qubits and for $t\in[0,m-1]$, the unitaries $U^{2^t}$. This requires simulating $H$ for time $\tau_t = \tfrac{\pi 2^t}{\Lambda}$. By~\cite{BCK15}, this requires $\tilde{O}(\norm{H}_{\textup{max}}\cdot d \cdot \tau_t)$ queries to $O_H^\rc$. As $\sum_{t\in[0,m-1]}2^t = 2^m - 1 = O(\tfrac{\Lambda}{\epsilon})$, the procedure for each copy $\ket{\psi}$ requires $\tilde{O}(\tfrac{d}{\epsilon} \norm{H}_{\textup{max}})$ queries. As $\Lambda \leq d \cdot \norm{H}_{\textup{max}}$, for $K$ parallel copies, we have that the total number of queries is $\tilde{O}(\tfrac{d^3}{\epsilon^3} \norm{H}_{\textup{max}}^3 \log \tfrac{1}{\delta} )$.
\end{proof}

\begin{lemma}[{Implicit in \cite[Lemma 6.1]{Rud25}}]\label{lem:kernel-tester}
    Let $H \succeq 0$ be an $O(1)$-sparse Hamiltonian on $n$ qubits with integer entries bounded by $O(1)$.
    There is an algorithm using $\poly(n)$ $\calG$ gates and $O(1)$ queries to $O_H^\rc$. On input $\ket{\psi}\in \CC^{2^n}$, the algorithm rejects with probability $q$, where $q\in[\alpha E, \beta E]$ for constants $\alpha,\beta > 0$ and $E = \ev{H}{\psi}$.
\end{lemma}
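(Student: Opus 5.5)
Since $H$ is $d$-sparse with $d=O(1)$ and has integer entries bounded by a constant $c=O(1)$, we have $\norm{H}\le dc=O(1)$. Set $\Lambda\coloneq dc$ and consider $A\coloneq H/\Lambda$, so that $0\preceq A\preceq I$. The plan is to build a block encoding of $A$ from $O(1)$ queries to $O_H^\rc$, and then use the standard "measure the block-encoding flag" test. The rejection probability of that test will be exactly $\ev{A}{\psi}$, up to a constant factor.

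\textbf{Step 1: block encoding of $\sqrt{H}$-type structure.} I would use the standard sparse-access construction (Childs' quantum walk / the construction of \cite{BCK15}). Define isometries $T_1,T_2$ that map
\[
\ket{i}\mapsto \ket{i}\tfrac{1}{\sqrt d}\sum_{k\in[d]}\ket{f(i,k)}\Bigl(\sqrt{\tfrac{H_{i,f(i,k)}^*}{c}}\ket0+\sqrt{1-\tfrac{\abs{H_{i,f(i,k)}}}{c}}\ket1\Bigr),
\]
together with the analogous map with the roles of the two index registers swapped. Each of $T_1,T_2$ uses $O(1)$ queries to $O_f$ and $O_H$, plus uncomputation. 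Because the entries are $O(\log n)$-bit integers bounded by a constant, the rotation angles can be computed exactly or approximated with $\poly(n)$ gates from $\calG$. Then $T_1^\dagger S T_2$, where $S$ swaps the two index registers, is a block encoding of $H/(dc)$.

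\textbf{Step 2: use positivity to get a projective test.} Since $H\succeq0$, write $H=\sum_a K_a^\dagger K_a$ in the form obtained from the walk. The simpler route I would try first is the Hadamard test. Apply a controlled version of the block-encoding unitary $W$, with $\bra{0}W\ket{0}=A$ on the flag. The controlled version is obtained via \cref{lem:control-oracle} and \cref{rem:permutation}, since the oracles are permutations. Measuring the control in the $\ket{+}$ basis and the flag gives acceptance probability $\tfrac12+\tfrac12\mathrm{Re}\ev{A}{\psi}$ on the relevant branch.

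To make the rejection probability proportional to $E$ rather than to $\tfrac12-\tfrac12 E$, I would instead use $B\coloneq I-A$, which also satisfies $0\preceq B\preceq I$. Take a block encoding $W'$ of $(I+ (I-2A))/2 = I - A$. This is a linear combination of unitaries (LCU) of $I$ and the reflection-type block encoding, built with one extra ancilla. Run $W'$ on $\ket{\psi}\ket{0}$, measure the flag, and additionally project onto the "good" block. This gives acceptance probability $\norm{(I-A)\ket\psi}^2$ plus cross terms. Because $0\preceq A\preceq I$, we have $1-\norm{(I-A)\psi}^2=\ev{2A-A^2}{\psi}\in[E/\Lambda,2E/\Lambda]$, using $A^2\preceq A$. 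So the rejection probability is $q\in[E/\Lambda,2E/\Lambda]$, which gives constants $\alpha=1/\Lambda$ and $\beta=2/\Lambda$.

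\textbf{Main obstacle.} The hardest part is making the block encoding exact, so that it has no $1/\poly$ error that would spoil a bound of the form $q\ge\alpha E$ when $E$ is tiny; this matters for the frustration-free application. I would handle it by exploiting integrality: the entries come from a finite constant-size set, so the rotations $\sqrt{H_{ij}/c}$ range over finitely many fixed angles. If $\calG$ cannot synthesize these exactly, I would rescale using a decomposition $H_{ij}=\sum$ of $\pm1$ unit entries. That is, I would block encode $H$ as a sum of $O(1)$ signed sparse permutation-like matrices, each implementable exactly with Hadamard-type LCU coefficients. This keeps the whole circuit exact with $O(1)$ queries.
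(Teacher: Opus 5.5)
The paper does not prove this lemma itself; it defers to \cite[Lemma 6.1]{Rud25}. Your proposal has a genuine gap in Step 2.

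\textbf{The missing block encoding.} Everything rests on a unitary $W'$ whose flag block is \emph{exactly} $I-A$. You build it as an LCU of $I$ and a subnormalization-one block encoding of $I-2A$, but Step 1 does not supply the latter: it gives a subnormalization-one encoding $W$ of $A$ itself. From $W$, an LCU yields $(I-A)/2$ (weights $1,1$) or $(I-2A)/3$ (weights $1,2$). In either case the block has norm strictly below $1$ on $\ker H$, so the flag measurement rejects zero-energy states with constant probability, and $q\le\beta E$ fails at $E=0$.

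Moreover, a unitary whose block is exactly $I-2A$ (for instance $I-2P$ for a projector $P$ with $\Pi P\Pi=A$) already proves the lemma on its own. Measuring its flag rejects with probability $4\ev{A-A^2}{\psi}$, which is $\Theta(E)$ once $A\preceq I/2$. So Step 2 presupposes what is to be shown. Granting that premise, your estimate $\ev{2A-A^2}{\psi}\in[E/\Lambda,2E/\Lambda]$ is correct, and there are no cross terms.

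\textbf{Why manipulating $W$ cannot fix this.} This is not something better LCU coefficients or exact gate synthesis can repair; your ``Main obstacle'' paragraph only addresses exactness of Step 1. In the sparse-access encoding, $\ker H$ sits at the \emph{interior} eigenvalue $\lambda=0$ of the block. In each two-dimensional Jordan block of a (qubitized) encoding, everything depends smoothly on $\lambda$ there. Consider a test that uses $W$ only as a black box (LCU, Hadamard test, QSVT, amplitude amplification) and accepts with certainty at $\lambda=0$. Its acceptance probability is at most $1$ on all of $[-1,1]$, so its derivative vanishes at $0$ and it rejects with probability only $O(\lambda^2)$. If it does not accept with certainty, it carries a constant offset. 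Concretely:
\begin{itemize}
\item the Hadamard test accepts with probability $(1\pm\lambda)/2$;
\item measuring the flag after $W$ rejects with probability $\ev{A^2}{\psi}$;
\item one round of amplitude amplification on $(I-A)/2$ gives block $1-\tfrac32\lambda^2+\tfrac12\lambda^3$.
\end{itemize}
None of these satisfies $q\in[\alpha E,\beta E]$ when $H$ has small nonzero eigenvalues.

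\textbf{What is actually needed.} You need a genuinely linear, ``square-root'' ingredient: an isometry $V$ built from $O(1)$ queries whose pulled-back rejection operator $V^\dagger(\Pi_{\mathrm{rej}}\otimes I)V$ lies between $\alpha H$ and $\beta H$. One way is a locally computable decomposition $\gamma H=\sum_a K_a^\dagger K_a$. You write such a decomposition in passing (``in the form obtained from the walk''), but the walk does not produce one. It gives $A=T_1^\dagger S T_2$, a compression of the swap. In the symmetric case this is $T^\dagger\tfrac{I+S}{2}T-T^\dagger\tfrac{I-S}{2}T$, a \emph{difference} of positive operators rather than a sum. Supplying this step is where $H\succeq 0$ and integrality must really be used.
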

Notably, the algorithm of \cref{lem:kernel-tester} has perfect completeness and the rejection probability is proportional to the energy of the input state.
Thus, the rejection probability serves as a constant-factor multiplicative estimate of $E$.

\section{Results}

Let us formally list the results of the paper.

\subsection{In-place oracle separation via pointer-chasing}

Let us begin by fixing multiple variables we will use throughout the section. Fix some constants $l\geq 4$ and $\alpha\in (0,\tfrac{1}{2})$. For any $n\in \NN$, let $N\coloneq 2^n$, $M\coloneq 2^{ln}$, $b=\ceil{2^{\alpha n}}$ and $B\coloneq 2\floor{\tfrac{M - N}{2(b-1)}}$.\footnote{Although all the variables are functions in terms of $n$, we omit writing the input $n$.} Lastly, for $i\in [b]$, define the buckets $I_i$ as,
\begin{align*}
    I_i = \begin{cases}
        [N] &\text{if $i = 1$}\\
        [N + (i-2)B + 1, N+ (i-1)B] &\text{if $i\geq 2$.}
    \end{cases}
\end{align*}

Therefore, for $i\geq 2, \abs{I_i}= B$. Lastly, $I_{\textup{junk}} = [M]\setminus \bigcup_i I_i$. Let us define the problem we will analyze.

\begin{problem}[Permutation pointer-chasing]\label{prob:perm_pointer_chasing}
    For $i\in [b]$, let $S_i \subseteq [M]$ be a set such that $\abs{S_i}=N$ and $S_i \subseteq I_i$, and $f_i: [N]\to S_i$ be a bijection. Let us fix the permutation $\pi: [M]\to [M]$, defined as,
    \begin{align}
        \pi(x) = \begin{cases}
            f_{i+1}(j) &\text{if $x = f_i(j)$ for $j\in [N], i\in [b-1]$}\\
            j &\text{if $x = f_b(j)$ for $j\in [N]$}\\
            x &\text{if $x\notin \bigcup_i S_i$}.
            \label{eq:permutation_def}
        \end{cases}
    \end{align}
    The problem is to decide whether (YES) $\abs{S_b^{\textup{even}}} = N$ or (NO) $\abs{S_b^{\textup{even}}} \leq \tfrac{N}{2}$.
\end{problem}

Let us show that the problem is in \QMAo.

\begin{restatable}[\QMAo algorithm]{lemma}{lemQMAoAlgorithm}
    For any in-place oracle $\Pi=\{\pi_n\}_n$ where $\pi_n$ are instances of \cref{prob:perm_pointer_chasing}, the language $L\coloneq \{1^n: \pi_n \text{ is a YES instance}\}$ is in $\QMAo^{\Pi}[1,1]$.
\end{restatable}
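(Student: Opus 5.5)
The verifier receives a witness $\ket{\psi}$ on $ln$ qubits. In a YES instance the honest witness is $\ket{S_b}$. The verifier makes one query to $O_\pi$ and runs two tests with perfect completeness. In the YES case, $\pi$ maps $S_b$ bijectively onto $[N]$, so $O_\pi\ket{S_b}=\ket{[N]}$. Every element of $S_b$ is even, so $\ket{S_b}$ is supported on even numbers.

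With probability $1/2$ each, the verifier does one of the following.
\begin{itemize}
\item \textbf{Test A:} measure the parity of the witness register and accept iff it is even.
\item \textbf{Test B:} apply $O_\pi$ once and project onto $\ket{[N]}$. This is implemented by applying the inverse of the efficiently preparable state $\ket{[N]}=H^{\otimes n}\ket{0}$ padded with zeros, then checking for all zeros. Accept iff the projection succeeds.
\end{itemize}
This uses a single round with a single query, matching $\QMAo^{\Pi}[1,1]$. In the YES case both tests accept with probability exactly $1$.

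For soundness in the NO case, let $P_{\textup{even}}$ be the projector onto even basis states. Then
\[
\Pr[\text{accept}]=\tfrac12\bigl(\|P_{\textup{even}}\psi\|^2+|\langle [N]|O_\pi|\psi\rangle|^2\bigr).
\]
Since $O_\pi$ is a permutation, $O_\pi^\dagger\ket{[N]}=\ket{\pi^{-1}([N])}=\ket{S_b}$. The second term is therefore $|\langle S_b|\psi\rangle|^2$. The acceptance probability is $\tfrac12\bra{\psi}(P_{\textup{even}}+\ketbra{S_b}{S_b})\ket{\psi}$, so it is at most $\tfrac12\lambda_{\max}(P+Q)$ for the two projectors $P=P_{\textup{even}}$ and $Q=\ketbra{S_b}{S_b}$.

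The only real computation is the maximum eigenvalue of a projector plus a rank-one projector. For a projector $P$ and a unit vector $v$, $\lambda_{\max}(P+\ketbra{v}{v})=1+\|Pv\|$. To see this, restrict to the span of $Pv$ and $v$ and compute the $2\times 2$ eigenvalues, or use the Jordan-angle formula $1+\cos\theta$.

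In the NO case $|S_b^{\textup{even}}|\le N/2$, so $\|P_{\textup{even}}\ket{S_b}\|^2\le 1/2$. The acceptance probability is thus at most
\[
\tfrac12\bigl(1+1/\sqrt2\bigr)\approx 0.854 .
\]
This is a constant strictly below $1$, so $L\in\QMA(1,s)\subseteq\QMAo$. Standard amplification can reduce $s$ to $1/2$ if desired, though amplification uses more queries and would break the $[1,1]$ count. The constant $s<1$ already meets the $\QMAo$ definition.

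The main obstacle I anticipate is the query accounting rather than soundness. Test B uses one query and Test A uses none, so the random choice of test can be made coherently or classically without extra queries. If the definition insists that every run makes exactly one query, Test A can still make a dummy query on an ancilla. \cref{lem:control-oracle} with \cref{rem:permutation} can be used to control $O_\pi$ so that it acts as the identity in the Test A branch.
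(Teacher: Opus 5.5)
Your proof is correct, and your verifier is essentially the paper's. The honest witness is $\ket{S_b}$, one test applies $O_\pi$ once and projects onto $\ket{[N]}$, and the other checks parity. Your soundness bound $\tfrac{1+\sqrt{s}}{2}\le\tfrac12(1+1/\sqrt2)$, with $s=\abs{S_b^{\textup{even}}}/N$, is exactly the one the paper obtains.

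There are two differences worth noting. First, the paper's parity subroutine measures the witness, rejects odd outcomes, and then also queries $O_{\pi_n}$ to check that the outcome maps into $[N]$. Your parity test makes no query, and your analysis shows the extra check is redundant. Only $\norm{P_{\textup{even}}\ket{S_b}}$ matters, and overlap with $\ket{S_b}$ is already enforced by the projection test. Second, the paper bounds soundness by splitting $\sum_{o\in S_b}\alpha_o$ into even and odd parts, applying Cauchy--Schwarz, and maximizing the result over $p_{(b)}$. You instead write the acceptance probability as $\tfrac12\bra{\psi}(P_{\textup{even}}+\ketbra{S_b}{S_b})\ket{\psi}$ and use $\lambda_{\max}(P+\ketbra{v}{v})=1+\norm{Pv}$ via a $2\times 2$ reduction. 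Your spectral route gives the bound in one step and makes its tightness, and the optimal cheating witness, transparent. The paper's route is more elementary but needs the one-variable optimization.

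You are also more careful than the paper about the query count. Amplifying to soundness $1/2$, by parallel repetition or by Marriott--Watrous, would use more than one query. So the $[1,1]$ claim really rests on a constant soundness $s<1$ being admissible in the definition of $\QMAo$, as you point out.
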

\begin{proof}
    The proof follows the proof of Theorem 4 in~\cite{FK18}, except we succeed with perfect completeness. For the sake of self-containment, we write the proof below. The verifier algorithm $V$ is as follows.
    \begin{algobox}{\QMAo verifier $V$}
    \Input{Oracle access to $O_{\pi_n}$ and $ln$-bit state $\ket{\psi}$.}
        Each with probability $\tfrac{1}{2}$, run one of the following subroutines,
        \begin{enumerate}[label=(\alph*)]
            \item Apply the oracle $O_{\pi_n}$ to $\ket{\psi}$, followed by $I^{\otimes (l-1)n} \Had^{\otimes n}$. Accept if and only if the measurement output is $0^{ln}$.
            \item Measure $\ket{\psi}$ in the standard basis, obtaining output $o$. If $o\in [M]^{\textup{odd}}$, reject. Otherwise, apply $O_{\pi_n}$ to $o$ and accept if and only if $O_{\pi_n}\ket{o} \in [N]$.
        \end{enumerate}
    \end{algobox}

    Let us argue for completeness. In the YES case, the verifier may receive $\ket{\psi}= \pi_n^{-1}([N])$. In this case, subroutine (a) accepts with probability $1$ as $O_{\pi_n}\ket{\psi} = \ket{[N]}$ and $(I^{\otimes (l-1)n}\otimes \Had^{\otimes n})\ket{0^n} = \ket{[N]}$. Similarly, subroutine (b) also accepts with probability $1$.

    Next, we show soundness. Consider a NO instance and assume  $\ket{\psi} = \sum_{o\in [M]} \alpha_o \ket{o}$ is the witness. Letting $p_{(a)}, p_{(b)}$ be the respective acceptance probabilities of each subroutine,
    \begin{align}
        p_{(a)} &= \abs{\bra{[N]} O_{\pi_n} \ket{\psi}}^2 = \frac{1}{N} \abs{\sum_{o \in S_b} \alpha_o}^2\label{eq:p_a}\\
        p_{(b)} &= \sum_{o\in S_b^{\textup{even}}} \abs{\alpha_o}^2.\label{eq:p_b}
    \end{align}
    Letting $s \coloneq \tfrac{\abs{S_b^{\textup{even}}}}{N} \leq \tfrac{1}{2}$, by the triangle inequality and Cauchy-Schwarz,
    \begin{align*}
        \abs{\sum_{o\in S_b} \alpha_o} &\leq \abs{\sum_{o\in S_b^{\textup{even}}} \alpha_o} + \abs{\sum_{o\in S_b^{\textup{odd}}} \alpha_o}\\
        &\leq \sqrt{s p_{(b)}} + \sqrt{(1-s)(1 - p_{(b)})}.
    \end{align*}
    Using \cref{eq:p_a},
    \begin{align*}
        p_{(a)} \leq \left(\sqrt{s p_{(b)}} + \sqrt{(1-s)(1-p_{(b)})} \right)^2.
    \end{align*}
    Therefore, the verifiers acceptance probability is,
    \begin{align*}
        \frac{p_{(a)} + p_{(b)}}{2} &\leq \frac{1}{2} \left( \left(\sqrt{sp_{(b)}} + \sqrt{(1-s)(1-p_{(b)})} \right)^2 + p_{(b)} \right)\\
        &=\frac{1-s}{2} + sp_{(b)} + \sqrt{sp_{(b)}(1-s)(1-p_{(b)})} \leq \frac{1 + \sqrt{s}}{2} < 0.9,
    \end{align*}
    where the second inequality in the last line is due to the fact that the equation is maximized at $p_{(b)} = \frac{1 + \sqrt{s}}{2}$. Thus, the completeness and soundness gap is at least $0.1$, which can be amplified by standard techniques~\cite{MW05}.
\end{proof}

Next, we show a \QCMA lower bound on the problem.

\begin{restatable}[\QCMA lower bound]{lemma}{lemQCMAInplaceLowerbound}\label{lem:qcma_inplace_lowerbound}
    Fix $l\geq 4$, $\alpha\in (0,\tfrac{1}{2})$, and $\beta\geq 0$ such that,
    \begin{align}
        \beta + 5\alpha < l - 2.\label{eq:beta_alpha_condition}
    \end{align}
    There exists an in-place oracle $\Pi = \{\pi_n \}_n$ such that the language $L\coloneq \{1^n: \pi_n \text{ is a YES instance}\}$ is not in $\QCMA^{\Pi}[b(n)-2, q(n)]$ where $q(n) \leq 2^{\beta n}$.
\end{restatable}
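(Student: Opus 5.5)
Our plan is the standard diagonalization template: fix a verifier and witness, prove a query lower bound for that pair on random instances, take a union bound over all classical witnesses, and diagonalize against all uniform verifiers. The technical heart is a \emph{transcript} argument showing that a verifier with at most $b-2$ adaptive rounds cannot distinguish a random YES instance from a related NO instance.

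\textbf{Setup.} Fix a verifier $V$ with $R\le b-2$ rounds of $q\le 2^{\beta n}$ parallel queries, and a classical witness $w$ of polynomial length. Sample a YES instance uniformly:
\begin{itemize}
\item random sets $S_i\subseteq I_i$ for $2\le i\le b$, with $S_b\subseteq I_b^{\textup{even}}$;
\item random bijections $f_i$.
\end{itemize}
We maintain a transcript $\tsc[r]\subseteq[M]$ of the points whose $\pi$-values $V$ ``knows'' after $r$ rounds. Start with $\tsc[0]=\emptyset$. In round $r$, let $\mu_r$ be the total query mass of the state $\ket{\phi_r}$ (computed with respect to a canonical permutation $\pi'_{r}$) just before the $r$th oracle layer, normalized over the $q$ registers. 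We add to $\tsc[r]$:
\begin{itemize}
\item all points $x$ with $\mu_r(x)\ge\epsilon/N$, together with the chain successors $\pi(x)$ of already-known points;
\item the image under $\pi$ of every point added in the previous round, reflecting that one more round allows one more step along each chain.
\end{itemize}
The canonical permutation $\pi'_r$ agrees with $\pi$ on $\tsc[r]$ and is the identity (or a fixed canonical completion) on the unknown parts of the unlearned buckets.

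Under \cref{eq:permutation_def}, a point of $S_i$ can enter $\tsc$ only by being queried heavily or by being reached by following the chain from $[N]$. After $r$ rounds, chain-following therefore reaches only buckets $I_1,\dots,I_{r+1}$, which never includes $I_b$ when $r\le b-2$.

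\textbf{Closeness to the canonical run.} We call the transcript \emph{good} if, for every $i\ge2$ and every round, $|S_i\cap H_r|\le\delta N$, where $H_r$ is the set of $\epsilon/N$-heavy points. Since $S_i$ is uniform in $I_i$ of size $B\approx M/(2b)$, \cref{lem:too_heavy} bounds the failure probability per bucket and round by $(eNb/(\delta\epsilon M))^{\delta N}$. This quantity is doubly exponentially small once $\epsilon,\delta$ are $2^{-\Theta(n)}$ and $l$ is large, so it survives the union bound over the roughly $b^2$ bucket–round pairs and the $2^{\poly(n)}$ witnesses.

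On good transcripts we apply \cref{lem:parallel_hybrid} round by round. The queried mass on points where $\pi$ and $\pi'_r$ differ, namely the light points of the $S_i$ outside $\tsc$, is at most
\[ q\cdot N\cdot\epsilon/N\cdot(\text{number of buckets}), \]
so the per-round error is $O(\sqrt{q\,b\,\epsilon})$. Summing over $b$ rounds gives total error $O(b\sqrt{qb\epsilon})$. Choosing $\epsilon=2^{-(\beta+3\alpha+\gamma)n}$ makes this small. This choice constrains $\delta$ and $l$ via the heavy-set bound, and we expect condition \cref{eq:beta_alpha_condition}, $\beta+5\alpha<l-2$, to be exactly what makes both requirements compatible.

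\textbf{YES-to-NO switch.} On a good transcript, the points of $S_b$ contained in $\tsc$ are only the at most $\delta N$ heavy ones, since chain-following never reaches bucket $b$. We form a NO instance $\pi^{\textsc{no}}$ by replacing at least half of the unknown points of $S_b$ with odd points of $I_b$ outside $\tsc$, keeping the chain structure and every value recorded in $\tsc$. Then $\pi^{\textsc{no}}$ has the same canonical permutation, so $V$'s acceptance probability changes by at most the hybrid error.

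If the mapping YES$\mapsto$NO is taken to be suitably measure-preserving, for instance a random re-sampling of the unknown part of $S_b$, we obtain
\[ \Pr[V^{\pi}(w)\text{ accepts}]\approx\Pr[V^{\pi^{\textsc{no}}}(w)\text{ accepts}] \]
on average. Hence no fixed witness helps on more than a tiny fraction of YES instances. A union bound over the $2^{\poly(n)}$ witnesses shows that a random YES instance has no good witness for $V$. Choosing $\pi_n$ for sparse input lengths defeats each verifier in turn.

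\textbf{Main obstacle.} The delicate step is the definition of $\tsc$ and the canonical permutation under adaptivity. The heavy set at round $r$ depends on the oracle through earlier rounds, so \cref{lem:too_heavy} cannot be applied to a fixed distribution directly. Our first attempt is to compute each round's heavy set from the canonical run, which depends only on $\tsc[r-1]$, and to condition on $\tsc[r-1]$. The remaining unknown parts of $S_i$ are then still uniformly random subsets of the unknown part of $I_i$, so the lemma applies conditionally. Making this conditional uniformity precise, in particular that the YES-to-NO switch preserves it, is where we expect most of the work to lie.
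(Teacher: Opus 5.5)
Your architecture matches the paper's: heavy points read off the canonical run, chains advancing one bucket per round, \cref{lem:parallel_hybrid} applied round by round, \cref{lem:too_heavy} conditioned on $\tsc[r-1]$, a switch inside $I_b$, a union bound over witnesses, and diagonalization. However, two steps do not go through as written.

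\textbf{The canonical oracle.} In the in-place model the canonical oracle must itself be a permutation. Your default, ``agree with $\pi$ on $\tsc[r]$ and the identity elsewhere'', is not one: if $x\in\tsc[r]$ and $y=\pi(x)\notin\tsc[r]$, then both $x$ and $y$ are sent to $y$. Your alternative, a ``fixed canonical completion'', is left unspecified. It must redirect only points outside $\tsc[r]$, because the hybrid bound needs every disagreement to lie on a light point. The paper handles this with the frontier sets $F_i^{(r)}$ (points known to lie in $S_i$ whose images are unknown) and the root function $f^{(r)}$, which maps them bijectively onto $[N]\setminus\pi(Q_b^{(r)})$. To guarantee this bijection exists, it commits the whole backward chain of every heavy hit down to $S_1$ (the path points $P_i^{(r)}$), so every frontier point ends a committed chain from $[N]$ (\cref{clm:canonical_permutation}). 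Canonically mapping the path ends $\pi(T)\setminus T$ of your known set $T$ onto the path starts $T\setminus\pi(T)$ would also work, since the two sets have equal size. But some such construction has to be given.

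\textbf{The count of known points in $S_b$.} You claim that on a good transcript the known part of $S_b$ is only at most $\delta N$ heavy points, ``since chain-following never reaches bucket $b$''. This is false under your own update rule. Chains started from $[N]$ stop at $I_{r+1}$. But a heavy hit in $I_k$ at round $t$ with $k>t$ starts a chain that advances one bucket per round and reaches $I_b$ by round $t+b-k-1\le R$. For example, a round-$1$ hit in $S_{b-1}$ immediately reveals a point of $S_b$, which the NO instance must keep. So up to about $Rb\delta N$ points of $S_b$ can be pinned. The paper tracks these through the label set $J^{(r)}$ with $|J^{(r)}|\le rb\delta N$, the good condition $|D_i^{(r)}|,|Q_i^{(r)}|\le 4rb\delta N$ on unreached buckets, and $|J_{\textup{fix}}|\le 8Rb\delta N$ in \cref{clm:no_instance}. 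This forces $\delta=O(1/b^2)$ (the paper takes $1/(64b^2)$). The witness union bound then needs $\delta N\ge 2^{(1-2\alpha)n}/256$ to dominate $w=\poly(n)$, which is exactly where $\alpha<\tfrac12$ enters. Your sketch never uses that hypothesis, which is a symptom of the miscount.

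\textbf{Two smaller points.} First, your good condition must count only heavy hits in the unrevealed part $S_i\cap U_i^{(r-1)}$. As stated, a verifier that honestly queries the already-known $S_r$ violates $|S_r\cap H_r|\le\delta N$ with certainty. Second, the obstacle you single out is not one. Conditional uniformity is needed only to bound the probability of a bad transcript. The switch is a deterministic per-instance change that preserves the transcript, so no measure preservation is required. Take a YES instance whose transcripts are good for every witness, let $v$ be the witness that $V$ accepts, and build $\pi_{\textup{NO}}$ from $\tsc[R]_v$. Soundness of $V$ on $\pi_{\textup{NO}}$ with witness $v$ then gives the contradiction.
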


\begin{proof}
    Let us fix some instance $n$. We first fix an in-place algorithm $V$ which takes in some witness $v\in \{0,1\}^{w(n)}$ and makes $R\coloneq b-2$ rounds of $q$ parallel queries per round. Let $\ket{\psi_0} = \ket{0}\ket{v}$ be the initial state and for any permutation $\sigma$ and $r\in [R]$, let $\ket{\psi_r^\sigma}$ denote the state right before the $r+1$ query,
    \begin{align*}
        \ket{\psi_r^\sigma} \coloneq U_r O_{\sigma}^{\otimes q} \dots O_{\sigma}^{\otimes q} U_0 \ket{\psi_0}.
    \end{align*}
    
    The distribution $D_n^{\textup{YES}}$ over YES instances of $\pc_n$ is defined as follows. First, set $S_1 = [N]$ and $f_1(j) = j$. For $i\in [2,b-1]$, draw $S_i$ uniformly from $\binom{I_i}{N}$ and $f_i$ be a uniformly-random bijection. Last, $S_b$ is uniformly-drawn from $\binom{I_{b}^{\textup{even}}}{N}$ and $f_b$ is uniformly-random. Given a fixed element from $D_n^{\textup{YES}}$, we may define $\pi_n$ using \cref{eq:permutation_def}.

    Next, we define the \emph{transcript} and its \emph{canonical completion}. For each round $r\in [R]$, the transcript $\tsc[r]$ tracks the known and unknown parts of each set $S_i$ of an oracle. It consists of the following parts:
    \begin{enumerate}[label=(\alph*)]
        \item \emph{Discovered} sets $D_i^{(r)} \subseteq S_i$,
        \item \emph{Commited} sets $Q_i^{(r)} \subseteq S_i$ and their associated commited values $\pi(x)$ for all $x\in Q_i^{(r)}$,
        \item \emph{Frontier} sets $F_i^{(r)} \coloneq D_i^{(r)} \setminus Q_i^{(r)}$,
        \item \emph{Unexplored} sets $U_i^{(r)} \subseteq I_i \setminus D_i^{(r)}$,
        \item \emph{Excluded} sets $E_i^{(r)} \coloneq I_i \setminus (U_i^{(r)} \cup D_i^{(r)})$.
    \end{enumerate}
    We initialize $\tsc[0]$ as follows,
    \begin{align*}
        &D_1^{(0)} = F_1^{(0)} = [N]  &U_1^{(0)}=Q_1^{(0)} = E_1^{(0)} = \emptyset,\\
        &\forall i\geq 2,\; U_i^{(0)} = I_i &D_i^{(0)}=Q_i^{(0)} = E_i^{(0)} = F_i^{(r)} = \emptyset.
    \end{align*}
    Notice that we may use $\tsc[r]$ to partition $I_i$ as follows,
    \begin{align}
        I_i = Q_i^{(r)} \sqcup F_i^{(r)} \sqcup U_i^{(r)} \sqcup E_i^{(r)}.\label{eq:partition_i}
    \end{align}
    Assuming $\abs{\bigcup_i F_i^{(r)}} = \abs{[N] \setminus \pi(Q_b^{(r)})}$, we define the root function $f^{(r)}(x): \bigcup_i F_i^{(r)} \to [N] \setminus \pi(Q_b^{(r)})$ to be the first bijection from an enumerated list of functions. The \emph{canonical completion} $\tilde{\pi}_r$ is defined as,
    \begin{align}
        \tilde{\pi}_r(x) \coloneq \begin{cases}
            \pi(x) & \text{if } x \in \bigcup_i Q_i^{(r)}, \\
            f^{(r)}(x) & \text{if } x \in \bigcup_i F_i^{(r)}, \\
            x & \text{if } x \in I_{\textup{junk}} \cup \bigcup_i E_i^{(r)} \cup \bigcup_i U_i^{(r)}.
        \end{cases}\label{eq:completion_def}
    \end{align}
    Intuitively, each point $x \in \bigcup_i F_i^{(r)}$ is at the end of a known chain from $[N]$. Using $f^{(r)}$, we ensure that $\tilde{\pi}_r(x)$ is a permutation.
    
    \begin{mathinlay}
        \begin{claim}\label{clm:canonical_permutation}
            Assume that for a $\tsc[r]$ based on permutation $\pi$ and for all $x\in \bigcup_i F_i^{(r)}$, there exists some sequence of points $x_0,\dots x_{i-1}, x$ such that for all $t\in [i-1]$, $x_t\in Q_{t}^{(r)}$. Then,
            \begin{enumerate}
                \item there exists a bijection $f^{(r)}(x): \bigcup_i F_i^{(r)} \to [N] \setminus \pi(Q_b^{(r)})$,\label{itm:bijection_existence}
                \item $\tilde{\pi}_r$ is a permutation and,\label{itm:canonical_permutation}
                \item for all $x\in [M]$, $\pi(x)=x \implies \tilde{\pi}_r(x) = x$.\label{itm:self_loops_canonical}
            \end{enumerate}
        \end{claim}
        \begin{proof}
            Based on \cref{eq:completion_def}, partition $[M] = A\sqcup B \sqcup C$ where $A \coloneq \bigcup_i Q_i^{(r)}$, $B \coloneq \bigcup_i F_i^{(r)}$, and $C \coloneq I_{\textup{junk}} \cup \bigcup_i E_i^{(r)} \cup \bigcup_i U_i^{(r)}$.

            Due to the assumption on $x\in \bigcup_i F_i^{(r)}$, we have that $x\in \bigcup_i F_i^{(r)}$ if and only if for all $z\in [b-i]$, $\pi^z(x) \in U_i^{(i+z)}$.\footnote{By $\pi^z(x)$, we mean applying the permutation on input $x$ $z$-times.} As the number of such points is $\abs{[N] \setminus \pi(Q_b^{(r)})}$, \cref{itm:bijection_existence} follows.

            Let us show \cref{itm:canonical_permutation}. As elements in $C$ are mapped to themselves in $\tilde{\pi}_r$, it suffices to show that $\tilde{\pi}_r|_{A\cup B}$ is a bijection on $A\cup B$. By definition of $A$, the only points $x\in A$ for which there does not exist a $y\in A$ such that $x = \tilde{\pi}_r(y)\in A$ are in $[N]\setminus \pi(Q_b^{(r)})$. Similarly, all points in $x\in A$ such that $\tilde{\pi}_r(x)\notin A$ are in $\bigcup_i F_i^{(r)}$. Per \cref{itm:bijection_existence} combined with the fact that $\pi$ is a permutation, $\tilde{\pi}_r|_{A\cup B}$ is a permutation.

            By \cref{eq:permutation_def}, $\pi(x)=x$ implies that $x\notin \bigcup_i S_i$, so $x\notin \bigcup_i D_i^{(r)}=A\cup B$. Therefore, $x\in C$, proving \cref{itm:self_loops_canonical}.
        \end{proof}
    \end{mathinlay}

    The canonical state $\ket{\psi_r^\prime}$ is defined to be,
    \begin{align*}
        \ket{\psi_r^\prime} = U_r O_{\tilde{\pi}_r}^{\otimes q} U_{r-1} O_{\tilde{\pi}_{r-1}}^{\otimes q} \dots O_{\tilde{\pi}_1}^{\otimes q} U_0 \ket{\psi_0},
    \end{align*}

    which depends on the transcripts $\tsc[i]$ for $i\in[r]$. 
    We say that a permutation $\sigma$ is \emph{consistent} with $\tsc[r]$ if $\tsc[r]$ is its transcript after $r$ rounds under a verifier $V$. Furthermore, we define heavy sets, the sets which have a large weight in the query registers of $\ket{\psi_r^\pi}$ (i.e. right before the oracle call $O_\pi^{\otimes q}$). Formally, let $r\in [R]$, $i \in [b]$ and $j\in [q]$ be some variables. Letting $X_j$ denote the $j$-th query register and $\Pi_S^{(j)} \coloneq \mathbf{I}^{\otimes(j-1)}\otimes\bigl(\sum_{x\in S}\ketbra{x}{x}\bigr)\otimes\mathbf{I}^{\otimes(q-j)}$ the projector onto $S\subseteq[M]$ acting on the $j$-th query register, consider the following variables,
    \begin{align*}
        p_{i,j}^{(r)} &\coloneq \bra{\psi_{r-1}^\prime}\Pi_{I_i}^{(j)}\ket{\psi_{r-1}^\prime}\\
        p_i^{(r)} &\coloneq \sum_{j\in [q]} p_{i,j}^{(r)}\\
        \mu_i^{(r)}(S) &\coloneq \frac{1}{p_{i}^{(r)}} \sum_{j\in [q]} \bra{\psi^\prime_{r-1}}\Pi_S^{(j)}\ket{\psi^\prime_{r-1}} \qquad (S\subseteq I_i).
    \end{align*}
    Note that when $p_i^{(r)} = 0$, we let $\mu_i^{(r)}(S) = 0$. The \emph{heavy} set $H_i^{(r)}$ is defined as,
    \begin{align*}
        H_i^{(r)} \coloneq \left\{ x\in I_i: \mu_i^{(r)}(\{x\}) \geq \frac{\epsilon}{N} \right\},
    \end{align*}
    where $\epsilon$ is a variable we will set later. Next, let us define how the transcript is \emph{updated} each round. For $r\geq 1$, let the undiscovered part of $S_i$ be $S_i^{(r)} \coloneq S_i \cap U_i^{(r-1)}$ and the heavy hits be $G_i^{(r)} \coloneq H_i^{(r)} \cap S_i^{(r)}$.

    For each $g\in G_i^{(r)}$, let $\mathrm{chain}(g)$ denote the following maximal sequence,
    \begin{align*}
        \mathrm{chain}(g) = g, g^{-1}, \dots ,g^{-k}: \forall m\in[k-1]: \pi^{-1}(g^{-m}) = g^{-(m+1)} \text{ and } g^{-k} \in S_1.
    \end{align*}
    Essentially, $\mathrm{chain}(g)$ represents the sequence of steps of $\pi$ which bring us from $g$ back to $S_1$. The \emph{path points} at level $i$ are elements of $\mathrm{chain}(g)$ lying in $I_i$ that have not yet been committed,
    \begin{align*}
        P_i^{(r)} \coloneq \left\{ z: \exists\, k> i, \exists\, g\in G_k^{(r)} \text{ s.t. }z\in \mathrm{chain}(g)\cap I_i  \right\}.
    \end{align*}

    We commit these path points to the transcript in order to ensure that we may define a valid root function $f^{(r)}$ and completion $\tilde{\pi}_{r}$ in \cref{eq:completion_def}.
    For $i\geq 2$ and going from round $r-1$ to $r$, the transcript is updated as follows,
    \begin{align}
        Q_i^{(r)} &\coloneq Q_i^{(r-1)} \cup D_i^{(r-1)} \cup G_i^{(r)} \cup P_i^{(r)}\label{eq:q_i_update}\\
        D_i^{(r)} &\coloneq D_i^{(r-1)} \cup G_i^{(r)} \cup P_i^{(r)} \cup \pi\!\bigl(Q_{i-1}^{(r)}\bigr)\label{eq:d_i_update}\\
        U_i^{(r)} &\coloneq U_i^{(r-1)} \setminus \bigl(H_i^{(r)} \cup D_i^{(r)}\bigr).\label{eq:u_i_update}
    \end{align}
    Note that for $i=1$ and $r>0$, the committed set is $Q_1^{(r)} = [N]$.
    One may see that the condition in \cref{clm:canonical_permutation} is satisfied for all $r$.
    In summary, the order of operations is as follows. Given some canonical state $\ket{\psi_{r-1}^\prime}$, we obtain the update probabilities. In turn, this updates the transcript $\tsc[r]$, whose commited values in $Q_i^{(r)}$ are based on the original permutation $\pi$. In turn, we obtain the canonical completion $\tilde{\pi}_r$ which leads to the state $\ket{\psi_{r}^\prime}$.

    Next, we will show that two oracles with the same transcript up to the step $r$ are difficult to distinguish.

    \begin{mathinlay}
        \begin{claim}[Transcript indistinguishability]\label{clm:transcript_indistinguishability}
            For any pair of oracles $\pi_1, \pi_2$ consistent with the same transcript history $\tsc[t]$ for $t\in[r]$,
            \begin{align*}
                \norm{\ket{\psi_r^{\pi_1}} - \ket{\psi_r^{\pi_2}}} \leq 4r\sqrt{q\epsilon}.
            \end{align*}
        \end{claim}
        \begin{proof}
            To show the statement, it suffices to show that for any permutation $\sigma$ consistent with $\tsc[r]$,
            \begin{align}
                \norm{\ket{\psi_r^{\sigma}} - \ket{\psi_r^{\prime}}} \leq 2r\sqrt{q\epsilon},\label{eq:transcript_indist_final}
            \end{align}
            as applying \cref{eq:transcript_indist_final} to $\pi_1$ and $\pi_2$ and using the triangle inequality gives the claim. We prove \cref{eq:transcript_indist_final} by induction on $t\in [0, r]$. The base case $t=0$ is immediate since the witness and verifier are fixed. For the inductive step,
            \begin{align}
                \norm{\ket{\psi_t^{\sigma}} - \ket{\psi_t^{\prime}}} 
                &= \norm{O_{\sigma}^{\otimes q} \ket{\psi_{t-1}^{\sigma}} - O_{\tilde{\pi}_t}^{\otimes q}\ket{\psi_{t-1}^{\prime}}}\nonumber\\
                &\leq \norm{\ket{\psi_{t-1}^{\sigma}} - \ket{\psi_{t-1}^{\prime}}} + \norm{(O_{\sigma}^{\otimes q} - O_{\tilde{\pi}_t}^{\otimes q}) \ket{\psi_{t-1}^{\prime}}}\nonumber\\
                &\leq 2(t-1)\sqrt{q\epsilon} + \norm{(O_{\sigma}^{\otimes q} - O_{\tilde{\pi}_t}^{\otimes q}) \ket{\psi_{t-1}^{\prime}}},\label{eq:triangle_split}
            \end{align}
            where the equality follows due to unitarity of $U_t$ and the inequality is an application of the triangle inequality and inserting $O_\sigma^{\otimes q} \ket{\psi_{t-1}^\prime}$.

            We claim that $\sigma$ and $\tilde{\pi}_t$ may disagree only on $\bigcup_{i}(S_i^{(t)}\setminus H_i^{(t)})$. This can be shown case by case. First, if $x\in I_{\textup{junk}}$, both oracles are the identity. Let us check each part of the partition of $I_i$ in \cref{eq:partition_i},
            \begin{enumerate}
                \item If $x\in Q_i^{(t)}$, then by \cref{eq:completion_def} both $\sigma$ and $\tilde{\pi}_t$ agree with $\pi(x)$.
                \item If $x\in F_i^{(t)} = D_i^{(t)}\setminus Q_i^{(t)}$, then $x\in S_i^{(t)}\setminus H_i^{(t)}$. This is due to the fact that since $D_i^{(t)}\setminus Q_i^{(t)} \subseteq \pi(Q_{i-1}^{(t)})\setminus Q_i^{(t)} \subseteq S_i\setminus D_i^{(t-1)}$, we have $x\in U_i^{(t-1)}\cap S_i = S_i^{(t)}$. Furthermore, $x\notin H_i^{(t)}$ as otherwise $x\in G_i^{(t)}\subseteq Q_i^{(t)}$, contradicting $x\in F_i^{(t)}$.
                \item If $x\in U_i^{(t)}$, then $\sigma$ and $\tilde{\pi}_t$ may only disagree on $U_i^{(t)} \cap S_i$ as all other points are self-loops. As $U_i^{(t)} \subseteq U_i^{(t-1)}$, $D_i^{(t)} \cap U_i^{(t)} = \emptyset$ and $G_i^{(t)} \subseteq D_i^{(t)}$, we have $U_i^{(t)} \cap S_i \subseteq S_i^{(t)} \setminus H_i^{(t)}$ (the $H_i^{(t)}$ elements in $U_i^{(t-1)}$ were added to $G_i^{(t)}\subseteq Q_i^{(t)}\subseteq D_i^{(t)}$, so they left $U_i^{(t)}$).
                \item If $x\in E_i^{(t)}$, there is no disagreement as $\pi(x)=x =\tilde{\pi}_r(x)$.
            \end{enumerate}
            In all cases, potential disagreement is contained in $S_i^{(t)} \setminus H_i^{(t)}$. Letting $p_j$ be the probability that the $j$-th query register holds a value in $\bigcup_{i\in [b]} (S_i^{(t)} \setminus H_i^{(t)})$ and using the fact that each $x \notin H_i^{(t)}$ has $\mu_i^{(t)}(\{x\}) < \epsilon/N$,
            \begin{align*}
                \mu_i^{(t)}(S_i^{(t)} \setminus H_i^{(t)}) \leq \abs{S_i^{(t)}} \cdot \frac{\epsilon}{N} \leq \epsilon,
            \end{align*}
            we have that,
            \begin{align*}
                \sum_{j\in [q]} p_j &= \sum_{i\in [b]} p_i^{(t)} \mu_i^{(t)}(S_i^{(t)} \setminus H_i^{(t)})\\
                &\leq \epsilon \sum_{i\in [b]} p_i^{(t)} \leq \epsilon q.
            \end{align*}
            By \cref{lem:parallel_hybrid}, the second term of \cref{eq:triangle_split} is bounded by $2\sqrt{q\epsilon}$, finishing the proof.
        \end{proof}
    \end{mathinlay}

    Let $\delta \in (0,1]$ be a constant we fix later. We say a transcript $\tsc[r]$ is \emph{good} if,
    \begin{align*}
        \forall i\in [r+2, b]: \abs*{D_i^{(r)}} \leq 4rb\delta N\\
        \forall i\in [r+1, b]: \abs*{Q_i^{(r)}} \leq 4rb\delta N.
    \end{align*}
    If $\tsc[r]$ does not satisfy the criteria above, we call it \emph{bad}.
    \begin{mathinlay}
        \begin{claim}[Probability of a bad transcript]\label{clm:bad_transcript}
            Assume $4rb\delta \leq 1$ and $\left(\tfrac{1}{\epsilon} + 4b\delta\right) rN \leq \tfrac{B}{4}$. Then for a fixed verifier $V$ and witness $v$,
            \begin{align}
                \Pr_{\pi \sim D_n^{\textup{YES}}} \left[ \tsc[r] \text{ is bad} \right] \leq rb \left( \frac{4eN}{\delta \epsilon B} \right)^{\delta N}.\label{eq:bad_transcrip_probability}
            \end{align}
        \end{claim}
        \begin{proof}
            We prove this by induction on $r \in [R]$. The base case holds as, except for $D_1^{(0)}$, $D_i^{(0)} = Q_i^{(0)} = \emptyset$. For the inductive step, assume that $\tsc[r-1]$ is good and fix some $i\in [r+1,b]$. As $i \geq r+1 \geq (r-1)+2$, goodness of $\tsc[r-1]$ and the condition $\left(\tfrac{1}{\epsilon}+4b\delta\right)(r-1)N \leq B/4$ imply $\abs{U_i^{(r-1)}} \geq \tfrac{3B}{4}$.

            If $i\leq b-1$, then $S_i^{(r)} = S_i \cap U_i^{(r-1)}$ is a uniformly-random subset of $U_i^{(r-1)}$ of size $N-\abs{D_i^{(r-1)}} \leq N$. For $i=b$, $S_b^{(r)}$ is uniform over $U_{b}^{(r-1)} \cap [M]^{\textup{even}}$; since $I_b$ has $\geq B/2$ even elements and at most $B/4$ total have been removed, $\abs{U_{b}^{(r-1)} \cap [M]^{\textup{even}}} \geq \frac{B}{4}$. In both cases, as $H_i^{(r)}$ is determined by $\tsc[r-1]$ (as the verifier and witness are fixed), $S_i^{(r)}$ is stochastically dominated by a uniformly-chosen set of size $N$ from a universe of size at least $\frac{B}{4}$. By \cref{lem:too_heavy},
            \begin{align}
                \Pr \left[ \abs{G_i^{(r)}} \geq \delta N \mid \tsc[r-1] \right] \leq \left( \frac{4eN}{\delta \epsilon B} \right)^{\delta N}.\label{eq:heavy_hit_bound}
            \end{align}

            Let $E$ denote the event that for all $t\leq r$ and $i\geq t+1$, $\abs{G_i^{(t)}}<\delta N$.
            Next, we show that if $E$ occurs, then $\tsc[r]$ is good. Let,
            \begin{align*}
                J^{(r)} \coloneq \{j\in [N]: \exists t \leq r, k > t \text{ s.t. } f_k(j)\in G_k^{(t)} \},
            \end{align*}
            denote the indices of \emph{paths} which were guessed earlier than expected. Under the assumption on $\abs{G_i^{(t)}}$, we have that $\abs{J^{(r)}} \leq rb\delta N$. We claim that every point in $D_i^{(r)}$ with $i\geq r+2$ and $Q_i^{(r)}$ with $i\geq r+1$ has a label in $J^{(r)}$. We prove this by induction on $r$. The claim is immediate for $r=0$. For the inductive step, consider $Q_i^{(r)}$ with $i\geq r+1$. By \cref{eq:q_i_update},
            \begin{align*}
                Q_i^{(r)} \subseteq Q_i^{(r-1)}\cup D_i^{(r-1)}
                \cup G_i^{(r)}\cup P_i^{(r)}.
            \end{align*}
            By the inductive hypothesis, the terms $Q_i^{(r-1)}$ and $D_i^{(r-1)}$ have labels in $J^{(r-1)} \subseteq J^{(r)}$. The terms $G_i^{(r)}$ have labels in $J^{(r)}$ as $i\geq r+1$, while a label $j$ in $P_i^{(r)}$ implies that it is in some $G_k^{(r)}$ where $k > i$. By similar reasoning, we find the same for $D_i^{(r)}$. Therefore $\abs{J^{(r)}}$ bounds the size of all $D_i^{(r)}$ and $Q_i^{(r)}$ we care about, meaning $\tsc[r]$ is good. 
            
            All that remains is bounding the probability of event $E$. Applying \cref{eq:heavy_hit_bound}, union-bounding over $b$ buckets with $i \geq r+1$ and applying the chain rule over $r$ rounds,
            \begin{align*}
                \Pr_{\pi \sim D_n^{\textup{YES}}} \left[ \tsc[r] \text{ is bad} \right] \leq rb \left( \frac{4eN}{\delta \epsilon B} \right)^{\delta N}.
            \end{align*}
        \end{proof}
    \end{mathinlay}
    Lastly, let us show that for any YES instance, we may create a NO instance whose state is extremely close to that of the YES instance.
    \begin{mathinlay}
        \begin{claim}[NO-instance transformation]\label{clm:no_instance}
            Suppose that for all $r\in[R]$, $\tsc[r]$ is good for $\pi\in \supp(D_n^{\textup{YES}})$ where $R=b-2$, $8Rb\delta \leq \tfrac{1}{4}$ and $(\tfrac{1}{\epsilon} + 4b \delta)RN\leq \tfrac{B}{4}$. Then there exists a NO instance $\pi_{\textup{NO}}$ of $\pc_n$ which is consistent with $\tsc[r]$.
        \end{claim}
        \begin{proof}
            We may entirely describe the permutation $\pi$ using $f_1,\dots f_b$. Notice that by definition, $B$ is always even. As $\pi\in \supp(D_n^{\textup{YES}})$, $f_b([N]) = S_b \subseteq I_b^{\textup{even}}$. Let $J_{\textup{fix}}$ denote the fixed points which certify that some $x\in S_b$ is even,
            \begin{align*}
                J_{\textup{fix}} \coloneq \left\{j: f_{b-1}(j)\in Q_{b-1}^{(R)} \right\} \cup \left\{ j: f_b(j)\in D_b^{(R)} \right\}.
            \end{align*}
            As $\tsc[R]$ is good, $\abs{J_{\textup{fix}}} \leq 8Rb\delta N \leq \tfrac{N}{4}$. Let $F \coloneq \{f_b(j): j\in J_{\textup{fix}}\} \subseteq I_b^{\textup{even}}$.
            Per our assumption, elements leave $U_b$ by entering $D_b$ or $H_b$. Over $R$ rounds, at most $\abs{D_b^{(R)}} + \sum_{r\in[R]}\abs{H_b^{(r)}} \leq 4Rb\delta N + RN/\epsilon \leq \left(\frac{1}{\epsilon}+4b\delta\right)RN \leq \frac{B}{4}$ elements have been removed, so $\abs{U_b^{(R)}} \geq \frac{3B}{4}$. Since $B$ is even, $I_b$ is an interval of size $B$ with exactly $B/2$ even elements and $B/2$ odd elements. At most $B/4$ elements in total have been removed, so at most $B/4$ even (resp.\ odd) elements were removed. Hence $\abs{U_b^{(R)}\cap I_b^{\textup{even}}} \geq \frac{B}{2} - \frac{B}{4} = \frac{B}{4}$ and similarly $\abs{U_b^{(R)} \cap I_b^{\textup{odd}}} \geq \frac{B}{4}$. Note also $F \cap U_b^{(R)} = \emptyset$ since $F \subseteq D_b^{(R)}$.

            Choose $O^\prime \subseteq U_b^{(R)} \cap I_b^{\textup{odd}}$ and $E^\prime \subseteq U_b^{(R)} \cap I_b^{\textup{even}}$ such that $\abs{O^\prime} = \tfrac{N}{2}$ and $\abs{E^\prime} = \tfrac{N}{2} - \abs{F}$ (feasible since both sets have size $\geq B/4 \geq N$). Then we set $S_b^{\textup{NO}} \coloneq F \cup E^\prime \cup O^\prime$, giving $\abs{S_b^{\textup{NO}}} = N$ and $\abs{(S_b^{\textup{NO}})^{\textup{odd}}} = N/2$.

            Define $f_b^{\textup{NO}}: [N]\to S_b^{\textup{NO}}$ by $f_b^{\textup{NO}}(j) = f_b(j)$ for all $j\in J_{\textup{fix}}$ and bijectively assigning the remaining indices to $S_b^{\textup{NO}}\setminus F$ (feasible since $\abs{S_b^{\textup{NO}}\setminus F}=N-\abs{J_{\textup{fix}}}$). We define $\pi_{\textup{NO}}$ using $f_1,\dots,f_{b-1}$ unchanged and replacing $f_b$ with $f_b^{\textup{NO}}$.

            We claim that $\pi_{\textup{NO}}$ generates the identical transcript to $\pi$ through every round $r\in[R]$. Since $f_1,\dots,f_{b-1}$ are unchanged, the heavy-hit sets $G_i^{(r)}$ and path points $P_i^{(r)}$ for buckets $i\leq b-1$ are identical for $\pi$ and $\pi_{\textup{NO}}$ at every round. For bucket $b$: every point of $U_b^{(R)}$ avoided $D_b^{(r)}$ and $H_b^{(r)}$ for all $r\leq R$ (otherwise it would have left $U_b$ before round $R$). Since $E^\prime, O^\prime\subseteq U_b^{(R)}$, replacing elements of $S_b$ inside $U_b^{(R)}$ by other elements of $U_b^{(R)}$ cannot change any $G_b^{(r)}$ or $D_b^{(r)}$ for $r\leq R$. Moreover, for $j\in J_{\textup{fix}}$, we have $f_b^{\textup{NO}}(j) = f_b(j)$, preserving all committed values. It follows inductively on $r$ that the entire transcript $\tsc[r]$ is identical for $\pi$ and $\pi_{\textup{NO}}$ for every $r\in[R]$.
        \end{proof}
    \end{mathinlay}
    With the claims above, we may finally set the parameters $\epsilon$ and $\delta$,
    \begin{align*}
        \epsilon\coloneq\frac{1}{10^4 q b^2}, \;\; \delta \coloneq \frac{1}{64b^2}.
    \end{align*}
    We may see that with these assignments, the conditions of \cref{clm:bad_transcript,clm:no_instance} are satisfied. The first condition holds as,
    \begin{align*}
        8Rb\delta \leq 8b^2 \cdot \frac{1}{64b^2} \leq \frac{1}{4}.
    \end{align*}
    The second condition is due to the following
    \begin{align*}
        \left(\frac{1}{\epsilon} + 4b\delta \right)RN \leq 2 \cdot 10^4 qb^3 N,
    \end{align*}
    which means we are showing that $2 \cdot 10^4 qb^3 N \leq \tfrac{B}{4}$. For large $n$, $B \geq \tfrac{M-N}{b-1} \geq \tfrac{M}{2b}$, so it suffices to show $16\cdot 10^4\, qb^4 N \leq M$. Substituting $q\leq 2^{\beta n}$, $b\leq 2^{\alpha n+1}$, and $N=2^n$,
    \begin{align*}
        16\cdot 10^4 qb^4 N \leq 256\cdot 10^4 \cdot 2^{(\beta + 4\alpha +1)n}.
    \end{align*}
    As $M=2^{ln}$, this holds whenever $256\cdot 10^4\cdot 2^{(\beta+4\alpha+1)n} \leq 2^{ln}$, meaning for all $n$ such that,
    \begin{align}
        n \geq n_1 \coloneq \left\lceil \frac{\log_2(256\cdot 10^4)}{l - 1 - \beta - 4\alpha} \right\rceil,\label{eq:n1_threshold}
    \end{align}
    which is finite since $l-1-\beta-4\alpha > 0$ by \cref{eq:beta_alpha_condition}. Therefore, for all $n\geq n_1$, the conditions of the claims are satisfied. Furthermore, these variables mean that $4R\sqrt{q\epsilon}$ is always bounded by $\tfrac{1}{25}$. Lastly, we bound the bad transcript probability in \cref{eq:bad_transcrip_probability}. The base is bounded by,
    \begin{align*}
        \frac{4eN}{\delta \epsilon B} &\leq 512e \cdot 10^4 \cdot \frac{qb^5 N}{M}\\
        &\leq 512 \cdot 32 \cdot e \cdot 10^4 \cdot 2^{(\beta + 5\alpha +1 - l)n} \leq C \cdot 2^{-\lambda n},
    \end{align*}
    where $C = 16384\cdot e \cdot 10^4$ is a constant and $\lambda = l - 1-\beta -5\alpha$. By \cref{eq:beta_alpha_condition}, $\lambda > 1$. Therefore for $n \geq n_2 = \lceil(\log_2 C + 1)/\lambda\rceil$, the base is bounded by $\tfrac{1}{2}$. Therefore,
    \begin{align}
        \Pr_{\pi \sim D_n^{\textup{YES}}} \left[ \tsc[r] \text{ is bad} \right] \leq rb \cdot 2^{-\delta N}.\label{eq:bad_transcript_bound}
    \end{align}

    Let us show argue for the \QCMA hardness. Fix any \QCMA verifier $V$ with $R=b-2$ rounds of $q$ queries and a witness of size $w=\poly(n)$. By taking the union bound over \cref{eq:bad_transcript_bound}, we have that,
    \begin{align*}
        \Pr_{\pi \sim D_n^{\textup{YES}}}[\exists v\in \{0,1\}^w: \tsc[r]_v \text{ is bad}] \leq 2^w \cdot Rb\cdot 2^{-\delta N},
    \end{align*}
    where $\tsc[r]_v$ denotes the transcript using verifier $V$ and witness $v$. Let $w \leq c\cdot n^d$ for some constants $c,d$ and notice that,
    \begin{align*}
        \delta N = \frac{N}{64b^2} \geq \frac{2^{(1-2\alpha)n}}{256}.
    \end{align*}
    Therefore,
    \begin{align}
        2^w \cdot Rb \cdot 2^{-\delta N} \leq 2^{2 + 2\alpha n + cn^d} \cdot 2^{-\frac{2^{(1-2\alpha)n}}{256}}.\label{eq:final_bound}
    \end{align}
    We have that this quantity is strictly less than $1$ for all $n$ such that,
    \begin{align*}
        n > n_3 \coloneq \frac{256(c+2\alpha + 2) (d+1)! }{(1-2\alpha)^{d+1}(\ln 2)^{d+1}}.
    \end{align*}
    Therefore, for all $n \geq \max(n_1, n_2, n_3)$, there exists a yes-instance $\pi^*_{\textup{YES}} \in \supp(D_n^{\textup{YES}})$ such that $\tsc[R]_v$ is good for every witness $v$. However, by \cref{clm:transcript_indistinguishability,clm:no_instance}, $V$ cannot distinguish $\pi^*_{\textup{YES}}$ from a no-instance $\pi^*_{\textup{NO}}$ consistent with $\tsc[R]$ of $\pi^*_{\textup{YES}}$ with probability above $\tfrac{1}{25}$. Therefore, the completeness condition is not satisfied for $\pi^*_{\textup{YES}}$, or soundness for $\pi^*_{\textup{NO}}$. Letting $\pi^*_n$ being one of these, by standard diagonalization techniques, we may create the oracle family $\Pi = \{\pi^*_n\}$ such that $L\notin \QCMA^{\Pi}[b(n)-2, q(n)]$.
\end{proof}

Combining the two results above, we obtain the following theorem.

\begin{theorem}[In-place oracle separation between \QCMA and \QMAo]\label{thm:inplace-qma-qcma}
    There exists an in-place oracle $\Pi$ with respect to which, $\QMAo^{\Pi} \not\subseteq \QCMA^\Pi$.
\end{theorem}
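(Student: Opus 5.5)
The plan is to combine the two preceding lemmas over a single, simultaneously chosen oracle family. First we fix constants satisfying \cref{eq:beta_alpha_condition}, say $l=4$, $\alpha=1/4$, and $\beta=0$, so that $\beta+5\alpha=5/4<2=l-2$. With these constants we use the language $L=\{1^n:\pi_n \text{ is a YES instance}\}$ defined from \cref{prob:perm_pointer_chasing}.

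The key point is that the \QMAo upper bound holds for \emph{every} family of instances $\pi_n$ of \cref{prob:perm_pointer_chasing}. Indeed, the verifier $V$ makes a single query and is sound against any NO instance. So we can take the diagonalized oracle $\Pi=\{\pi_n^*\}$ constructed in the proof of \cref{lem:qcma_inplace_lowerbound} and conclude $L\in\QMAo^{\Pi}[1,1]\subseteq\QMAo^{\Pi}$. We must check that every $\pi^*_n$ chosen in the diagonalization is a genuine instance of \cref{prob:perm_pointer_chasing}, either YES or NO. This holds because each $\pi^*_n$ is either $\pi^*_{\textup{YES}}\in\supp(D_n^{\textup{YES}})$ or the $\pi^*_{\textup{NO}}$ built in \cref{clm:no_instance}, which keeps $f_1,\dots,f_{b-1}$ and has $\abs{(S_b^{\textup{NO}})^{\textup{odd}}}=N/2$. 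For the finitely many small $n$ below $\max(n_1,n_2,n_3)$, we set $\pi_n$ to be any fixed instance.

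Next, we argue that $L\notin\QCMA^\Pi$. Any \QCMA verifier runs in polynomial time, so it makes at most $\poly(n)$ queries in total, and we may view them as at most $\poly(n)$ sequential rounds of one query each. Padding with dummy queries turns this into $R$ rounds of $q=1\le 2^{\beta n}$ queries. Since $b=\lceil 2^{n/4}\rceil$, we have $\poly(n)\le b(n)-2$ for all sufficiently large $n$. Hence every \QCMA verifier is eventually a $\QCMA^\Pi[b(n)-2,q(n)]$ verifier. The diagonalization in \cref{lem:qcma_inplace_lowerbound} enumerates the countably many verifiers and, for each, defeats it at some fresh large input length $n$. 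For that to work, the length chosen for the $k$-th verifier must exceed both the point where its running time drops below $b(n)-2$ and all previously used lengths, and we will make that choice explicitly.

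The one technical point I expect to need care is reconciling the diagonalization's ``at each $n$ pick a fooling instance'' step with this uniformity: all verifiers must be handled against one fixed family $\Pi$. The standard fix is to assign to the $k$-th verifier a length $n_k$ larger than all previous ones and larger than its threshold. At $n_k$ we choose $\pi^*_{n_k}$ via the argument in \cref{lem:qcma_inplace_lowerbound}, and this choice does not disturb any other length, since queries at length $n_k$ only touch $\pi_{n_k}$. Then verifier $k$ errs on $1^{n_k}$ with the oracle $\Pi$. Combining this with the \QMAo membership gives $\QMAo^\Pi\not\subseteq\QCMA^\Pi$.
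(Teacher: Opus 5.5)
Your proposal is correct and matches the paper, which obtains the theorem by combining the \QMAo verifier lemma (valid for every family of instances of \cref{prob:perm_pointer_chasing}) with the diagonalized oracle of \cref{lem:qcma_inplace_lowerbound}. The details you add are the ones the paper leaves implicit: instantiating $l=4$, $\alpha=1/4$, $\beta=0$; checking that each $\pi^*_n$ is a genuine YES or NO instance; and noting that any polynomial-time verifier eventually fits into $b(n)-2$ single-query rounds. One small fix: every length not assigned to some verifier, not only the small ones, must be set to a fixed valid instance.
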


\subsection{Bounded adaptivity standard oracle separation}

One may observe that the proof of \cref{lem:qcma_inplace_lowerbound} works using classical oracles.

\begin{remark}\label{rem:qcma_classical_lowerbound}
    Define the oracle family $O = \{O_n\}$ as the family of instances of \cref{prob:perm_pointer_chasing} $\pi_n$ where,
    \begin{align}
        O_n \ket{x,y} \coloneq \ket{x, y\oplus \pi_n(x)}\label{eq:classical_oracle_permutation}
    \end{align}
    There exists an oracle $O$ such that the language $L\coloneq \{1^n: \pi_n \text{ is a YES instance}\}$ is not in $\QCMA^{O}[b(n)-2,q(n)]$, where $b(n)$ and $q(n)$ are as in \cref{lem:qcma_inplace_lowerbound}.
\end{remark}

The only difference is that \cref{lem:parallel_hybrid} must work for classical oracles, which is exactly the hybrid method~\cite{BBBV97}.
Next, we show that any $\QMA^O$ query protocol can be parallelized by using Kitaev's proof that the Local Hamiltonian problem is \QMA-complete.~\cite{KSV02}.

\begin{restatable}{lemma}{lemQMAParallel}\label{lem:QMA-parallel}
    For any unitary $O$, $\QMA^{O} = \QMA^{\| O}$ and $\QMAo^{O} = \QMAo^{\| O}$. In terms of the number of rounds, $\QMA^{O}[R,1] \subseteq \QMA^{O}[1,O(R^4)]$ and $\QMAo^{O}[R,1] \subseteq \QMAo^{O}[1,O(R^4)]$.
\end{restatable}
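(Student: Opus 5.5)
The plan is to replace the sequential verifier by a verifier that checks a Feynman--Kitaev history state. Take an $R$-round verifier $V = (U_0,\dots,U_R)$ with one query $O$ per round, and unfold it into a circuit of $T = \poly(n)$ gates, $R$ of which are oracle gates. The new witness is the history state
\begin{align*}
\ket{\eta} = \frac{1}{\sqrt{T+1}}\sum_{t=0}^{T}\ket{t}_{\mathrm{clock}}\otimes W_t\cdots W_1\ket{\psi}\ket{0},
\end{align*}
where $W_t$ is the $t$-th gate and the clock uses a unary encoding. The new verifier chooses one term of the Kitaev Hamiltonian $H = \Hin + \Hprop + \Hout$ at random and measures it. For every $t$, the term $H_{\mathrm{prop},t}$ requires applying $W_t$ controlled on the clock register. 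When $W_t = O$, we build the controlled oracle from a single call to $O$ using \cref{lem:control-oracle}. The required eigenvector $\ket{\mu}$ exists because $O$ is a permutation (\cref{rem:permutation}); for a general unitary $O$, we would take an eigenvector and supply it in the witness. Every such check uses at most one oracle call. Sampling one term at random gives a verifier that queries $O$ once, in a single round, and accepts with probability $1 - \ev{H}{\eta}/\|\cdot\|$, up to normalization. Equality of the classes, $\QMA^O = \QMA^{\|O}$, then follows: one inclusion is immediate, and the other is this construction.

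For $\QMA$, the argument follows the standard Kitaev analysis. In the YES case, the history state has energy $\le \epsilon/(T+1)$. In the NO case, Kitaev's geometric lemma gives $\lambda_0(H) \ge \Omega((1-\sqrt{s})/T^3)$. This leaves a gap of order $1/\poly$, which we amplify with parallel repetition. \cite{MW05} amplification needs sequential $V$ and $V^\dagger$; instead we use independent parallel copies of the one-round test on fresh witness copies, or Marriott--Watrous applied to the one-round verifier, which stays one round because each of its iterations reuses the same single-round structure. The simplest option is to spend more queries in parallel. Each repetition is one parallel query, so $k$ repetitions give $k$ parallel queries in one round.

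The round count $O(R^4)$ comes from the following accounting. The only gates that need queries are the $R$ oracle gates, so we compress the clock so that only they matter. Treat each $U_r$ as a single step of length $1$, which gives a history of length $T = O(R)$. Then $\Hprop$ has spectral gap $\Omega(1/R^2)$, and the NO-case energy is $\Omega(1/R^3)$. With constant acceptance, the per-test gap is $\Omega(1/R^4)$ after accounting for the $1/R$ chance of sampling the relevant term. We therefore run $O(R^4)$ independent tests in parallel, each using at most one query, and take the AND of their outcomes. Since all $O(R^4)$ queries are issued in the same round, this gives $\QMA^O[1, O(R^4)]$. The non-query unitaries $U_r$ are implemented controlled by the clock without any queries.

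For perfect completeness, we note that every check is a projector test whose YES-case history state satisfies it exactly. The $\Hin$ and $\Hprop$ terms are satisfied exactly by the exact history state. The $\Hout$ term is satisfied exactly when $V$ accepts with probability $1$. The construction of \cref{lem:control-oracle} is exact, so no error enters there. Taking the AND of tests that each accept with probability $1$ keeps completeness at $1$. The main obstacle is making the soundness bound and parallel amplification precise. The witness for the $k$ parallel tests could be entangled across them. To handle this, we would require a product structure, or use the fact that the AND of projector tests $P_1,\dots,P_k$ on independent registers has acceptance at most $\prod(1-\Omega(\gamma))$ by a product bound on $\lambda_{\max}$ of tensored operators, with $\gamma = \Omega(1/R^4)$.
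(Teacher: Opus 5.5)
Your proposal is correct and follows essentially the same route as the paper: a Feynman--Kitaev history state over a clock of length $O(R)$ (one step per $U_r$ and per query), and a single randomly chosen projector check in which the oracle step is made controlled with one call via \cref{lem:control-oracle} and \cref{rem:permutation}. The rest also matches: an $\Omega(1/R^4)$ per-test gap and $O(R^4)$ parallel repetitions (an AND for perfect completeness), with entangled witness copies handled by a tensor-product norm bound where the paper uses sequential measurement and convexity. Two small corrections, neither of which affects your main line: the aside that Marriott--Watrous applied to the one-round test stays one round is false (its $V$/$V^\dagger$ iterations are sequential), and in the two-sided case the YES energy $\epsilon/(T+1)$ lies below the NO bound $\Omega(1/T^3)$ only after you first reduce $\epsilon$ to $O(1/T^2)$, a step the paper also leaves implicit.
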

\begin{proof}
    Let us first consider statement for \QMA. Consider some language $L \in \QMA^{O}[R,1]$ and for any $x\in \{0,1\}^n$, let $V$ be the verification circuit,
    \begin{align*}
        V = U_{R+1} O U_{R} \dots O U_1 \ket{\phi_x}\ket{0} = U_{L} U_{L-1} \dots U_2 U_1 \ket{\phi_x}\ket{0},
    \end{align*}
    where $L=2R+1$, for each even $i\in[L]$ $U_i$ is an oracle call and $\ket{\phi_x}$ is the witness.
    Recalling Kitaev's circuit-to-Hamiltonian construction~\cite{KSV02}, we may map $V$ to some Hamiltonian $H = \Hin + \Hprop + \Hout$ whose ground states are of the form,
    \begin{align*}
        \ket{\psi} = \frac{1}{\sqrt{L+1}} \sum_{t\in [0,L]} \left(U_t \dots U_0 \ket{\phi_x}\ket{0}\right) \otimes \ket{t}.
    \end{align*}
    Notice that the $\Hin$ term only checks whether the ancilla register is $\ket{0}$, but not the witness $\ket{\phi_x}$ and this instance of $H$ is no longer local. Nonetheless, this does not matter as with access to $O$, we may approximate its ground state energy. The main difference from~\cite{KSV02} is in the following term,
    \begin{align}
        \Hprop &= \sum_{t=1}^L H_t\\ 
        2H_t &= I\otimes\bigl(\ketbra{t}{t} + \ketbra{t-1}{t-1}\bigr) -  U_t\otimes\ketbra{t}{t-1} -  U_t^\dagger \ketbra{t-1}{t}.\label{eq:h_t_def}
    \end{align}
    We note that $H_t$ is a projector on the span of states $\ket{\psi}\ket{t-1} - (U_t\ket{\psi})\ket{t}$. Therefore, for a $d$-dimensional $U$,
    \begin{align*}
        H_t = \frac{1}{2}\sum_{j=1}^d \bigl(\ket{j}\ket{t-1} - (U_t\ket{j})\ket{t}\bigr)\bigl(\bra{j}\bra{t-1} - (\bra{j}U_t^\dagger)\bra{t}\bigr).
    \end{align*}
    To measure with regard to projector $H_t$, we may perform the following change of basis,
    \begin{align}
        U_t^\prime = U_t\otimes \ketbra{t-1}{t-1} + I\otimes \ketbra{t}{t},\label{eq:QMA-parallel:basis}
    \end{align}
    as then,
    \begin{align*}
        2 U_t^\prime H_t (U_t^\prime)^\dagger &= \sum_{j=1}^d \bigl((U_t\ket{j})\ket{t-1} - (U_t\ket{j})\ket{t}\bigr)\bigl((\bra{j}U_t^\dagger)\bra{t-1} - (\bra{j}U_t^\dagger)\bra{t}\bigr)\\
        &= \sum_{j=1}^d U_t\ketbra{j}{j}U_t^\dagger \otimes \bigl(\ket{t-1}-\ket{t}\bigr)\bigl(\bra{t-1}-\bra{t}\bigr)\\
        &= I \otimes \bigl(\ket{t-1}-\ket{t}\bigr)\bigl(\bra{t-1}-\bra{t}\bigr),
    \end{align*}
    which is an $X$-basis measurement on the $\{\ket{t-1}, \ket{t}\}$ subspace of the clock register. By \cref{lem:control-oracle,rem:permutation}, $U_t^\prime$ may be applied using a single application of $U_t$.

    The remainder of the analysis is standard. Letting $\tau$ be the test where we check a term of $H_t$ uniformly at random, we have that,
    \begin{align}
        \Pr[\tau\text{ accepts} | x\in L] \geq 1 - \frac{\tfrac{1}{3}}{(L+2)^2}\label{eq:prob_test} \\
        \Pr[\tau\text{ accepts} | x\notin L] \leq 1 - \frac{c}{(L+2)^4},\nonumber
    \end{align}
    where $c$ is a constant. Next, we run $O(L^4)$ parallel runs of the computation to estimate $\Pr[\tau \text{ accepts}]$, which allows us to decide whether $x\in L$. Note that this requires the prover to send multiple copies of $\ket{\psi}$ which may be entangled. However, this does not affect the protocol as it may be viewed as measuring one copy after another, and mixed states do not help by convexity.

    For \QMAo the same proof goes through, except the lower bound in \cref{eq:prob_test} is $1$. Therefore, running the test $O(L^4)$ and accepting if and only if all tests pass provides us with the desired soundness and completeness gap.
\end{proof}

Therefore, we may obtain a bounded adaptivity oracle separation using \cref{prob:perm_pointer_chasing} with respect to classical oracles. We emphasize that the restriction is important as allowing even a single additional round of queries puts the problem into \QCMA. Thus, even when $R(n)=n^k$, this is not an oracle separating \QMAo from unrestricted \QCMA.

\begin{restatable}[Lifting to unitary oracles]{theorem}{thmBoundedAdaptivity}\label{thm:bounded_adaptivity}
    For any fixed $R\in \poly(n)$ and $q(n) \leq 2^{\beta n}$ where $\beta \geq 0$, there exists a unary language $L$ and an oracle $O = \{O_n \}$ such that, $L\in \coRP^O[R,1]$, $L\in\QMAo^O[1,O(R^4)]$, but $L\notin \QCMA^O[R-1,q(n)]$.
\end{restatable}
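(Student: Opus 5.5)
The plan is to instantiate \cref{prob:perm_pointer_chasing} with the number of buckets tied to the round bound $R$, rather than to $2^{\alpha n}$. Concretely, set $b \coloneq R+1$; all other parameters stay as before: $N=2^n$, $M=2^{ln}$ with $l$ large enough that $\beta+5\cdot(\text{small})<l-2$, and $B$ and the buckets $I_i$ defined as in the previous section. Access is through the XOR oracle $O_n$ of \cref{eq:classical_oracle_permutation}. Since $b$ is now polynomial rather than exponential, every inequality used in the proof of \cref{lem:qcma_inplace_lowerbound} only gets easier. Indeed, $b^{5}\le 2^{\alpha n}$ for any fixed $\alpha>0$ and large $n$, so the conditions of \cref{clm:bad_transcript,clm:no_instance} hold with $\epsilon=1/(10^4qb^2)$ and $\delta=1/(64b^2)$. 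Moreover, $\delta N = 2^n/\poly(n)$ still dominates the union bound over $2^{\poly(n)}$ witnesses.

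\textbf{Lower bound.} By \cref{rem:qcma_classical_lowerbound}, the argument of \cref{lem:qcma_inplace_lowerbound} goes through with \cref{lem:parallel_hybrid} replaced by the standard BBBV hybrid bound for XOR oracles. That bound has the same form: the error is at most $2\sqrt{\sum_j p_j}$, where $p_j$ is the weight the $j$-th query register places on points where the two oracles differ. The argument therefore rules out verifiers with $b-2=R-1$ rounds of $q(n)\le 2^{\beta n}$ parallel queries. Diagonalizing over all such verifiers gives an oracle family with $L\notin\QCMA^O[R-1,q(n)]$. The diagonalization must be arranged so that every chosen $\pi_n$, whether it is a YES instance or the NO instance produced by \cref{clm:no_instance}, is still a valid instance of the problem; the upper bounds below must then hold for this specific oracle.

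\textbf{Upper bounds.} For $\coRP^O[R,1]$, the algorithm samples a uniform $j\in[N]$ and follows the chain $j\mapsto\pi(j)\mapsto\cdots$ for $b-1=R$ adaptive single queries, reaching $f_b(j')\in S_b$ for some $j'$. It rejects if and only if the reached point is odd. YES instances are never rejected, and NO instances are rejected with probability at least $1/2$. This is the one-sided error pattern of $\coRP$; note that the language is defined via YES instances, so one must check that the error side matches the statement (accepting YES with certainty places $L$ in $\coRP$). The number of rounds is $b-1=R$. (The count of $R+1$ rounds in \cref{res:bounded_adaptivity} corresponds to the indexing shift $R\leftrightarrow R-1$.)

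For the $\QMAo$ bound, I would not adapt the in-place verifier, since an XOR oracle cannot map $\ket{S_b}$ to $\ket{[N]}$ in place. Instead, the deterministic core of the coRP algorithm already gives a $\QMAo^O[R,1]$ verifier. Take the witness to be the uniform superposition $\ket{[N]}$ over starting indices, or even have the verifier prepare its own starting index. The verifier computes the chain with $R$ queries and uncomputes garbage; uncomputation needs extra queries, but these can be folded into the same rounds by querying twice per round, which changes only constants. It then accepts if and only if the endpoint is even. Applying \cref{lem:QMA-parallel} to this verifier yields $L\in\QMAo^O[1,O(R^4)]$ with perfect completeness preserved.

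The main obstacle I expect is the last step. \cref{lem:QMA-parallel} is stated for $R$ rounds of a single query, so the uncomputation queries have to be accounted for carefully. The real subtlety is soundness. A $\QMAo$ verifier needs rejection probability bounded away from $0$ on NO instances for every witness. The history-state test, however, only guarantees rejection probability $1-c/(L+2)^4$, and amplification requires that the original circuit rejects the best witness with constant probability; here it does so with probability at least $1/2$ since the start is sampled. I would verify that the $O(L^4)$ parallel repetitions with accept-iff-all-pass give constant soundness while keeping completeness exactly $1$.
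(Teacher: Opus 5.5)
Your proposal is correct and takes essentially the same route as the paper. You set $b=R+1$, so that \cref{lem:qcma_inplace_lowerbound} (via \cref{rem:qcma_classical_lowerbound}) excludes $b-2=R-1$ rounds. You use the $R$-step chain-following test for $\coRP^O[R,1]$, and you compress its superposition version with \cref{lem:QMA-parallel}.

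One caution. Your option of taking $\ket{[N]}$ as an unchecked witness would be unsound, since a cheating prover could send a single $\ket{j}$ whose chain endpoint is even. The verifier must therefore prepare the starting superposition itself, as the paper's "run over a superposition over $[N]$" and your own "since the start is sampled" intend. Also, no uncomputation of the intermediate chain registers is needed, because garbage is harmless in the history-state construction.
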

\begin{proof}
    The oracle $O = \{O_n \}$ is a set of instances of \cref{prob:perm_pointer_chasing} where each $O_n$ encodes $\pi_n$ as in \cref{eq:classical_oracle_permutation}. The language $L$ is the unary language such that $1^n\in L$ if and only if $\pi_n$ is a YES instance. By \cref{rem:qcma_classical_lowerbound}, we have that there exists an $O$ such that $L\notin\QCMA^O[R-1, q(n)]$.

    Next, we argue the upper bounds by describing their corresponding algorithms given input $1^n$. For $L\in\coRP^O[R,1]$, consider the algorithm which samples a $j\in[N]$, computes $\pi^R(j)=f_{R+1}(j)=f_b(j)$ using $R$ queries (Notice that in \cref{prob:perm_pointer_chasing}, $R=b-1$) and accepts if and only if the output is even. In the yes-case, it accepts with certainty, while in a no-case, it accepts with probability at most $1/2$.
    For $L\in\QMAo^O[1,O(R^4)]$, by \cref{lem:QMA-parallel}, the same $R$-query algorithm run over a superposition over $[N]$ may be compressed to a single round of $O(R^4)$ queries.
\end{proof}

\subsection{Extensions of prior results}\label{sec:extensions}

Below, we focus on extending the results of~\cite{BHV26}. Let us first fix the key terminology. We will use $\lambda$ as the input size. Fix $q$ to be the smallest prime with $\lambda^5 < q \leq 2\lambda^5$ and set $C_\lambda \coloneq \Mult_{s,\FF_q,k}$ (an error-correcting code) with $k = \lambda^3$, $s = \lambda$, $\Sigma \coloneq \FF_q^s$.  For a function $H : [q] \times \Sigma \to \{0,1\}$. The relation $R_{C_\lambda, H}$ associated with the \emph{code intersection problem} is,
\begin{align*}
    R_{C_\lambda, H} \coloneq \left\{(x, v) \in \{0,1\}^q \times \Sigma^q : v \in C_\lambda,\ \forall j \in [q],\ H(j,v_j) = x_j\right\},
\end{align*}

Based on~\cite{YZ24}, it was shown that there exists a state $\ket{\adv_H}$ and an algorithm $\BiasYZ$ which, given some input $x$, finds a $v$ such that $(x,v)\in R_{C_\lambda, H}$ with high probability~\cite{BHV26}. In their proof, the function $H$ is drawn from a biased distribution, meaning that the random function drawn will output $1$ with probability $p^*$ as opposed to $\tfrac{1}{2}$.

We define the distribution $\Bias_{q,p,\Sigma}$ over functions $H$ as the product of $q$ distributions $\Bias_{p,\Sigma}$ over functions $H_i$ from $\Sigma$ to $\{0,1\}$ where $\Pr[H_i(y)=1]=p$. Let $\Good\subseteq \supp(\Bias_{q,p,\Sigma})$ be a subset which we define later.

\begin{restatable}[Adapted from Corollary~5.24 in~\cite{BHV26}]{corollary}{corPreciseYZ}\label{cor:precise_yz}
    Set $p^* \coloneq (\lambda-3)/(8\lambda^4)$.  For all sufficiently large $\lambda$, there exist an efficient quantum algorithm $\BiasYZ$ and $\poly(\lambda)$-qubit states $\{\ket{\adv_H}\}_H$ such that
    \begin{align}
        \Pr_{H\gets\Bias_{q,p^*,\FF_q^s}}\!\!&\left[\forall x\in\{0,1\}^\lambda,\;\Pr\!\left[(x\|0^{q-\lambda},v)\in R_{C_\lambda,H}: v\gets\BiasYZ(\ket{\adv_H},x)\right]\geq 1-2^{-\lambda}\right]\nonumber\\ &\geq  1-2^{-\lambda},\label{eq:biasyz_success}
    \end{align}
    and in particular, $\Pr_{H\gets\Bias_{q,p^*,\FF_q^s}}[H\in\Good]\geq 1-2^{-\lambda}$.
\end{restatable}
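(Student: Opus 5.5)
The plan is to inherit the argument of Corollary~5.24 in~\cite{BHV26} essentially verbatim and change only the bias parameter and the error targets. In~\cite{BHV26}, the success of $\BiasYZ$ is established in two stages. First, for a random $H\gets\Bias_{q,p,\FF_q^s}$ and a fixed target $x$, one shows that the set of codewords $v\in C_\lambda$ with $H(j,v_j)=x_j$ for all $j$ is nonempty and well-behaved, and that the Yamakawa--Zhandry style state $\ket{\adv_H}$ (built from the list-recoverable code $\Mult_{s,\FF_q,k}$) lets one sample such a $v$. Second, one amplifies the per-$x$ success probability and takes a union bound over the $2^\lambda$ inputs $x\in\{0,1\}^\lambda$. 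Following this, we define $\Good$ to be exactly the set of $H$ for which the inner event of \cref{eq:biasyz_success} holds. With this definition, the final sentence of the corollary follows immediately from \cref{eq:biasyz_success}, and the work reduces to proving \cref{eq:biasyz_success} itself.

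The key steps, in order, are as follows. (i) Recompute the expected number of solutions. For the padded input $x\|0^{q-\lambda}$, a uniformly random codeword satisfies the constraints with probability $(p^*)^{w}(1-p^*)^{q-w}$, where $w$ is the Hamming weight of $x$, so $w\le\lambda$. Multiplying by $|C_\lambda|=q^{k}$ with $k=\lambda^3$ and $q\approx\lambda^5$, the count is still $2^{\Omega(\lambda^3\log\lambda)}$ even with $p^*=\Theta(\lambda^{-3})$. This is because $(p^*)^{\lambda}$ costs only $2^{-O(\lambda\log\lambda)}$ and $(1-p^*)^{q}\ge e^{-2p^*q}=2^{-O(\lambda^2)}$. (ii) Use the second-moment and pairwise-independence properties of the code, as in~\cite{BHV26}, to show that the number of solutions concentrates around its mean except with probability $2^{-\Omega(\lambda)}$. (iii) Invoke the quantum list-recovery decoding of~\cite{YZ24}, as adapted in~\cite{BHV26}, which places each coordinate's set of admissible symbols into a superposition stored in $\ket{\adv_H}$. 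Its success probability per copy is a constant or $1/\poly$. (iv) Repeat with $\poly(\lambda)$ copies in the advice, so that the failure probability for each $x$ drops to $2^{-2\lambda}$. (v) Apply a union bound over the $2^\lambda$ choices of $x$ together with Markov's inequality over $H$, which yields both error terms of $2^{-\lambda}$.

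The main obstacle will be verifying that the list-recovery parameters of $\Mult_{s,\FF_q,k}$ still apply at this specific bias. The per-coordinate admissible sets $\{y\in\Sigma:H(j,y)=x_j\}$ have density roughly $p^*$ where $x_j=1$, but density $1-p^*\approx 1$ where $x_j=0$. Decoding therefore relies on the $\lambda$ coordinates that carry ones together with the large number of coordinates near density one. I will first check that the constant $(\lambda-3)/(8\lambda^4)$ is exactly the threshold at which the inequalities of~\cite{BHV26} (list size versus the rate $k/q$ and the multiplicity $s$) remain satisfied. If the rate condition becomes tight, I will increase the number of advice copies rather than touch the code parameters.
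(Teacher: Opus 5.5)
There is a genuine gap: you re-verify the wrong condition. The correctness of $\BiasYZ$ does not rest on list recovery or on counting solutions. $\BiasYZ$ is the Yamakawa--Zhandry procedure: it combines per-coordinate preimage superpositions from $\ket{\adv_H}$, passes to the Fourier domain, and removes the error $e$ by \emph{unique decoding of the dual code} $C_\lambda^\perp$ with $\Dec_{\lambda^\perp}$. The only hypothesis of \cite[Theorem~5.22]{BHV26} that changes with the bias is that $(C_\lambda,\Dec_{\lambda^\perp})$ is $(p^*,\mu,S)$-good for $S=\{0,1\}^\lambda\times 0^{q-\lambda}$. That is, for $e\sim\mathcal{D}_{p^*,x\|0^{q-\lambda}}$, $\hw(e)$ must stay below $\textup{dist}(C_\lambda^\perp)/2$ except with probability $\mu$. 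The paper checks exactly this:
\begin{itemize}
\item The first $\lambda$ coordinates contribute at most $\lambda$ errors.
\item $X=\hw(e_{\lambda+1:q})$ has mean $(q-\lambda)p^*\in(\lambda^2/24,\lambda(\lambda-3)/4]$, so a Chernoff bound gives $X\le 2\E[X]$ except with probability $e^{-\lambda^2/96}$.
\item Hence $\hw(e)\le \lambda+\lambda(\lambda-3)/2<\lambda^2/2\le\textup{dist}(C_\lambda^\perp)/2$.
\end{itemize}
That last inequality is what pins down $p^*=(\lambda-3)/(8\lambda^4)$. The paper also checks the side condition $2q(1-p^*)^{|\Sigma|}\le 2^{-2\lambda-2}$, which controls the chance that some coordinate's preimage set is empty. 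Theorem~5.22 then gives the per-$x$ guarantee with probability $1-2^{-2\lambda-1}$ over $H$, and a union bound over $x$ finishes.

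Your steps (i)--(ii) only show that solutions exist and that their number concentrates; neither says anything about whether the dual decoder succeeds. The list-recovery parameters of $\Mult_{s,\FF_q,k}$ enter only the classical \Guesser upper bound (\cref{lem:precise_guesser_upper}, via $p^*\lambda^5\ge\lambda^2/16$), not the algorithm's correctness. So ``checking that $p^*$ is the threshold for list size versus rate and multiplicity'' targets the wrong inequality.

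Your fallback in (iv) also cannot work. If $p^*$ were too large, $\hw(e)$ would typically exceed half the dual distance, each copy would fail with overwhelming probability, and more advice copies would not help. When the distance condition does hold, a single run already fails with probability about $\mu$, so no amplification is needed; the per-copy success is not merely constant or $1/\poly$.

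Your remark that decoding ``relies on'' the density-$(1-p^*)$ coordinates also inverts the picture. The coordinates with $x_j=1$ (preimage density $\approx p^*$) are almost always in error in the Fourier domain. The near-density-one coordinates are the ones that must rarely be in error, and their expected error count $(q-\lambda)p^*$ is the quantity you need to bound.
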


For the sake of completeness, we provide the proof with an adjusted parameter $p^*$ in \cref{sec:missing_proofs}.

\subsubsection{Zero error advice separation}

We first strengthen the advice separation result of~\cite{BHV26}. As $\ZQP \subseteq \BQP_1$, this affirmatively answers a variant of \cref{question1} where the witness is replaced with advice.

\begin{restatable}[Extension of Theorem~7.5 in~\cite{BHV26}]{theorem}{thmZqp}\label{thm:zqp}
    There exists a classical oracle $O$ such that $\ZQPqp^O \not\subseteq \BQPp^O$.
\end{restatable}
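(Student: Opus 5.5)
We plan to reuse the \cite{BHV26} advice separation (their Theorem~7.5) and upgrade only the upper bound to zero error. The oracle encodes, for each input length $\lambda$, a function $H\gets\Bias_{q,p^*,\FF_q^s}$ (with the bias $p^*$ of \cref{cor:precise_yz}) together with membership access to $H$, and the language is the same one as in \cite{BHV26}. It is defined from the code intersection relation $R_{C_\lambda,H}$: on input $x\in\{0,1\}^\lambda$, the answer is a bit that is determined by, and efficiently computable from, any $v$ with $(x\|0^{q-\lambda},v)\in R_{C_\lambda,H}$. The main point is that every witness $v$ for the relation can be checked deterministically. One checks $v\in C_\lambda$ by a parity check on the multiplicity code, and one checks $H(j,v_j)=x_j$ for all $j\in[q]$ with $q$ classical queries. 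Before that, we fix $H$ so that the language depends only on $x$ and is well defined; uniqueness or consistency of the encoded bit should follow from the code's distance, as in \cite{BHV26}.

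\textbf{Upper bound.} Given advice $\ket{\adv_H}$, we run $\BiasYZ(\ket{\adv_H},x)$ to obtain a candidate $v$, then verify $(x\|0^{q-\lambda},v)\in R_{C_\lambda,H}$ using classical oracle queries. If verification succeeds, we output the bit encoded by $v$. Otherwise we output ``maybe''. The algorithm never answers incorrectly, because a verified $v$ certifies the answer. By \cref{cor:precise_yz}, if $H\in\Good$, every $x$ yields a valid $v$ except with probability $2^{-\lambda}\le 1/2$, so the ``maybe'' probability is small. Since $\Pr[H\in\Good]\ge 1-2^{-\lambda}$, a union bound over lengths and Borel--Cantelli let us fix $H$ with $H\in\Good$ for all sufficiently large $\lambda$; finitely many small lengths are hardcoded. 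This places $L$ in $\ZQPqp^O$. Reusability of the advice is not an issue, since $\ZQP$ semantics only require a single run per input.

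\textbf{Lower bound.} We import the $\BQPp$ lower bound of \cite{BHV26} for the same relation problem, re-checking that it survives the change of bias to $p^*$. Their argument has three parts. First, classical advice of polynomial length fixes only a small amount of information about $H$. Second, relative to biased random $H$, finding a valid $v$ for a random $x$ is hard for query-bounded algorithms. Third, one diagonalizes over all machines and advice strings. Each of these steps depends on $p$ only through quantitative bounds, so we redo those bounds with $p^*=(\lambda-3)/(8\lambda^4)$.

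\textbf{Main obstacle.} The hardest step is ensuring the same fixed oracle serves both the upper bound, via the $\Good$ event, and the lower bound, via diagonalization against every $\BQPp$ machine. We would handle this by sampling $H$ and showing that, for each length, the event ``$H\in\Good$ and machine $M_i$ with any advice fails'' has high probability. This follows from the union bound combining $1-2^{-\lambda}$ with the lower-bound failure probability. We then choose the oracle length by length as in the standard diagonalization.
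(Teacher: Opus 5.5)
There is a genuine gap. Your upper-bound template (run \BiasYZ on the advice, certify the output, otherwise answer ``maybe'') is right, but the oracle and language you build around it are not those of \cite{BHV26}, and your changes break the lower bound you want to import.

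\textbf{Membership access to $H$.} You add query access to $H$ so that $v$ can be checked with $q$ queries. The $\BQPp$ lower bound of \cite{BHV26} is proved for an oracle in which $H$ is reachable only through a relation-gated query. The guesser-style mechanism (compare \cref{lem:precise_guesser_upper}, where the guesser must guess values of $H$ blindly) depends on this. With quantum query access to $H$, an advice-free $\BQP$ machine can prepare each state $\sum_{\sigma:\,H(j,\sigma)=x_j}\ket{\sigma}$ itself. It evaluates $H(j,\cdot)$ on a uniform superposition over $\Sigma$, measures, and retries coordinate by coordinate; a target-$1$ coordinate succeeds with probability about $p^*=1/\poly(\lambda)$. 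It then runs the Yamakawa--Zhandry Fourier-domain decoding that underlies \BiasYZ. That decoding works for targets $x\|0^{q-\lambda}$ by exactly the decodability verified in \cref{sec:missing_proofs}. So valid $v$ can be found efficiently for every $x$ with no advice, and no $\BQPp$ lower bound can hold for your oracle.

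\textbf{The language.} The language is also not well defined as you state it. For a fixed $x$ there are typically exponentially many codewords $v$ with $(x\|0^{q-\lambda},v)\in R_{C_\lambda,H}$. The distance of $C_\lambda$ does not make $v$ unique. You give no bit that is efficiently computable from $v$ and consistent across all these solutions.

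\textbf{The missing idea.} In \cite{BHV26} the answer bit is $G_\lambda(x)$ for an independent random function $G_\lambda$. The oracle is $O_\lambda(x,v)=G_\lambda(x)$ if $(x\|0^{q-\lambda},v)\in R_{C_\lambda,H_\lambda}$ and $\bot$ otherwise, so it already performs the verification. One query on the $v$ returned by \BiasYZ gives either the correct bit or $\bot$, and $\bot$ occurs exactly when $v$ is invalid. Answering ``maybe'' on $\bot$ therefore gives zero error without ever exposing $H$.

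Since neither the oracle nor the language changes, the $\BQPp^O$ lower bound is literally the one from \cite[Theorem~7.5]{BHV26}. No re-derivation for a new bias and no new diagonalization is needed. What remains is fixing good advice for all but finitely many $\lambda$ via \cref{eq:biasyz_success} and Borel--Cantelli, and hardcoding the rest, as you already describe.
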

\begin{proof}
    We use the same oracle as~\cite{BHV26} and show that their $\BQPqp$ algorithm can be made zero-error as the lower bound against $\BQPp^O$ is exactly the lower-bound part of~\cite[Theorem~7.5]{BHV26}. Recall that the oracle is $O = \{O_\lambda\}$ where,
    \begin{align*}
        O_\lambda(x, v) \coloneq \begin{cases}
            G_\lambda(x) & \text{if } (x\|0^{q-\lambda}, v) \in R_{C_\lambda, H_\lambda},
            \\ \bot & \text{otherwise}
        \end{cases},
    \end{align*}
    where $\lambda\in\NN$, $x\in\{0,1\}^\lambda$, and $G_{\lambda}:\{0,1\}^\lambda\to\{0,1\}$ is a random function. The corresponding language used in the separation is $L^O \coloneq \bigsqcup_{\lambda \in \NN} G_\lambda^{-1}(1)$.

    The \ZQPqp algorithm runs the $\BiasYZ$ algorithm on the advice state $\ket{\adv_H}$ and input $x\in \{0,1\}^\lambda$ obtaining some $v$. Afterwards, it queries the oracle $O_\lambda(x,v)$, receiving some output $b$. If $b\in \{0,1\}$, then the verifier returns $b$. Otherwise, if $b=\bot$, they return \say{maybe}. Since $O_\lambda$ outputs $\bot$ if and only if $(x\|0^{q-\lambda},v)\not\in R_{C_\lambda,H_\lambda}$, the verifier never makes an error.  Note that by \cref{eq:biasyz_success} and the Borel-Cantelli lemma~\cite{Bor09,Can17}, good advice such that $\calA$ solves the problem for all $x\in\{0,1\}^{\lambda}$ exists for all but finitely many $\lambda$. For the instances where $\calA$ would fail, we can hardcode these inputs into a modified verifier $\calA^\prime$ which answers these instances correctly.
\end{proof}

\subsubsection{Precise verification}

The oracle separation of~\cite{BHV26} gives a separation between classes which have a constant completeness and soundness gap. We show that their argument extends to the setting where the gap is $2^{-\poly(n)}$ when this gap is fixed. The proof is the same, except we make a slight modification to the bias parameter $p^*$. The result uses the following problem.

\begin{problem}[Code Intersection Subset Size $\CISS_{\sigma}$]\label{prob:ciss}
    Let $H\in\Good$ be a random function, $\sigma:\NN \to \NN$ a function and $E \subseteq \{0,1\}^\lambda$. Given access to the following oracle,
    \begin{align*}
        O[H,E](x,v) = \begin{cases}
            1 &\text{if $x\in E$ and $(x,v)\in R_{C_\lambda, H}$.}\\
            0 &\text{otherwise,}
        \end{cases}
    \end{align*}
    decide whether (YES) $\abs{E} = 2^\lambda$ or (NO) $\abs{E} \leq 2^\lambda - \sigma(\lambda)$.
\end{problem}

The original problem~\cite{BHV26} set $\sigma(n)= \tfrac{2^{\lambda+1}}{3}$. Instead, we will use $\sigma(n)=2$. For simplicity of notation, in this section we will write that $\QCMA(2^{-n}) = \QCMA(1-2^{-n},1-2^{-n+1})$ and $\QMA(2^{-n}) = \QMA(1-2^{-n},1-2^{-n+1})$. Let us show that $\CISS_2$ is in $\QMA(2^{-n})$.

\begin{restatable}{lemma}{lemPreciseQmaAlgo}\label{lem:precise_qma_algo}
    $\CISS_2$ may be solved by $\QMA(2^{-n})$.
\end{restatable}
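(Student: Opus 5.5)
The plan is to follow the \QMA verifier of~\cite{BHV26} for the code intersection subset size problem. Only the acceptance analysis changes: the gap parameter $\sigma(\lambda)=2$ turns a constant gap into a $2^{-\lambda}$-scale gap.

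\textbf{Verifier.} The witness is $\ket{\adv_H}$, or polynomially many copies of it if needed. The verifier samples $x\in\{0,1\}^\lambda$ uniformly, runs $\BiasYZ(\ket{\adv_H},x)$ to obtain $v$, queries $O[H,E](x,v)$, and accepts if and only if the answer is $1$.

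\textbf{Completeness.} In the YES case $E=\{0,1\}^\lambda$. Since $H\in\Good$, \cref{cor:precise_yz} gives that for every $x$ the output $v$ satisfies $(x\|0^{q-\lambda},v)\in R_{C_\lambda,H}$ with probability at least $1-2^{-\lambda}$. The oracle then returns $1$, so the acceptance probability is at least $1-2^{-\lambda}$, matching the completeness $1-2^{-n}$ with $n=\lambda$. This is exactly why $\BiasYZ$ is needed with failure probability $2^{-\lambda}$, which the adjusted $p^*$ in \cref{cor:precise_yz} provides.

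\textbf{Soundness.} In the NO case at least two strings $x\notin E$, and for those the oracle always outputs $0$, whatever witness is sent. For any witness, the acceptance probability is therefore at most $\Pr_x[x\in E]\le (2^\lambda-2)/2^\lambda = 1-2^{-\lambda+1}$. The argument does not use the structure of the witness at all, so entangled or mixed witnesses are covered automatically.

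Two technical points remain.
\begin{itemize}
\item \emph{Input length.} The paper's notation $\QMA(2^{-n})$ is phrased in the input length $n$, while the bounds above are in $\lambda$. For the unary or oracle language this is handled by identifying $n$ with $\lambda$, or by padding the input by a polynomial factor so that it is consistent with the oracle's indexing.
\item \emph{Exactness of the thresholds.} Completeness needs the bound $\ge 1-2^{-\lambda}$ to hold exactly, not up to constants, and this is precisely what \cref{eq:biasyz_success} gives. If $\BiasYZ$ only guaranteed an error of $O(2^{-\lambda})$, I would instead make the verifier accept automatically with some small probability. Accepting with probability $\gamma$ and otherwise running the test maps completeness $c$ to $\gamma+(1-\gamma)c$ and soundness $s$ to $\gamma+(1-\gamma)s$, which places both probabilities at the required thresholds $1-2^{-n}$ and $1-2^{-n+1}$ while preserving the gap up to a constant factor.
\end{itemize}
I expect this calibration to be the only delicate step; the rest is immediate.
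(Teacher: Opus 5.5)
Your proposal is correct and is essentially the paper's proof. The verifier is the same: sample $x$, run $\BiasYZ$ on $\ket{\adv_H}$, query $O[H,E](x,v)$. Completeness $1-2^{-\lambda}$ comes from \cref{cor:precise_yz}, and soundness $\Pr_x[x\in E]\le 1-2^{-\lambda+1}$ holds for every witness. Your calibration fallback is unnecessary, since \cref{eq:biasyz_success} gives the exact bound. Note also that mixing in unconditional acceptance only shrinks the gap, so that fallback could not rescue a gap smaller than $2^{-\lambda}$ anyway.
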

\begin{proof}
    The $\QMA(2^{-n})$ verifier $V$ randomly samples $x\leftarrow \{0,1\}^\lambda$, obtains $v$ by running the algorithm $v\leftarrow\BiasYZ(\ket{\adv_H}, x)$, queries $O[H,E](x,v)$ and outputs the result. In a YES instance, assuming $H\in \Good$, the probability of success depends solely on $\BiasYZ$ which is at least $1-2^{-\lambda}$. On the other hand, given a NO instance, even if the algorithm succeeds with probability $1$, $V$ may only accept if $x\in E$, which occurs with probability at most $\tfrac{2^\lambda - 2}{2^\lambda} = 1 - 2\cdot 2^{-\lambda}$.
\end{proof}

Next, we will show the $\QCMA(2^{-n})$ lower bound. The main argument in~\cite{BHV26} is that the existence of a \QCMA algorithm which makes $r$ queries and runs unitaries $\{U_i\}_{i\in [r]}$ implies the existence of a $\Guesser$ algorithm, a zero-query algorithm which can guess elements $(x,v)\in R_{C_\lambda,H}$ without making any queries to $O[H,E]$. In the algorithm below, we will use $O_S$ to denote the unitary which, given a set $S$, behaves as follows,
\begin{align*}
        O_S\ket{x,y} \coloneq  \ket{x, y\oplus \mathds{1}[x\in S]}.
    \end{align*}

\begin{algobox}{\Guesser}\alglabel{alg:guesser}
    \Input{String $1^l$ and unitaries $\{U_i\}_{i\in [r]}$.}
    \begin{enumerate}
        \item Sample a random witness $y\leftarrow \{0,1\}^{w}$ and let $S_0 = \emptyset$.
        \item Repeat the following for $i\in [l]$,
        \begin{enumerate}
            \item Randomly sample $j\leftarrow [0,r-1]$.
            \item Compute $\ket{\psi_j} = U_j O_{S_{i-1}}\dots O_{S_{i-1}} U_0 \ket{y,0}$.
            \item Measure the query register of $\ket{\psi_j}$ to get some $(x,v)$.
            \item Set $S_j \coloneq S_{j-1} \cup \{(x,v)\}$.
        \end{enumerate}
        \item Output $\{(x_{[1,\lambda]},v):(x,v)\in S_l\}$
    \end{enumerate}
\end{algobox}

The following statements are directly from~\cite{BHV26}, except we parametrized the soundness and completeness gap as $\delta$. The first is a lower bound on the success probability of \Guesser.

\begin{lemma}[\Guesser lower bound; Lemma~6.4 in~\cite{BHV26}]\label{lem:precise_guesser_lower}
    Let $\delta>0$ and suppose there exists a $\QCMA(c,c-\delta)$ algorithm $\calA$ that uses a $t(\lambda)$-bit classical witness, $Q(\lambda)$ oracle queries solving $\CISS_2$. Then for all $l\leq 2^\lambda-1$, there exists a no-query \Guesser such that for every $H\in\Good$,
    \begin{align*}
        \Pr\!\left[\begin{aligned}&\forall i\neq j,\;(x_i,v_i)\neq(x_j,v_j)\\&\wedge\;(x_i\|0^{q-\lambda},v_i)\in R_{C_\lambda,H}\end{aligned}:\{(x_i,v_i)\}_{i=1}^l\gets\Guesser(1^l)\right]\geq 2^{-t(\lambda)}\cdot\left(\frac{\delta^2}{16\,Q(\lambda)^2}\right)^l.
    \end{align*}
\end{lemma}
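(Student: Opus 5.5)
The plan is to follow the compression argument of \cite{BHV26}, tracking the gap $\delta$ explicitly instead of treating it as a constant. Fix $H\in\Good$ and let $\calA$ be the $\QCMA(c,c-\delta)$ verifier. It has unitaries $U_0,\dots,U_{Q}$ interleaved with queries to $O[H,E]$. Take the YES instance $E=\{0,1\}^\lambda$. By completeness there is a witness $y^*\in\{0,1\}^{t(\lambda)}$ that is accepted with probability at least $c$. \Guesser samples $y=y^*$ with probability $2^{-t(\lambda)}$, and we condition on this event. This accounts for the factor $2^{-t(\lambda)}$.

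The main step is an inductive hybrid statement. Let $S_{i-1}$ be the set of valid pairs found in the first $i-1$ iterations, with $\abs{S_{i-1}}=i-1\le 2^\lambda-2$. We want to show that iteration $i$ outputs a new valid pair with probability at least $\delta^2/(16Q^2)$.

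Compare two oracles. The first is $O_{S_{i-1}}$, which accepts exactly the pairs already found. The second is the true YES oracle $O[H,\{0,1\}^\lambda]$, which accepts every valid $(x,v)$. These agree everywhere except on valid pairs whose $x$-part is not yet covered. Choose a NO instance $E'$ that removes two strings $x$ not appearing in $S_{i-1}$. This is possible since $\abs{S_{i-1}}\le 2^\lambda-2$, and it is exactly where the choice $\sigma=2$ and the bound $l\le 2^\lambda-1$ are used. Then $O[H,E']$ agrees with $O_{S_{i-1}}$ on all of $S_{i-1}$. By soundness, $y^*$ is accepted with probability at most $c-\delta$ under $O[H,E']$. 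So the final states under $O[H,\{0,1\}^\lambda]$ and $O[H,E']$ must be at distance at least $\delta/2$ in trace distance, and hence at least $\delta/2$ in Euclidean norm.

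Now apply the hybrid method, as in \cref{lem:parallel_hybrid} with $q=1$ per step. This distance is at most $\sum_{j} 2\sqrt{w_j}$, where $w_j$ is the weight that the query register of the $j$-th state (run with the partial oracle) places on pairs where the two oracles differ. Such pairs are valid and not in $S_{i-1}$. Rather than comparing two true oracles, I will run the hybrid between the true YES oracle and $O_{S_{i-1}}$, and also between $O[H,E']$ and $O_{S_{i-1}}$. One of the two must differ from $O_{S_{i-1}}$-runs by at least $\delta/4$. Summing over $j\le Q$ gives $\sum_j\sqrt{w_j}\ge \delta/8$. By Cauchy--Schwarz, $\frac1Q\sum_j w_j\ge \delta^2/(64Q^2)$; sharpening the constants should give $\delta^2/(16Q^2)$. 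Since \Guesser picks $j$ uniformly and measures the query register, this average is exactly the probability of outputting a new valid pair.

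Chaining the iterations multiplies these conditional probabilities, giving $2^{-t}(\delta^2/16Q^2)^l$. Distinctness is automatic, because every new pair lies outside $S_{i-1}$.

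I expect the main obstacle to be the bookkeeping of which oracle is run in each hybrid. In particular, we need a NO instance that agrees with $O_{S_{i-1}}$ on every found pair, which requires both a small $S_{i-1}$ and the removal of exactly $\sigma=2$ strings. We also need to get the constant $16$ rather than a worse one, which may require comparing against the YES oracle directly and absorbing the factor via the trace-versus-norm bound.
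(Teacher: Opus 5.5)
Your route is the right one. It is the hybrid/compression argument of \cite{BHV26}; the paper does not reprove this lemma and simply imports their Lemma~6.4 with the gap written as $\delta$. You have the same ingredients: the $2^{-t(\lambda)}$ cost of guessing $y^*$, and a NO set $E'$ of size $2^\lambda-2$ containing every $x$ already covered by $S_{i-1}$, which is where $l\le 2^\lambda-1$ enters. You also have the case split between the YES oracle and $O[H,E']$, hybrid weights taken along the $O_{S_{i-1}}$ run, and chaining conditional on all earlier rounds having succeeded.

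The one real defect is the constant. As written you prove $\delta^2/(64Q^2)$, not the stated bound, and the cause is a single slip. Under $O[H,\{0,1\}^\lambda]$ the witness $y^*$ is accepted with probability at least $c$, and under $O[H,E']$ with probability at most $c-\delta$. So the two final states are at trace distance at least $\delta$, not $\delta/2$. For pure states, $|p_A-p_B|\le\sqrt{1-|\langle\phi_A|\phi_B\rangle|^2}\le\|\ket{\phi_A}-\ket{\phi_B}\|$, so the Euclidean distance is also at least $\delta$.

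The correct split is then as follows. If the $O_{S_{i-1}}$ run accepts with probability at most $c-\delta/2$, compare it with the YES oracle; otherwise compare it with $O[H,E']$. Either way the Euclidean distance is at least $\delta/2$. This gives $2\sum_j\sqrt{w_j}\ge\delta/2$, hence $\sum_j w_j\ge\delta^2/(16Q)$ by Cauchy--Schwarz. The uniform choice of $j$ then yields exactly $\delta^2/(16Q^2)$ per round.

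The alternative fix you float, comparing only against the YES oracle, would not work. The $O_{S_{i-1}}$ run may accept with probability at least $c$, and then only the comparison with the NO oracle carries information, so the case split is essential.

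A smaller inaccuracy: $O_{S_{i-1}}$ and the YES oracle differ on \emph{all} valid pairs outside $S_{i-1}$, not only on valid pairs whose $x$-part is uncovered. A covered $x$ may have other valid $v$. This does not hurt the argument. You only need the difference set, for both comparisons, to lie inside the valid pairs outside $S_{i-1}$. The lemma also asks only for distinct pairs, not distinct $x$'s. But the description should be stated that way.
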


Below is the upper bound on the success of \Guesser, which will be used for a contradiction of the existence of $\calA$.

\begin{restatable}[\Guesser upper bound; Lemma~6.6 in~\cite{BHV26}]{lemma}{lemPreciseGuesserUpper}\label{lem:precise_guesser_upper}
    For all sufficiently large $\lambda$ and any algorithm \Guesser making no oracle queries, for all $l \leq 2^\lambda$,
    \begin{equation*}
        \Pr_{H\gets\Bias_{q,p^*,\FF_q^s}}\!\!\left[\begin{aligned}&\{(x_i,v_i)\}_{i=1}^l\gets\mathsf{Guesser}(1^l):\\&\forall i\neq j,\;(x_i,v_i)\neq(x_j,v_j)\text{ and }(x_i\|0^{q-\lambda},v_i)\in R_{C_\lambda,H}\end{aligned}\right] \leq 2^{-\lambda^2l/32}.
    \end{equation*}
\end{restatable}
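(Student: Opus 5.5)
The plan is a direct counting argument: for each step, bound the conditional probability that a zero-query guesser outputs a fresh valid pair, then multiply via the chain rule. Since \Guesser makes no oracle queries, its output distribution over sets $\{(x_i,v_i)\}_{i=1}^l$ is independent of $H$. We may therefore first fix the guesser's randomness, so that it outputs a fixed list of $l$ distinct pairs. We then bound, over $H\gets\Bias_{q,p^*,\FF_q^s}$, the probability that all of them lie in $R_{C_\lambda,H}$; averaging over the guesser's coins then gives the lemma.

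\textbf{Step 1: one fixed pair.} Membership $(x\|0^{q-\lambda},v)\in R_{C_\lambda,H}$ requires $v\in C_\lambda$ and $H(j,v_j)=x_j$ for every $j\in[q]$. For $j>\lambda$ the constraint is $H(j,v_j)=0$, which holds with probability $1-p^*$ independently across coordinates. For the first $\lambda$ coordinates, each constraint holds with probability $p^*$ or $1-p^*$. Using $p^*=\Theta(1/\lambda^3)$ and $q\ge\lambda^5$, we get $(1-p^*)^{q-\lambda}\le e^{-\Omega(\lambda^2)}$. So a single fixed pair is valid with probability at most $2^{-c\lambda^2}$.

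\textbf{Step 2: $l$ distinct pairs.} The evaluations $H(j,y)$ are independent across distinct points $(j,y)$. The obstacle is that different codewords $v_i$ may share symbols $v_{i,j}$, so their constraints can overlap. Two cases arise.
\begin{itemize}
\item If two pairs share the same $v$ but have different $x$, they are contradictory on some coordinate $j\le\lambda$, so they cannot both be valid.
\item Distinct codewords of $C_\lambda$, a multiplicity code with large relative distance, agree on at most a small fraction of the $q$ coordinates.
\end{itemize}
I would order the distinct codewords and, for each new one, count the coordinates $j>\lambda$ whose point $(j,v_j)$ is \emph{fresh}, meaning it does not appear among earlier codewords. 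Using the distance bound against all prior codewords, with at most $l\le 2^\lambda$ of them, this is the delicate part. A naive union over earlier codewords loses too much, since $l$ may be exponential while $q$ is only polynomial.

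\textbf{Step 3: handling the overlap, which I expect to be the main obstacle.} I would instead count the total number of distinct points $\{(j,v_{i,j})\}$ over all $i$ and $j>\lambda$. The needed fact is that $l$ distinct codewords touch at least about $\Omega(l\lambda^{2}/\log(1/(1-p^*)))$ distinct points, so that $(1-p^*)^{\#\text{points}}\le 2^{-\lambda^2 l/32}$. The clean route is an entropy/counting bound. The codewords of $C_\lambda$ are determined by their restriction to any set of $k/s$-ish coordinates (interpolation). Hence if the $l$ codewords used few distinct symbols per coordinate, say $m_j$ symbols at coordinate $j$, then $l\le\prod_{j\in T}m_j$ for any interpolating set $T$. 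Averaging over disjoint interpolating sets $T$, this forces $\sum_j m_j$ to be large, on the order of $(q/|T|)\cdot|T|\,l^{|T|^{-1}}$. I would try this first. If it is too weak for large $l$, I would fall back to the approach of~\cite{BHV26}: a union bound over subsets, together with the list-recovery property of multiplicity codes. With $p^*$ adjusted, only the constant in the exponent changes, and one checks that $\lambda^2/32$ still follows for large $\lambda$.
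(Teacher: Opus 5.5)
You have the right framework, and it matches the paper's. Because \Guesser makes no queries, its output is independent of $H$, so you may fix $l$ distinct pairs; pairs sharing a $v$ with different $x$ are automatically inconsistent. The success probability is then at most $(1-p^*)$ raised to the number of distinct points $(j,\sigma)$ touched. The whole content of the lemma is the combinatorial fact that $l$ distinct codewords touch $\Omega(ql)$ distinct points, and this is where your argument has a gap.

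Your interpolation/AM--GM bound gives $\sum_j m_j\ge q\,l^{1/|T|}$ with $|T|\approx k/s=\lambda^2$. Since $l\le 2^\lambda$, we have $l^{1/\lambda^2}\le 2^{1/\lambda}$, so this is essentially the trivial bound $\sum_j m_j\ge q$. That bound only yields $(1-p^*)^q=2^{-\Theta(\lambda^2)}$, independent of $l$. It therefore fails to give $2^{-\lambda^2 l/32}$ once $l$ exceeds a small constant, not merely for large $l$.

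The distance idea from your Step 2 does better but still falls short. Distinct codewords agree on at most $k/s=\lambda^2$ coordinates. A Cauchy--Schwarz count of agreeing pairs then gives $\sum_j m_j\ge ql/(1+l\lambda^2/q)$. This is at least $ql/2$ only for $l\lesssim\lambda^3$, whereas the lemma needs it for all $l\le 2^\lambda$.

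What closes the gap is list recoverability of $C_\lambda$ with input lists of size comparable to $l$. The paper invokes it directly to get $\sum_j|S_j|\ge ql/2$, where $S_j=\{(v_i)_j\}_i$. The mechanism is as follows. If $\sum_j|S_j|$ were much smaller than $ql$, then on a constant fraction of coordinates the lists $S_j$ would be a constant factor smaller than $l$. All $l$ codewords would agree with these small lists there, and list recovery caps the number of such codewords below $l$.

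The paper then finishes with $(1-p^*)^{ql/2}\le e^{-p^*\lambda^5 l/2}\le 2^{-\lambda^2 l/32}$, using $p^*\lambda^5\ge\lambda^2/16$. Your fallback names list recovery but never states or derives this linear-in-$l$ consequence. No union bound over subsets is needed once the output is fixed. As written, the proposal does not prove the lemma.
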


Proof of \cref{lem:precise_guesser_upper} may be found in \cref{sec:missing_proofs}. With the results above, we may show the main result.

\begin{restatable}{theorem}{thmPreciseSeparation}\label{thm:precise_separation}
    There exists a classical oracle $O$ such that,
    \begin{align*}
        \QMA^{O}(2^{-n})\not\subseteq \QCMA^{O}(2^{-n}).
    \end{align*}
\end{restatable}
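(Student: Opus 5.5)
We combine \cref{lem:precise_qma_algo}, \cref{lem:precise_guesser_lower} and \cref{lem:precise_guesser_upper} in the same diagonalization framework as~\cite{BHV26}. The oracle family is $O=\{O[H_\lambda,E_\lambda]\}_\lambda$ from \cref{prob:ciss} with $\sigma=2$, and the language is the unary language $L=\{1^\lambda:\abs{E_\lambda}=2^\lambda\}$. Since the verifier in \cref{lem:precise_qma_algo} receives $\ket{\adv_H}$ as its quantum witness, for every choice of $(H_\lambda,E_\lambda)$ with $H_\lambda\in\Good$ we get $L\in\QMA^{O}(2^{-n})$ with $n=\lambda$. The problem instances have gap $2^{-\lambda}$ exactly: completeness $1-2^{-\lambda}$ and soundness at most $1-2\cdot 2^{-\lambda}$. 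By \cref{cor:precise_yz}, $\Good$ has probability at least $1-2^{-\lambda}$, so it is nonempty for large $\lambda$. For the finitely many small $\lambda$ we hardcode the answers.

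For the lower bound we fix an enumeration of polynomial-time $\QCMA$ verifiers with witness length $t(\lambda)=\poly(\lambda)$ and query count $Q(\lambda)=\poly(\lambda)$. Suppose some verifier $\calA$ decides $\CISS_2$ with gap $\delta=2^{-\lambda}$ on every instance with $H\in\Good$ at input length $\lambda$. \Cref{lem:precise_guesser_lower} then gives, for every $H\in\Good$ and every $l\le 2^\lambda-1$, a no-query \Guesser that outputs $l$ distinct valid pairs with probability at least
\begin{align*}
    2^{-t(\lambda)}\left(\frac{2^{-2\lambda}}{16Q(\lambda)^2}\right)^l = 2^{-t(\lambda)-l(2\lambda+4+2\log Q(\lambda))}.
\end{align*}
Averaging over $H\gets\Bias_{q,p^*,\FF_q^s}$ and using $\Pr[H\in\Good]\ge 1-2^{-\lambda}\ge 1/2$, the success probability of \Guesser under $\Bias$ is at least half this quantity.

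\Cref{lem:precise_guesser_upper} bounds the same probability by $2^{-\lambda^2 l/32}$. We take $l=t(\lambda)$, or $l=1$ if $t(\lambda)$ is tiny; this is at most $2^\lambda-1$ for large $\lambda$. The lower bound is then $2^{-1-t-l(2\lambda+O(\log\lambda))}$, and it exceeds $2^{-\lambda^2 l/32}$ once $\lambda^2/32>2\lambda+O(\log\lambda)+2$. This is the key step. The original argument has a constant gap $\delta$, so the per-sample cost is only $O(\log Q)$. Here $\delta=2^{-\lambda}$ adds a $2\lambda$ term per sample, but this is still $o(\lambda^2)$. Beating this term is exactly why the upper bound must decay like $2^{-\Omega(\lambda^2)l}$, and why the bias parameter $p^*=(\lambda-3)/(8\lambda^4)$ was chosen in \cref{cor:precise_yz}. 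I expect the main obstacle to be checking that the guesser lower bound from~\cite{BHV26} truly only loses $\delta^2/(16Q^2)$ per sample and a single $2^{-t}$ overall, so that the exponents compare as above.

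The contradiction shows that for each fixed verifier and all sufficiently large $\lambda$, some $H\in\Good$ and some $E$ make $\calA$ fail: either it errs on the YES instance $\abs{E}=2^\lambda$ or on a NO instance with $\abs{E}=2^\lambda-2$. Standard diagonalization over the countable list of verifiers, using disjoint and sufficiently spread-out input lengths, then fixes $(H_\lambda,E_\lambda)$ length by length. Every verifier fails on some length, while the $\QMA(2^{-n})$ upper bound holds at every length. This gives $L\in\QMA^O(2^{-n})\setminus\QCMA^O(2^{-n})$.
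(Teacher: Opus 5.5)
Your proposal is correct and follows essentially the same argument as the paper: you instantiate \cref{lem:precise_guesser_lower} with $\delta = 2^{-\lambda}$ and $l = t$, average over $H$ using \cref{cor:precise_yz}, note that the extra $2\lambda$-per-sample loss is dominated by the $\lambda^2/32$ decay of \cref{lem:precise_guesser_upper}, and diagonalize as in~\cite{BHV26}. One minor point: the contrapositive of \cref{lem:precise_guesser_lower} only says $\calA$ fails on some YES instance or some NO instance with $\abs{E} \le 2^\lambda - 2$, not necessarily one with $\abs{E} = 2^\lambda - 2$ exactly, but the diagonalization needs nothing more.
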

\begin{proof}
    Let us show hardness of solving $\CISS_2$ for any $ \QCMA^{O}(2^{-n})$ algorithm $\calA$. Suppose such an algorithm exists and it must succeed at least with probability $P_{\textup{low}}$. By \cref{cor:precise_yz,lem:precise_guesser_lower} with $\delta = 2^{-\lambda}$ and $l\coloneq t \leq a\lambda^a$ for some $a$,
    \begin{align*}
        P_{\textup{low}} \geq (1-2^{-\lambda}) 2^{-t} \cdot \left( \frac{2^{-2\lambda}}{16Q^2} \right)^t.
    \end{align*}
    By taking the logarithm, we have $\log_2 P_{\textup{low}} \geq -O(a^2 \lambda^{a+1})$. On the other hand, \cref{lem:precise_guesser_upper} gives us that the upper bound on the probability $P_{\textup{up}}$ is as follows,
    \begin{align*}
        \log_2 P_{\textup{up}} \leq \frac{-\lambda^2 t}{32} = \frac{-a\lambda^{a+2}}{32}.
    \end{align*}
    As $\tfrac{a\lambda^{a+2}}{32} \geq O(a^2 \lambda^{a+1})$ for large $\lambda$, we get the contradiction that $P_{\textup{low}} > P_{\textup{up}}$.

    The oracle $O = \{O_\lambda\}$ and the language $L$ is the unary language such that $1^\lambda\in L$ if and only if $O_\lambda$ is a YES instance. By \cref{lem:precise_qma_algo}, $L\in \QMA^{O}(2^{-n})$. On the other hand, due to the contradiction from the $\Guesser$ algorithm, there cannot exist a uniform $\QCMA^O(2^{-n})$ verifier which correctly evaluates all YES and NO instances. Therefore, we may diagonalize against all $\QCMA(2^{-n})$ machines, following~\cite[Appendix B]{BHV26}.
\end{proof}

Furthermore, using \cref{thm:precise_separation}, we find the following.

\begin{restatable}{corollary}{corPreciseSeparation}\label{cor:precise_separation}
    There exists a classical oracle $O$ such that,
    \begin{align*}
        \Pitwo^O \cap \PreciseQCMA^O \not\subseteq \QCMA^O(2^{-n}).
    \end{align*}
\end{restatable}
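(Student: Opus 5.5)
We reuse the oracle $O$ and the unary language $L$ from \cref{thm:precise_separation}: $1^\lambda\in L$ if and only if $O_\lambda=O[H_\lambda,E_\lambda]$ is a YES instance of $\CISS_2$. The diagonalization there already gives $L\notin\QCMA^O(2^{-n})$. So it remains to show $L\in\Pitwo^O$ and $L\in\PreciseQCMA^O$, with no change to the oracle.

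\textbf{Step 1: $L\in\Pitwo^O$.} Write $1^\lambda\in L$ if and only if $\abs{E_\lambda}=2^\lambda$, i.e.
\[
\forall x\in\{0,1\}^\lambda\;\exists v\in\Sigma^q:\; O[H,E](x,v)=1.
\]
This holds because $O[H,E](x,v)=1$ exactly when $x\in E$ and $(x,v)\in R_{C_\lambda,H}$.

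\begin{itemize}
\item In a YES instance with $H\in\Good$, \cref{cor:precise_yz} guarantees that a valid $v$ exists for every $x$, since $\BiasYZ$ succeeds with positive probability. So the formula is true.
\item In a NO instance, some $x\notin E$ exists, and for that $x$ the oracle returns $0$ on every $v$. So the formula is false.
\end{itemize}

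Both quantifiers range over $\poly(\lambda)$-bit strings, and the predicate is a single oracle query. This is a $\Pitwo^O$ predicate.

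\textbf{Step 2: $\Pitwo^O\subseteq\PreciseQCMA^O$.} Relativized Toda's theorem~\cite{Tod91} gives $\Pitwo^O\subseteq\mathsf{P}^{\mathsf{PP}^O}\subseteq\NP^{\mathsf{PP}^O}$. Relativizing the characterization $\PreciseQCMA=\NP^{\mathsf{PP}}$~\cite{MN17,GSSSY22} then gives $\PreciseQCMA^O$.

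The relativization step is where care is needed, so I would instead argue the inclusion directly for our concrete predicate. A classical witness is unhelpful here; it is the $\forall x$ that needs exponentially small precision. The verifier is as follows.

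\begin{enumerate}
\item Sample a uniform $x\in\{0,1\}^\lambda$.
\item Decide the $\NP^O$ question $\exists v:\,O(x,v)=1$ by a $\mathsf{PP}$-style counting procedure. Concretely, pick a uniform $v$ and query.
\end{enumerate}

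The acceptance probability is then $\E_x[\abs{\{v: O(x,v)=1\}}]/\abs{\Sigma}^q$. This does not separate the cases cleanly, because the counts per $x$ vary.

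\textbf{Main obstacle.} The difficulty is therefore in Step 2: producing a gap of $2^{-\poly}$ that provably separates "every $x$ has a solution" from "some $x$ has none". I would first try a Gödel-style trick: nest the existential inside the probability with a large exponent, e.g. accept with probability $\prod_x$-style amplification via $1-(1-p)^{K}$. If that becomes too messy, I would fall back on citing the relativizing proofs of Toda's theorem and of $\NP^{\mathsf{PP}}\subseteq\PreciseQCMA$. Both are black-box simulations of counting, so they relativize.

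Finally, combine: $L\in\Pitwo^O\cap\PreciseQCMA^O$, while $L\notin\QCMA^O(2^{-n})$ by \cref{thm:precise_separation}.
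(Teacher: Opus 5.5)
Your Step 1 and the citation chain in Step 2 match the paper's proof exactly: relativized Toda gives $\Pitwo^O\subseteq\mathsf{P}^{\mathsf{PP}^O}\subseteq\mathsf{NP}^{\mathsf{PP}^O}$, then the relativized characterization $\PreciseQCMA=\mathsf{NP}^{\mathsf{PP}}$, combined with \cref{thm:precise_separation}, finishes it. So the proposal is correct once you drop the direct-verifier detour, which cannot work as sketched: per-$x$ fractions $\Pr_v[O(x,v)=1]$ can be $2^{-\poly(\lambda)}$, so $1-(1-p)^K$ amplification needs exponentially many queries, and your remark that the classical witness is unhelpful is backwards, since in the $\mathsf{P}^{\mathsf{PP}}\subseteq\mathsf{NP}^{\mathsf{PP}}\subseteq\PreciseQCMA$ simulation the witness is exactly what guesses the $\mathsf{PP}$-oracle answers, while witness-free precise verification is only $\mathsf{PP}$, which does not contain $\Pitwo$ relative to some oracles.
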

\begin{proof}
    First, notice that $\CISS_2$ is in $\Pitwo$ as in the YES case, for all $x$, there exists some $v$ such that $O[H,E](x,v) = 1$, while in the NO case, there exists an $x$ such that for all $v$, $O[H,E](x,v)=0$.

    Furthermore, it has been shown that $\PreciseQCMA = \mathsf{NP}^{\mathsf{PP}}$~\cite{MN17, GSSSY22}. As Toda's Theorem~\cite{Tod91} relativizes, $\Pitwo^O \subseteq \PreciseQCMA^O$ with respect to all oracles. Therefore, $\CISS_2$ is in $\PreciseQCMA$.
\end{proof}

Note that this holds due to the fact that unlike \QCMA, we do not know how to efficiently amplify the gap in \PreciseQCMA.

\subsubsection{Circuit complexity of sparse Hamiltonian ground states}

Suppose that, for some sparse Hamiltonian, we are given oracle access via $O_H^\rc$. Can we lower-bound the circuit complexity of preparing an approximate ground state of $H$? As we will show below, this is a straightforward consequences of the classical oracle separation between \QMA and \QCMA~\cite{BHNZ26,BHV26}. Nonetheless, to the best of our knowledge, this connection has not been established yet in prior literature, and shows an interesting parallel to the NLTS theorem~\cite{ABN23}.

In our proofs, we will assume that verifiers are using the gate set $\calG = \{\Xg, \CNOT, \Toff, \Had\otimes \Had\}$, which is computationally universal as it can simulate $\Had$~\cite{Shi03}.

\begin{restatable}{theorem}{thmSparseHamHard}\label{thm:sparse_ham_hard}
    There exists a family of $O(1)$-sparse Hamiltonians $\calH = \{ H_i \}_{i\in \NN}$ on $n_i>n_{i-1}$ qubits with integer entries bounded by $O(1)$ such that for all $i$, any state $\ket{\psi}$ generated by a circuit with $2^{o(n_i^{1/3} \log(n_i)^{-1})}$ gates (using $\calG$ and $O_{H_i}^\rc$),
    \begin{align*}
        \bra{\psi} H_{i} \ket{\psi} \geq \lambda_0(H_i) + \tfrac{1}{\poly(n_i)}.
    \end{align*}
\end{restatable}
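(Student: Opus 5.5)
We argue by contradiction, turning a small state-preparation circuit into a \QCMA verifier. We start from the classical oracle $O=\{O_\lambda\}$ and language $L$ of~\cite{BHV26}, for which $L\in\QMA^O$ but $L\notin\QCMA^O$. Their lower bound is strong enough to exclude \QCMA verifiers making $2^{o(\lambda)}$ queries with $2^{o(\lambda)}$-size circuits; we extract the precise exponent from the Guesser bounds (\cref{lem:precise_guesser_lower,lem:precise_guesser_upper}).

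For each $\lambda$, take the $\QMA^O$ verifier $V_\lambda$ for the instance $1^\lambda$: it runs $\BiasYZ$ on the witness and an input $x$, then queries $O_\lambda$. We write it over the gate set $\calG$, amplify it to completeness $1-2^{-\poly}$ and soundness $1/3$, and apply the circuit-to-Hamiltonian construction~\cite{KSV02} with a unary clock, giving $H_\lambda=\Hin+\Hprop+\Hout$. Each oracle gate of $V_\lambda$ becomes a propagation term $H_t$ as in \cref{eq:h_t_def}. Because $O_\lambda$ is a permutation of basis states and the gates in $\calG$ have entries in $\{0,\pm1\}$ up to normalisation (for $\Had\otimes\Had$ we multiply by $2$), every term is $O(1)$-sparse with $O(1)$ integer entries. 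Moreover, each row and entry of $H_\lambda$ can be computed using $O(1)$ queries to $O_\lambda$, so the oracle $O_{H_\lambda}^\rc$ can be simulated from $O_\lambda$ with $O(1)$ queries.

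Let $n_\lambda=\poly(\lambda)$ be the number of qubits of $H_\lambda$, and let $L'$ be the number of gates of $V_\lambda$. Standard Kitaev analysis gives the following spectral gap between instances:
\begin{itemize}
\item in the YES case, $\lambda_0(H_\lambda)\le \epsilon_{\mathrm{yes}}$ with $\epsilon_{\mathrm{yes}}$ exponentially small;
\item in the NO case, $\lambda_0(H_\lambda)\ge c/(L')^3$.
\end{itemize}
We choose $n_i$ as a subsequence of these sizes, restricted to YES instances, and set $H_i=H_\lambda$ for those instances. This needs care: the theorem requires the bound for every $i$, so we must pick $H_i$ from instances where the lower bound applies. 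We take YES instances and use the diagonalization of~\cite{BHV26} in the form ``for all large $\lambda$, no small verifier succeeds.''

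Now suppose a circuit $C$ of size $s=2^{o(n^{1/3}/\log n)}$ over $\calG\cup\{O_H^\rc\}$ prepares $\ket{\psi}$ with energy $\bra{\psi}H\ket{\psi}<\lambda_0(H)+1/p(n)$ for a polynomial $p$. We build a \QCMA verifier whose classical witness is the description of $C$, of length $O(s\log s)$, since each gate is specified by $O(\log s)$ bits. On input $1^\lambda$, the verifier does the following:
\begin{enumerate}
\item it prepares $\ket{\psi}$ several times, simulating each $O_H^\rc$ call by $O(1)$ calls to $O_\lambda$;
\item it estimates $\bra{\psi}H\ket{\psi}$ using \cref{thm:qpe_energy} to precision $1/(4p)$ with $\poly(n)$ copies and queries;
\item it accepts if and only if the estimate is below the threshold $\epsilon_{\mathrm{yes}}+1/(2p)$ (we choose $p$ larger than $2(L')^3/c$).
\end{enumerate}
In the YES case the honest witness $C$ makes the verifier accept. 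In the NO case every state has energy at least $c/(L')^3$, so every witness is rejected with high probability. This is a $\QCMA^O$ verifier with witness length and query count $\poly(n)\cdot s$.

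The main obstacle is the parameter accounting. We have to check that the lower bound of~\cite{BHV26} rules out witness length $t$ and query count $Q$ whenever $t\log Q=o(\lambda^2)$-ish. From \cref{lem:precise_guesser_lower,lem:precise_guesser_upper} with $l=t$, a contradiction needs $t+2t\log(4Q)<\lambda^2t/32$, so it suffices that $\log Q=o(\lambda^2)$ and $t\le 2^\lambda$. We then relate $\lambda$ to $n$: the size of $V_\lambda$ is dominated by $\BiasYZ$, whose advice state has $\poly(\lambda)$ qubits, and the clock and copies multiply this by further polynomial factors. With the parameters $q\approx\lambda^5$ and $s=\lambda$, the relation should come out as $n\approx\lambda^3$ up to log factors. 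The condition $t\le 2^{\lambda}$ with $t\approx s\log s$ then gives $s=2^{o(n^{1/3}/\log n)}$, which explains the exponent in the theorem. Pinning down $n=\tilde\Theta(\lambda^3)$ from the exact register count of the verifier is the step we expect to need the most checking.
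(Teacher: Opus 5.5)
Your architecture matches the paper's: Kitaev's history-state Hamiltonian for the BHV26 $\QMA^O$ verifier, the circuit description as the classical witness, energy estimation via \cref{thm:qpe_energy}, and a contradiction from \cref{lem:precise_guesser_lower,lem:precise_guesser_upper}. The genuine gap is the step you flagged, and it fails: $n$ is not $\tilde\Theta(\lambda^3)$. To query $O_\lambda$, the verifier must hold the whole codeword $v\in\Sigma^q=(\FF_q^{\lambda})^q$ in its query register. That register alone is $q\lambda\lceil\log_2 q\rceil=\Theta(\lambda^6\log\lambda)$ qubits. The $\lambda^3$ you have in mind is the message length $k\log q$, but the oracle is indexed by codewords. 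The history-state Hamiltonian acts on that register, so $n=\tilde\Theta(\lambda^6)$, as the paper states. With the correct relation, your binding constraint $t\le 2^\lambda$ only excludes circuits of size up to roughly $2^{\lambda}$ with $\lambda=\tilde\Theta(n^{1/6})$. This is far short of $2^{o(n^{1/3}/\log n)}$. There is also a minor slip in your threshold. An honest state may have energy up to $\epsilon_{\mathrm{yes}}+1/p$, so you should accept below something like $3/(2p)$, not $\epsilon_{\mathrm{yes}}+1/(2p)$.

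The paper reaches $n^{1/3}$ through a different accounting. It treats $9+2\log Q<\lambda^2/32$ as the only condition and assumes $s\le 2^{\Omega(\lambda^2)}-1$, so that $s=2^{o(\lambda^2)}=2^{o(n^{1/3}/\log n)}$. Do not simply import this, because your witness-length concern is legitimate and the paper's proof does not address it. \cref{lem:precise_guesser_lower} is stated only for $l\le 2^\lambda-1$. A contradiction, however, needs $l>t/(\lambda^2/32-\log_2(16Q^2/\delta^2))\ge 32t/\lambda^2$.

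This restriction is not an artifact. Consider a \QCMA verifier whose witness lists a valid $v_x$ for every $x\in\{0,1\}^\lambda$ (about $2^\lambda\lambda^6\log\lambda$ bits) and which makes $2^\lambda$ queries. It decides $\CISS$ with zero error, even though $\log Q=\lambda\ll\lambda^2$. So any argument that only uses ``a small state-preparation circuit yields a \QCMA verifier'' stops at witnesses of length about $2^\lambda\poly(\lambda)$.

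Reaching the stated exponent would need one of two things: an oracle whose query register has only $\tilde O(\lambda^3)$ qubits, or a lower bound that exploits more than the mere existence of such a verifier. As written, both your argument and the paper's support only circuit sizes around $2^{\tilde\Theta(n^{1/6})}$.
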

\begin{proof}
    Let $L$ be the unary language and $O$ the oracle defined in~\cite[Theorem 6.8]{BHV26} such that $L\in \QMA^O$, but $L\notin \QCMA^O$. We construct $\calH$ by applying the circuit-to-Hamiltonian construction~\cite{KSV02} to the $\QMA^O$ verifiers $V_\lambda$, which use gates in $\calG$, for $\CISS_{2^{\lambda+1}/3}$ (The same algorithm as in \cref{lem:precise_qma_algo}). The authors showed that $V_\lambda$ uses a $\poly(\lambda)$-sized quantum witness, $\poly(\lambda)$ gates, workspace of $\Tilde{O}(\lambda^6)$ qubits, has completeness $c\geq 1-2^{-\lambda}$ and soundness $s\leq \tfrac{1}{3}$.
    
    Let $H\in \calH$ be a fixed instance $H=\Hin + \Hprop + \Hout$ created using the input $1^\lambda$ and oracle $O_\lambda$. We show that $H$ satisfies the conditions of the statement, starting with the fact it is $O(1)$-sparse. The only difference is that it the construction uses a binary clock instead of the standard unary clock in~\cite{KSV02}.\footnote{The issue with a unary clock arises for values that are larger than the number of steps $T$.} First, the terms $\Hin$ and $\Hout$ add $1$ non-zero entry. For $\Hprop$, assume that each unitary in $V_\lambda$ is $c$-sparse. From \cref{eq:h_t_def}, we can see that $H_t$ contributes $2(c+1)$ entries in rows with the clock basis $t$ and $t+1$. Notice that $\Had\otimes \Had$ is $4$-sparse, $\Xg, \CNOT, \Toff$ are $1$-sparse and the classical oracle $O_\lambda$, as it is a permutation matrix, is $1$-sparse. Therefore, $H$ is $O(1)$-sparse. Furthermore, the smallest entry in $\calG \cup \{O_\lambda\}$ is $\tfrac{1}{2}$, meaning that multiplying $H$ by $4$ ensures it is integer-valued.\footnote{Due to the $\tfrac{1}{2}$ in $H_t$, see \cref{eq:h_t_def}.} By the standard analysis on the energy of the Hamiltonian~\cite{KSV02}, we have,
    \begin{align*}
        1^\lambda \in L &\implies \lambda_0(H_{\lambda}) \leq 2^{-\Omega(\lambda)}\\
        1^\lambda \notin L &\implies \lambda_0(H_{\lambda}) \geq \frac{c}{\poly(\lambda)} \eqcolon c^\prime,
    \end{align*}
    where $c$ is a constant. 
    By using a single sparse projector for $\Hin$, we have that all entries of $H_\lambda$ are integers bounded by $O(1)$ in absolute value.
    Furthermore, given oracle access to $O_\lambda$, we may simulate queries to $O_H^{\rc}$ where each query to $O_H^{\rc}$ would require at most $1$ query to the original oracle $O[H,E]$ and $\poly(\lambda)$ gates.

    To prove the statement, we argue that a small circuit for an approximate ground state $\ket{\psi}$ contradicts the \QCMA lower bound of~\cite{BHV26}. Suppose that for all but finitely many YES instances $1^\lambda$, there exists a circuit $C_\lambda$ of size $s(\lambda)$ using $\calG \cup \{O_H^{\rc}\}$ gates that prepares a state $\ket{\psi_\lambda}$ such that,
    \begin{align*}
        \bra{\psi_\lambda} H_\lambda \ket{\psi_\lambda} \leq \tfrac{c'}3.
    \end{align*}
    Consider the following algorithm.

    \begin{algobox}{Energy estimation using circuit description}\alglabel{alg:energy_est}
        \Input{Oracle access to $O_\lambda$ and a string
    $y \in \{0,1\}^{s(\lambda)\cdot \log s(\lambda)}$ encoding $C_\lambda$.}
        \begin{enumerate}
            \item Letting $K=\poly(\lambda)$, run $C_\lambda\ket{0}$ $K$ times in parallel to prepare $K$ copies of $\ket{\psi}$, which requires at most $\poly(\lambda)\cdot s(\lambda)$ queries to $O_\lambda$ and gates.
            \item Run the algorithm from \cref{thm:qpe_energy} with $\epsilon = \tfrac{c'}{6}$ to obtain estimate $E$ of $\bra{\psi} H_n \ket{\psi}$, which requires $\poly(n)$ queries to $O_\lambda$.
            \item Accept if and only if $E <c'$.
        \end{enumerate}
    \end{algobox}
    Now, we show that the existence of \cref{alg:energy_est} contradicts \cref{lem:precise_guesser_lower,lem:precise_guesser_upper}. Assume that both the algorithm description and length are bounded by $s \leq 2^{\Omega(\lambda^2)} - 1$. As the completeness and soundness gap is $\tfrac{1}{3}$, \cref{lem:precise_guesser_lower} implies that,
    \begin{align*}
        \log_2 P_{\textup{low}} \geq -s - s \left(8 + 2\log Q\right) = -s\left(9 + 2\log Q\right).
    \end{align*}
    On the other hand, \cref{lem:precise_guesser_upper} implies that,
    \begin{align*}
        \log_2 P_{\textup{up}} \leq -\frac{\lambda^2 s}{32}.
    \end{align*}
    Therefore, to obtain the contradiction that $P_{\textup{low}} > P_{\textup{up}}$, we need $9 + 2\log_2 Q < \tfrac{\lambda^2}{32}$, which is satisfied when $Q < 2^{\Omega(\lambda^2)}$.
    Lastly, the size of workspace of the $\QMA^O$ verifier is $O(\lambda^6 \log \lambda)$, most of which is the size of the quantum witness. Therefore, $n_i = \tilde{O}(\lambda^6 \log \lambda)$, meaning that every circuit $C$ preparing $\ket{\psi}$ must exceed size $s(n) \leq 2^{o(n_i^{1/3} \log(n_i)^{-1})}$.
\end{proof}

We note that one may obtain a similar result using the classical oracle separation in~\cite{BHNZ26}. 
With our work, we can extend \cref{thm:sparse_ham_hard} to frustration-free Hamiltonians in the bounded adaptivity parallel-query setting. Note that extending \cref{thm:sparse_ham_hard} to the frustration-free setting without query restrictions would imply a classical oracle separation between \QCMA and \QMAo, an open problem.

\begin{restatable}{theorem}{thmBoundedAdaptivityFrustration}\label{thm:bounded_adaptivity_frustration}
    For any $k\in\NN$, there exists a family of $O(1)$-sparse frustration-free Hamiltonians $\calH = \{ H_i\}_{i\in \NN}$ on $n_i>n_{i-1}$ qubits with integer entries bounded by $O(1)$, such that for all $i$, any state $\ket{\psi}$ generated by a circuit with $2^{o(n_i)}$ gates (using $\calG$ and $O_H^\rc$) using $n^k$ rounds of queries to $O_H^\rc$ has
    \begin{align*}
        \bra{\psi} H_i \ket{\psi} \geq \tfrac{1}{\poly(n_i)}.
    \end{align*}
    Additionally, the ground state of $H_i$ may be prepared using $n_i^k + O(1)$ adaptive queries to $O_H^{\rc}$.
\end{restatable}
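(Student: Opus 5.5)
Following the proof of \cref{thm:sparse_ham_hard}, we apply the circuit-to-Hamiltonian construction to the $\QMAo$ verifier for \cref{prob:perm_pointer_chasing}, accessed through the XOR oracle of \cref{eq:classical_oracle_permutation}. We instantiate \cref{thm:bounded_adaptivity} with $R \coloneq n^k+c$ for a suitable constant $c$, so that $b-1=R$. The verifier is then the $R$-round $\QMAo$ algorithm: it prepares $\ket{[N]}$ (or receives the witness $\ket{S_b}$), applies the chasing queries, and checks parity together with the return to $[N]$.

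\textbf{Building the Hamiltonian.} We take $H_n=\Hin+\Hprop+\Hout$ from the binary-clock Kitaev construction of this circuit, with each $H_t$ as in \cref{eq:h_t_def}. Since the construction should encode the $\coRP$ chasing computation, the only witness needed is the uniform start state. We therefore choose $\Hin$ to force the start register into a form whose check has perfect completeness, e.g.\ ancillas $\ket0$ plus a Hadamard layer inside the circuit, and $\Hout$ to penalize the ``odd'' outcome.
\begin{itemize}
\item In a YES instance the history state of the $\coRP$ computation has energy exactly $0$, since every branch ends at an even element. Hence $H_n$ is frustration-free; positivity of each term holds because the terms are projectors.
\item In a NO instance, at least half of the branches are odd. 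The Kitaev bound gives $\lambda_0\ge 1/\poly(n)$, so $\ket{\psi}$ has energy at least $1/\poly$.
\end{itemize}
Sparsity and integrality follow exactly as in \cref{thm:sparse_ham_hard}, and each $O_H^\rc$ query is simulated with $O(1)$ queries to $O_n$ in one round. The upper bound claim holds because the history state can be prepared adaptively by running the $R$ chasing queries under clock control via \cref{lem:control-oracle,rem:permutation}, plus $O(1)$ extra rounds for the clock superposition and the Hamiltonian-entry lookups. This gives $n^k+O(1)$ adaptive queries.

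\textbf{Lower bound.} Suppose a circuit of size $s=2^{o(n)}$ with at most $n^k$ rounds of $O_H^\rc$ queries prepares $\ket\psi$ with energy below $1/\poly$ on YES instances. The classical description of the circuit, of length $s\log s$, serves as a \QCMA witness. The verifier prepares copies of $\ket{\psi}$ in parallel, which uses $n^k$ rounds, each with $\poly\cdot s$ parallel queries. It then estimates the energy with \cref{lem:kernel-tester}, repeated in parallel in a single extra round; since this test has perfect completeness with rejection proportional to energy, it distinguishes the cases. The total is $n^k+1<R-1$ rounds. This contradicts \cref{rem:qcma_classical_lowerbound}/\cref{thm:bounded_adaptivity} once $\beta$ absorbs $\log q=o(n)$, with $l$ chosen to satisfy \cref{eq:beta_alpha_condition}.

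\textbf{Main obstacle.} The witness length $s\log s$ is not polynomial. In the proof of \cref{lem:qcma_inplace_lowerbound}, however, the union bound over witnesses only needs $2^{w}\ll 2^{\delta N}$ with $\delta N\ge 2^{(1-2\alpha)n}/256$. A witness of size $2^{o(n)}$ therefore suffices, and I would re-run that counting argument with $w=2^{o(n)}$ rather than cite diagonalization.

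A second point to check is the relation between $n_i$ and $n$. The qubit count is $ln+O(\log R)$, so $n_i=\Theta(n)$ and $2^{o(n_i)}$ is preserved. The constraint $n^k$ versus $2^{\alpha n}$ means we need $b=n^k+c$ rather than $2^{\alpha n}$. This is only harmless because the lower bound's conditions become easier for smaller $b$, so I would verify the parameter conditions of \cref{clm:bad_transcript,clm:no_instance} with polynomial $b$.
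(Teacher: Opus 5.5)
Your overall route is essentially the paper's. Like the paper, you use the binary-clock Kitaev Hamiltonian of the $R$-round chasing computation for \cref{prob:perm_pointer_chasing} under the XOR oracle, which is frustration-free on YES instances. You take the circuit description as a $2^{o(n)}$-bit classical witness and test energy via \cref{lem:kernel-tester}. You re-run the union bound of \cref{lem:qcma_inplace_lowerbound} with $w<\delta N=N/(64b^2)$ and prepare the history state in $R$ rounds. Your observation that $b$ must be polynomial here, and that the parameter conditions only get easier, is more careful than the paper.

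Two small points. First, the kernel tester's $O(1)$ queries may be adaptive, so budget $O(1)$ extra rounds rather than one; your constant $c$ absorbs this. Second, say explicitly that the family is $\{H_{\pi^*_n}\}$ for the YES instance $\pi^*_n$ produced by the counting argument, since NO-instance Hamiltonians are not frustration-free.

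The step that fails is the qubit count. With the forward-only XOR oracle $\ket{x,y}\mapsto\ket{x,y\oplus\pi(x)}$ you cannot replace $x$ by $\pi(x)$ in place, because erasing $x$ needs $\pi^{-1}$. The chasing circuit must therefore keep all $R+1$ pointer registers, so $n_i=\Theta(Rn)=\Theta(n^{k+1})$, not $ln+O(\log R)$. Reversible pebbling can cut this to $O(\log R)$ registers, but it costs extra adaptive rounds and still leaves $n_i=\omega(n)$.

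Consequently your argument only excludes circuits of size $2^{o(n)}$, i.e.\ $2^{o(n_i^{1/(k+1)})}$. For $k\ge 1$ the $2^{o(n_i)}$ form cannot come out of this family at all. Hardcoding reversible lookup tables for $f_2,\dots,f_b$ gives a query-free circuit with $2^{n+O(\log n)}=2^{o(n_i)}$ gates that prepares the zero-energy history state to within exponentially small error.

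Switching to the in-place oracle to get $ln+O(\log R)$ qubits does not rescue this. Then $O_f$ on a row $(y,t)$ of $\Hprop$ must return $(\pi^{-1}(y),t-1)$, and $\pi^{-1}(j)=f_b(j)$ decides the instance with a single query, so the \QCMA lower bound no longer applies.

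The paper's proof has the same looseness: it ends with ``no circuit of size $2^{o(n)}$'' without relating $n$ to $n_i$. What this approach actually proves is a bound in the oracle parameter $n$, and you should state it that way rather than assert $n_i=\Theta(n)$.
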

\begin{proof}
    Let $L$ be the language described in \cref{thm:bounded_adaptivity} with $R = n_i^k + O(1)$ and $O$ the corresponding oracle.
    Here the $O(1)$ term is the constant query complexity of the kernel testing algorithm in \cref{lem:kernel-tester}.
    Let $V_n$ be the verifiers for all the instances $1^n\in L$. Similarly to \cref{thm:sparse_ham_hard}, the family $\calH$ is constructed by applying the circuit-to-Hamiltonian construction with a binary clock on these $V_n$. Therefore, each $H\in\calH$ is $O(1)$-sparse and has constant integer entries. Furthermore, as they are constructed only for YES instances, there exists a state making $V_n$ accept with perfect completeness and thus $H$ is frustration-free. 
    As before, the Hamiltonian is $O(1)$-sparse with integer entries bounded by $O(1)$.
    
    For the sake of contradiction, assume that for all but finitely many $H\in \calH$, there exists a circuit $C_n$ of size $s(n)$ which uses gates in $\calG \cup O_H^\rc$, has $n^k$ adaptive rounds of queries to $O_H^\rc$ and prepares the state $\ket{\psi}$ whose energy is less than $c'/n$. Consider the following algorithm.

    \begin{algobox}{Energy estimation using circuit description}\alglabel{alg:frustration_free}
        \Input{Oracle access to $O_n$ and a string $y\in \{0,1\}^{s(n)\cdot \log s(n)}$ encoding $C_n$.}
        \begin{enumerate}
            \item Let $K=\poly(n)$. Run $C_n\ket{0}$ $K$ times in parallel to prepare $K$ copies of $\ket{\psi}$, which requires at most $n^k$ rounds of $\poly(n)\cdot s(n)$ queries to $O_n$, and $\poly(n)\cdot s(n)$ gates in total.
            \item Run the algorithm from \cref{lem:kernel-tester} to obtain a constant-factor estimate $E$ of $\bra{\psi}H_n\ket{\psi}$, which uses $O(1)$ rounds of $\poly(n)$ queries to $O_n$.
            \item Accept if and only if $E<2 c'\alpha/n$.
        \end{enumerate}
    \end{algobox}

    Note that \cref{alg:frustration_free} will accept with probability $1-1/\poly(n)$ in the YES case, assuming the state $\ket{\psi}$ has energy at most $c'/n$. This follows from Hoeffding's inequality in the bounds on the rejecting probability in \cref{lem:kernel-tester}.
    Additionally, in the NO-case, the energy is always at least $c'$, and so \cref{alg:frustration_free} will reject with probability at least $1-1/\poly(n)$ in the NO case.

    Next, we show that for $s(n) \leq 2^{o(n_i)}$, \cref{alg:frustration_free} ($O_{n_i} = O_{H_i}^\rc$) leads to a contradiction. Per \cref{lem:qcma_inplace_lowerbound}, we have that $L\not\in \QCMA^O$ with $n^k$ adaptive rounds of queries to $O$. Furthermore, the restriction to witnesses-size $w=\poly(n)$ may be relaxed. This appears in \cref{eq:final_bound}, where we need that,
    \begin{align*}
        2^{2 + 2\alpha n + w} < 2^{\frac{2(1-2\alpha)}{256}}.
    \end{align*}
    Taking the logarithm, this condition becomes,
    \begin{align*}
        w < c\cdot 2^{(1-2\alpha)n}.
    \end{align*}
    Therefore, \cref{lem:qcma_inplace_lowerbound} holds even when the witness has size $w = o(2^{(1-2\alpha)n})$.
    Hence, the existence of an efficient circuit $C_n$ with $\tfrac{w}{\log w}$ gates implies $\QCMA^O$ verifier which runs \cref{alg:frustration_free} which uses $n^k$ rounds of queries to $O$, contradicting \cref{lem:qcma_inplace_lowerbound} with $R=n^k+O(1)$.
    Therefore, there does not exist a circuit of size $2^{o(n)}$. On the other hand, the zero-energy state $\ket{\psi}$ can be prepared by preparing $\ket{\psi_{n^k}}$, where $\ket{\psi_0}\coloneq \ket{[N]}$ and $\ket{\psi_i} \coloneq O_n\ket{\psi_{i-1}}$, followed by turning $\ket{\psi_{n^k}}$ into a history state.
\end{proof}

\newpage

\bibliographystyle{alpha}
\bibliography{main}

\appendix

\section{Postponed Proofs}\label{sec:missing_proofs}

\corPreciseYZ*
\begin{proof}
    As the proof is a slight adjustment of the original statement, we use notation directly from~\cite[Section~5]{BHV26}.
    We apply~\cite[Theorem~5.22]{BHV26} with the same parameters, except the bias is $p^* = (\lambda-3)/(8\lambda^4)$. Let us show that $(C_\lambda, \Dec_{\lambda^\perp})$ is $(p^*, \mu, S)$-good where $\mu(\lambda) \leq e^{-\lambda^2/96}$ and $S= \{0,1\}^{\lambda}\times 0^{q-\lambda}$. Fix some $x\in \{0,1\}^\lambda$ and draw $e\sim \mathcal{D}_{p^*, x\|0^{q-\lambda}}$. Let $X\coloneq \hw(e_{\lambda+1:q})$. Notice that for sufficiently large $\lambda$, $\E[X] = (q-\lambda)p^* \leq 2\lambda^5 \tfrac{\lambda-3}{8\lambda^4}$ and $\E[X] > \tfrac{\lambda^2}{24}$. Letting $T = 2\E[X]$, by the Chernoff bound, we have, $\Pr[X > T] \leq e^{-\E[X]/4} \leq e^{-\lambda^2/96}$. When $X<T$, we have that $\hw(e) \leq \lambda + T < \tfrac{\lambda^2}{2} \leq \tfrac{\textup{dist}(C_\lambda^\bot)}{2}$. Hence the unique decoding in \BiasYZ succeeds when $X\leq T$. Furthermore, notice that $2q(1-p^*)^{\abs{\Sigma}} \leq 4\lambda^5 \cdot e^{-p^* \lambda^{5\lambda}} \leq 2^{-2\lambda - 2}$.

    Therefore, applying~\cite[Theorem~5.22]{BHV26}, we find that for each fixed $x\in \{0,1\}^\lambda$,
    \begin{align*}
        \Pr_{H \gets \Bias_{q,p^*,\FF_q^s}}\!\!\left[\Pr\!\left[(x\|0^{q-\lambda},v)\in R_{C_\lambda,H}: v \gets \BiasYZ(\ket{\adv_H},x)\right] \ge 1-2^{-\lambda} \right]
        \ge 1 - 2^{-2\lambda - 1}.
    \end{align*}
    Therefore, the result follows by union-bounding over all $x\in \{0,1\}^\lambda$.
\end{proof}

\lemPreciseGuesserUpper*
\begin{proof}
    Fix $l$ distinct codewords $v_1,\dots v_l\in C_\lambda$ and for $j\in[q]$, let $S_j = \{\sigma\in \Sigma: \exists i\in [l], (v_i)_j=\sigma\}$. By $(l,2l)$-list-recoverability property of $C_\lambda$, $\sum_j \abs{S_j} \geq ql/2 \geq \lambda^5l/2$. As \Guesser does not make any queries to the oracle, it must correctly guess at least $\lambda^5l/2$ values $H(j,\sigma)$, each with probability $1-p^*$. As $p^*\lambda^5 \geq \lambda^2/16$, we have that the probability of success is at most $(1-p^*)^{\lambda^5 l/2} \leq e^{-p^* \lambda^5 l/2} \leq 2^{-\lambda^2 l/32}$.
\end{proof}

\end{document}